\pgfplotsset{compat=1.18}
\title{Generalized Hofstadter functions G, H and beyond:
  numeration systems and discrepancy }
\author{Pierre Letouzey}
\address{Universit\'e Paris Cit\'e, CNRS, Inria, IRIF, F-75013 Paris, France}
\email{letouzey@irif.fr}
\theoremstyle{plain}
\newtheorem{theorem}{Theorem}
\newtheorem{corollary}[theorem]{Corollary}
\newtheorem{lemma}[theorem]{Lemma}
\newtheorem{proposition}[theorem]{Proposition}
\theoremstyle{definition}
\newtheorem{definition}[theorem]{Definition}
\numberwithin{equation}{section}
\numberwithin{theorem}{section}
\numberwithin{figure}{section}
\newcommand{\seqnum}[1]{\href{https://oeis.org/#1}{\rm \underline{#1}}}
\newcommand{\A}[2]{\ensuremath{A_{#1,#2}}}
\newcommand{\dmax}[2]{\ensuremath{\delta_{#1,#2}^+}}
\newcommand{\dmin}[2]{\ensuremath{\delta_{#1,#2}^-}}
\newcommand{\tF}{\tilde{F}}
\newcommand{\xmark}{\scalebox{0.85}[1]{\ensuremath{\times}}}
\newcommand{\modd}[2]{#1\ \mbox{\rm (mod}\ #2\mbox{\rm )}}
\DeclareMathOperator{\RE}{Re}
\DeclareMathOperator{\rank}{rank}
\DeclareMathOperator{\norm}{norm}
\begin{document}

\begin{abstract}
Hofstadter's G function is recursively defined via $G(0)=0$ and then
$G(n)=n-G(G(n-1))$. Following Hofstadter, a family $(F_k)$ of similar
functions is obtained by varying the number $k$ of nested recursive calls
in this equation. We study here some Fibonacci-like sequences that are
deeply connected with these functions $F_k$. In particular,
the Zeckendorf theorem can be adapted to provide digital
expansions via sums of terms of these sequences.
On these digital expansions, the functions
$F_k$ are acting as right shifts of the digits.
These Fibonacci-like sequences can be expressed in terms of zeros
of the polynomial
$X^k{-}X^{k-1}{-}1$.
Considering now the discrepancy of each
function $F_k$, i.e., the maximal distance between $F_k$ and its linear
equivalent, we retrieve the fact that this discrepancy is finite
exactly when $k \le 4$. Thanks to that, we solve two twenty-year-old
OEIS conjectures stating how close
the functions $F_3$ and $F_4$ are from
the integer parts of their linear equivalents.
Moreover we establish that $F_k$ can coincide exactly with such an integer
part only when $k\le 2$, while $F_k$ is almost additive exactly when $k \le 4$.
Finally, a nice fractal shape a la Rauzy has been encountered when
investigating the discrepancy of $F_3$. Almost all this article has
been formalized and verified in the Coq/Rocq proof assistant.
\end{abstract}

\maketitle

\section{introduction}

We continue here the exploration of a family of functions started
in a previous article~\cite{LetouzeyShuoSteiner2024}. The properties
studied here are largely independent from the content of the first
article, so this second article can be read without prior knowledge
of the first one. In particular the reader will find again below
the necessary initial definitions and basic properties. That being
said, we obviously recommend reading both. For instance, the previous
discussion on related works is still relevant here and will not be
duplicated. For this text, the major related works are due to Meek and Van
Rees~\cite{MeekVanRees84} for the first half and Dilcher~\cite{Dilcher1993}
for the second half. Actually, most of the current work has been done
before becoming aware of these two references. Despite an important
overlap, we decided to leave here all our content, even
the statements with prior proofs.
This document is hence mainly self-contained, while going slightly beyond
these two former references, via some new statements as well as a few
improved proofs. Moreover, almost all this work has been formalized
and verified in the Coq/Rocq proof assistant, so this article is
also an documentation of this formal development.

\subsection{The functions}

As in our previous article~\cite{LetouzeyShuoSteiner2024}, we consider
the following
function\footnote{In all this work, the
$n$-th exponent of a function denotes its $n$-th iterate,
for instance $F_2^2(n)$ is $F_2(F_2(n))$, not the square of
$F_2(n)$.}:
\begin{definition}For each integer $k\ge 1$,
the function $F_k$ is defined recursively by
\[
F_k:\, \mathbb{N} \to \mathbb{N}, \quad n \mapsto
\begin{cases}
0, & \text{if $n = 0$;} \\
n-F_k^k(n{-}1), & \text{otherwise.}
\end{cases}
\]
\end{definition}
This function $F_k$ is well defined since one may prove alongside that
$0 \le F_k(n) \le n$ for all $n \ge 0$. This family of functions is
due to Hofstadter \cite[Chapter~5]{GEB}.
In particular, $F_2$ is Hofstadter's
function~$G$, see OEIS entry~\seqnum{A5206}
\cite{OEIS,GEB,DowneyGriswold84,GaultClint}, known to satisfy
$G(n)=\lfloor (n{+}1)/\varphi \rfloor$ where $\varphi$ is the
golden ratio $(1{+}\sqrt{5})/2$. Similarly, $F_3$ is
Hofstadter's function~$H$, see OEIS~\seqnum{A5374} and also the recent
study by Shallit~\cite{Shallit2025}.
The generalization to
higher degrees of recursive nesting in the definition of~$F_k$
was already suggested by Hofstadter~\cite{GEB} and investigated by
Dilcher~\cite{Dilcher1993}.
To be complete, the OEIS database already includes
$F_4$ as~\seqnum{A5375} and
$F_5$ as~\seqnum{A5376} and
$F_6$ as~\seqnum{A100721}. On the other hand, we
choose to start this sequence with $F_1$ where only one recursive call
is done, leading to a function that can easily be shown to verify
$F_1(n)=\lfloor (n{+}1)/2 \rfloor=\lceil n/2 \rceil$.
Throughout this paper, we never consider the case $k=0$:
although the previous definition could be extended and give a
non-recursive function~$F_0$, this $F_0$
has too little in common with the other~$F_k$ functions to be
of much interest. Small values of the functions $F_1$ to~$F_5$ are
displayed in Figure~\ref{fig:plot}.

\begin{figure}[ht]
\pgfplotsset{width=\linewidth}
\begin{tikzpicture}[scale=1]
  \begin{axis}[
    xmin=0,xmax=30,ymin=0,ymax=25,samples=31,
    xtick = {0,5,...,30},
    ytick = {0,5,...,30},
    legend pos=south east
  ]
 \addplot+[mark=.,color=black,style=dashed,domain=0:30] {x};
 \addlegendentry{id \phantom{=G}}
    \addplot [mark=.,color=blue] coordinates {
(0, 0) (1, 1) (2, 1) (3, 2) (4, 3) (5, 4) (6, 5) (7, 6) (8, 6)
 (9, 7) (10, 7) (11, 8) (12, 9) (13, 9) (14, 10) (15, 11) (16, 12)
 (17, 12) (18, 13) (19, 14) (20, 15) (21, 16) (22, 16) (23, 17)
 (24, 18) (25, 19) (26, 20) (27, 21) (28, 21) (29, 22) (30, 23)};
 \addlegendentry{$F_5 \phantom{=G}$}
    \addplot [mark=.,color=purple] coordinates {
(0, 0) (1, 1) (2, 1) (3, 2) (4, 3) (5, 4) (6, 5) (7, 5) (8, 6)
 (9, 6) (10, 7) (11, 8) (12, 8) (13, 9) (14, 10) (15, 11) (16, 11)
 (17, 12) (18, 13) (19, 14) (20, 15) (21, 15) (22, 16) (23, 17)
 (24, 18) (25, 19) (26, 19) (27, 20) (28, 20) (29, 21) (30, 22)};
 \addlegendentry{$F_4 \phantom{=G}$}
    \addplot [mark=.,color=orange] coordinates {
(0, 0) (1, 1) (2, 1) (3, 2) (4, 3) (5, 4) (6, 4) (7, 5) (8, 5)
 (9, 6) (10, 7) (11, 7) (12, 8) (13, 9) (14, 10) (15, 10) (16, 11)
 (17, 12) (18, 13) (19, 13) (20, 14) (21, 14) (22, 15) (23, 16)
 (24, 17) (25, 17) (26, 18) (27, 18) (28, 19) (29, 20) (30, 20) };
 \addlegendentry{$F_3=H$}
 \addplot+[mark=.,color=red,style=solid,domain=0:30]
 {floor((x+1) * 0.618033988749894903)};
 \addlegendentry{$F_2=G$}
 \addplot+[mark=.,color=black,style=solid,domain=0:30] {ceil(x/2)};
 \addlegendentry{$F_1 \phantom{=G}$}
 \end{axis}
\end{tikzpicture}
\caption{Plotting $F_1,F_2,\ldots,F_5$.}
\label{fig:plot}
\end{figure}

\subsection{Summary of previous results}

The main result of our previous article~\cite{LetouzeyShuoSteiner2024}
is the Theorem~\ref*{article1:t:2}, stating that
the $(F_k)$ family of functions is ordered pointwise:
$F_k(n) \le F_{k+1}(n)$ for all $k\ge 1$ and $n\ge 0$.
Even if we will not compare further the functions of
this $(F_k)$ family, two other aspects of the previous article
will be relevant here.

First, we introduced there a family of
substitutions\footnote{A~substitution (or morphism) on an alphabet $A$ is a map $\tau: A^* \to A^*$ satisfying $\tau(uv) = \tau(u) \tau(v)$ for all $u,v \in A^*$, where $A^*$ denotes the set of finite words with letters in~$A$.
The map $\tau$ is therefore defined by its value on the letters of~$A$, and it is extended in a natural way to infinite words (or sequences) $w = w[0] w[1] \cdots \in A^\infty$ by setting $\tau(x) = \tau(x[0]) \tau(x[1]) \cdots$.
For more on substitutions, see for instance~\cite{Pytheas}.}
 $(\tau_k)$ and infinite morphic words $(x_k)$
and length functions $(L_k)$:

\begin{definition}
\label{defL}
For $k \ge 1$, let $\tau_k$ be the substitution
 on the alphabet $\{1,2,\dots,k\}$ defined by
\[
\begin{aligned}
\tau_k:\, k &\mapsto k1, \\
i & \mapsto i{+}1 \quad \mbox{for}\ 1 \le i < k.
\end{aligned}
\]
Let $x_k = x_k[0] x_k[1] \cdots \in \{1,2,\dots,k\}^\infty$ be the
fixed point of~$\tau_k$ and let $L_k$ be the following function:
\[
L_k:\, \mathbb{N} \to \mathbb{N}, \quad n \mapsto \big|\tau_k(x_k[0{:}n))\big|
\]
where $x_k[0{:}n)$ is the prefix of $x_k$ of size $n$, i.e.,
$x_k[0]\cdots x_k[n{-}1]$.
\end{definition}

These notions were shown to be deeply related with $F_k$,
in particular $L_k$ forms a Galois connection with $F_k$, i.e.,
almost a bijection, more on that in Section~\ref{s:basic} below.
The main result of monotonicity for the $(F_k)$ family was then
converted to a similar statement on $(L_k)$ and then proved by
investigating the letter configurations in the morphic words $(x_k)$.
In the present article, these lengths $L_k$ will receive new
descriptions, in particular as left shifts in some particular numeration
systems in Section~\ref{s:numrep}, while the words $(x_k)$ will
also be related with these numeration systems in Section~\ref{s:words}.

Secondly, we studied
in~\cite[Sect.~\ref*{article1:s:infini}]{LetouzeyShuoSteiner2024}
the infinitary behavior of $F_k$ and showed in particular that
$\lim \tfrac{1}{n}F_k(n)$
exists and is the unique positive zero of the polynomial $X^k{+}X{-}1$.
Noting $\alpha_k$ this zero, it amounts to say that $F_k(n)$ admits
$\alpha_k\,n$ as linear equivalent, a fact already known to
Dilcher~\cite{Dilcher1993}.
Here, we investigate in Section~\ref{s:distF}
the discrepancy of each $F_k$, i.e., the maximal distance
between $F_k$ and its linear equivalent. For that, the polynomial
$X^k{+}X{-}1$ and its reciprocal $X^k{-}X^{k-1}{-}1$ are studied
more carefully in Section~\ref{s:poly}, once again revisiting former
results by Dilcher~\cite{Dilcher1993}.

\subsection{The Coq artifact}

Almost all the proofs presented in this article have been formally certified
using the Coq/Rocq proof assistant~\cite{Coq}.
The files of this Coq
development are freely available~\cite{LetouzeyCoqDev},
the recommended entry point
to read alongside this article being
\begin{center}
\url{https://github.com/letouzey/hofstadter_g/blob/main/Article2.v}
\end{center}

This formal development ensures precise
definitions and statements and rules out any reasoning errors
or rounding errors during
the proofs. It can hence serve as a reference for the interested reader.
The current article tries to be faithful to this formal work
while staying readable by a large audience, at the cost of possible
remaining mistakes during the transcription.

Our Coq development can be machine-checked again by recent
installations of Coq, see the joint \texttt{README.md} file (the
authors used Coq version~8.16).
All the first part up to Section~\ref{s:words}
have been formalized within Coq core logic, without any extra axioms,
as may be checked via the command \texttt{Print Assumptions} on our
theorems. The parts corresponding to
Section~\ref{s:poly} and after involve real numbers and
hence rely on
some Coq standard libraries that declare four logical axioms,
in particular the axiom of excluded middle.

Some differences remain between the proofs presented in this article
and the corresponding Coq development:
\begin{itemize}
\item In Proposition~\ref{p:secondary}, we rely on the fact that
  the Plastic Number is the smallest Pisot number, an important result by
  Siegel~\cite{Siegel1944}. Lacking a full Coq
  proof of this result, we managed instead to
  formalize in Coq the first pages of the initial proof by Siegel
  and then adapted it to suit our need, namely that for $k\ge 6$,
  the polynomial $X^k-X^{k-1}-1$ does have secondary zeros
  of modulus strictly more that 1. More details can be found in our file
  {\tt SecondRoots.v}.

\item The Coq proofs corresponding to
Theorems~\ref{t:supdelta} and~\ref{t:supdelta5} only state that
the discrepancies $\Delta_k$ are infinite when $k\ge 5$, but do not
express yet the rate of divergence, which is
logarithmic for $k=5$ and a power function for $k\ge 6$.
These extra details were not formalized in Coq due to lack of time, but
no major difficulties is expected here should we try to do it later.
For the current Coq proofs, see {\tt Freq.delta\_sup\_qgen}
and {\tt LimCase4.delta\_sup\_q4}.

\item Finally, the Appendix~\ref{s:polycoeff} has not been considered
  yet in Coq.

\end{itemize}

\subsection{Summary of the current article}

Section~\ref{s:basic} recalls some earlier basic results about
functions $F_k$. Then Section~\ref{s:fibo} investigates a family
$\A{k}{p}$ of Fibonacci-like sequences and their relations with $F_k$.
This is extended in Section~\ref{s:numrep} where
numeration systems based on these $\A{k}{p}$ sequences are considered
by generalizing the Zeckendorf theorem, then $F_k$ is shown to be
a right shift on these digital
expansions (with a specific treatment of the lowest digit).
Section~\ref{s:words} translates these results in term of infinite
morphic words.
In Section~\ref{s:poly}, we study the polynomials corresponding to all these
recurrences, with a particular attention to the modulus of their zeros.
Section~\ref{s:Aalg} provides a detailed description of the
Fibonacci-like $\A{k}{p}$ numbers as linear combinations of powers of
these zeros.
Finally, Section~\ref{s:distF} assembles most of the previous pieces
for a study of the discrepancy of each $F_k$ i.e., the maximal distance between
$F_k$ and its linear equivalent. This discrepancy is
finite exactly when $k\le 4$, in particular we prove it to be less
than 1 for $k=3$ and less than 2 for $k=4$, before stating various
consequences and encountering a nice fractal when $k=3$.

\section{Basic properties of function \texorpdfstring{$F_k$}{\it F}}
\label{s:basic}

We recall now some earlier basic properties, in particular
Proposition~\ref*{article1:basicF} of~\cite{LetouzeyShuoSteiner2024},
where $\partial F_k^j(n)$ denotes the difference
$F_k^j(n{+}1)-F_k^j(n)$:

\begin{proposition}
\label{p:basicF}
For all $j,k \ge 1$, the function $F_k^j$ satisfies the following
basic properties:
\begin{enumerate}[label=(\alph*)]
\itemsep.5ex
\item $F_k^j(1) = F_k^j(2) = 1$,
\item $F_k^j(n) \ge 1$ whenever $n\ge 1$,
\item $F_k^j(n) < n$ whenever $n\ge 2$,
\item for all $n>0$, $\partial F_k(n) = 1 - \partial F_k^k(n{-}1)$,
\item hence $\partial F_k^j(n) \in \{0,1\}$ for all $n\ge 0$,
\item the function $F_k^j$ is monotonically increasing and onto (but not one-to-one).
\end{enumerate}
\end{proposition}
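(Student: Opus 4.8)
The plan is to establish the differential facts (d)--(e) by a single strong induction on $n$ and read off everything else, handling the purely order-theoretic items first. Since $F_k(0)=0$ forces $F_k^k(0)=0$, the recursion gives $F_k(1)=1-0=1$ and then $F_k(2)=2-F_k^k(1)=2-1=1$; as $1$ is a fixed point of $F_k$, iterating yields (a). For (b), for $n\ge 2$ the bound $F_k^k(n{-}1)\le n{-}1$ (part of the well-definedness statement $0\le F_k(n)\le n$) gives $F_k(n)=n-F_k^k(n{-}1)\ge 1$, while $F_k(1)=1$; since $F_k$ maps $\{m:m\ge 1\}$ into itself, iterating gives $F_k^j(n)\ge 1$. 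For (c), the inequality $F_k(m)\le m$ makes the iterates a weakly decreasing chain $F_k^j(n)\le F_k(n)$, and for $n\ge 2$ we have $F_k(n)=n-F_k^k(n{-}1)\le n{-}1<n$ because $F_k^k(n{-}1)\ge 1$ by (b). Item (d) is immediate by differencing the recursion: for $n\ge 1$, $F_k(n{+}1)-F_k(n)=1-\big(F_k^k(n)-F_k^k(n{-}1)\big)$, that is, $\partial F_k(n)=1-\partial F_k^k(n{-}1)$.

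The heart of the argument is (e), whose crux is the single-step bound $\partial F_k(n)\in\{0,1\}$. To reduce iterate increments to single-step ones I would isolate a \emph{staircase lemma}: if $\partial F_k(x)\in\{0,1\}$ for every $x\le M$, then $\partial F_k^j(m)\in\{0,1\}$ for every $m\le M$ and every $j\ge 0$. This is proved by induction on $j$. Writing $\partial F_k^{j+1}(m)=F_k\big(F_k^j(m{+}1)\big)-F_k\big(F_k^j(m)\big)$, the inductive hypothesis makes $F_k^j(m{+}1)$ and $F_k^j(m)$ either equal (difference $0$) or consecutive, in which latter case the difference equals $\partial F_k\big(F_k^j(m)\big)$; since $F_k^j(m)\le m\le M$ by $F_k(x)\le x$, the single-step hypothesis applies and the value lies in $\{0,1\}$.

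With the staircase lemma available, the main strong induction on $n$ closes the loop. The base $n=0$ gives $\partial F_k(0)=F_k(1)-F_k(0)=1$. For $n\ge 1$, the strong hypothesis provides $\partial F_k(x)\in\{0,1\}$ for all $x\le n{-}1$, so the staircase lemma with $M=n{-}1$ yields $\partial F_k^k(n{-}1)\in\{0,1\}$, and then (d) gives $\partial F_k(n)=1-\partial F_k^k(n{-}1)\in\{0,1\}$. The general statement (e) is then the staircase lemma with $M$ arbitrary. The main obstacle is precisely the self-referential nature of this step: controlling $\partial F_k(n)$ requires controlling the increment of the $k$-fold iterate $F_k^k(n{-}1)$, which depends back on single-step increments; the point that makes the induction well-founded is the bookkeeping that every argument occurring inside that iterate stays $\le n{-}1$, guaranteed by $F_k(m)\le m$.

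Finally (f) follows from (e). Monotonicity is $\partial F_k^j\ge 0$, and injectivity fails since $F_k^j(1)=F_k^j(2)=1$ by (a). For surjectivity I would first note $F_k$ is unbounded: if $F_k(n)\le B$ for all $n$, then $F_k(n)=n-F_k^k(n{-}1)\ge n-B\to\infty$, a contradiction. Hence $F_k$ is a monotone map with increments in $\{0,1\}$, starting at $F_k(0)=0$ and unbounded, so its image is all of $\mathbb{N}$; since a composition of surjections is a surjection, each $F_k^j=F_k\circ\cdots\circ F_k$ is onto as well.
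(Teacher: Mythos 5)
Your proof is correct. Note that this paper does not actually prove Proposition~\ref{p:basicF}: it only recalls it from the companion article, so there is no in-paper argument to compare against; your write-up serves as a valid self-contained substitute. The structure you chose is the natural one and handles the one genuinely delicate point properly: the apparent circularity in (e), where controlling $\partial F_k(n)$ requires controlling $\partial F_k^k(n{-}1)$, which in turn involves single-step increments. Your ``staircase lemma'' (increments of iterates reduce to single-step increments at smaller arguments, because $F_k(x)\le x$ keeps all intermediate values $\le n{-}1$) combined with the outer strong induction on $n$ breaks that loop cleanly, and items (a)--(d) and (f) are then routine consequences; in particular your discrete intermediate-value argument for surjectivity (monotone, increments in $\{0,1\}$, unbounded, starting at $0$) is complete.
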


We also proved
earlier~\cite[Prop.~\ref*{article1:p:LCF}]{LetouzeyShuoSteiner2024}
that the function $L_k$ introduced in Definition~\ref{defL} can also be
expressed in terms of $F_k$:
\begin{proposition}
\label{p:L_as_F}
For $k\ge 1$ and $n\ge 0$, we have $L_k(n) = n+F_k^{k-1}(n)$.
\end{proposition}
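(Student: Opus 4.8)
The plan is to split the identity into an elementary counting fact about the substitution $\tau_k$ and a more substantial correspondence between the word $x_k$ and the iterates of $F_k$. The first step is purely combinatorial: since every rule $i\mapsto i{+}1$ (for $i<k$) preserves length while only $k\mapsto k1$ lengthens a word, one has $|\tau_k(w)| = |w| + |w|_k$ for any $w\in\{1,\dots,k\}^*$, where $|w|_k$ counts the occurrences of $k$; applied to $w = x_k[0{:}n)$ this yields $L_k(n) = n + |x_k[0{:}n)|_k$. So the claim reduces to $|x_k[0{:}n)|_k = F_k^{k-1}(n)$. Writing both sides as telescoping sums, $|x_k[0{:}n)|_k = \sum_{j<n}[x_k[j]=k]$ and $F_k^{k-1}(n) = \sum_{j<n}\partial F_k^{k-1}(j)$ (using $F_k^{k-1}(0)=0$), it then suffices to match the summands position by position, i.e. to show $x_k[j]=k \iff \partial F_k^{k-1}(j)=1$.

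I would prove the sharper statement that for every $0\le m\le k{-}1$ and every $n$ we have $x_k[n]>m \iff \partial F_k^m(n)=1$. The right-hand family is made tractable by two facts: Proposition~\ref{p:basicF}(e) gives $\partial F_k^m(n)\in\{0,1\}$, and since $F_k^{m+1}=F_k\circ F_k^m$ a vanishing difference propagates upward ($\partial F_k^m(n)=0 \Rightarrow \partial F_k^{m+1}(n)=0$), so $(\partial F_k^m(n))_{m\ge 0}$ is a nonincreasing $0/1$ sequence whose initial run of $1$'s must be identified with $x_k[n]$. On the word side, the fixed-point equation $\tau_k(x_k)=x_k$ organizes $x_k$ into $\tau_k$-images of its own prefixes, and reading off which source letter produces each image letter shifts the ordering of letters by one: concretely $c_{>m}(L_k(n)) = c_{>m-1}(n)$ for $m\ge 1$, where $c_{>m}(n)$ counts the letters of $x_k[0{:}n)$ that exceed $m$. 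Matching this against the companion relation $F_k^m(L_k(n)) = F_k^{m-1}(n)$, which follows from the Galois identity $F_k(L_k(n))=n$ of the previous article, lets an induction on $m$ identify $c_{>m}$ with $F_k^m$ on the image of $L_k$.

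The main obstacle is precisely the passage from agreement on $\mathrm{Im}(L_k)$ to agreement on all of $\mathbb{N}$: consecutive values of $L_k$ differ by $1$ or $2$, so two monotone unit-step functions that coincide on $\mathrm{Im}(L_k)$ can still disagree on a skipped argument. This is why the argument must ultimately control the jumps themselves, that is, establish the equivalence $x_k[n]>m \iff \partial F_k^m(n)=1$ by an induction that runs the self-similar block decomposition of $x_k$ under $\tau_k$ in lockstep with the recursion $\partial F_k(n) = 1-\partial F_k^k(n{-}1)$ of Proposition~\ref{p:basicF}(d). This is the point where the combinatorics of the morphic word and the arithmetic of $F_k$ genuinely interact, and I expect it to carry the weight of the proof. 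Once it is in place, summing over $j<n$ gives $|x_k[0{:}n)|_k = F_k^{k-1}(n)$, and hence $L_k(n) = n + F_k^{k-1}(n)$.
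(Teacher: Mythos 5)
Your two reduction steps are sound: since every letter $i<k$ maps to a single letter and only $k$ maps to the two-letter word $k1$, indeed $|\tau_k(w)|=|w|+|w|_k$, hence $L_k(n)=n+|x_k[0{:}n)|_k$; and the telescoping identities (valid because $F_k^{k-1}(0)=0$ and $\partial F_k^{k-1}(j)\in\{0,1\}$ by Proposition~\ref{p:basicF}) correctly reduce the proposition to the pointwise equivalence $x_k[j]=k \iff \partial F_k^{k-1}(j)=1$. The auxiliary facts you list are also all true: $(\partial F_k^m(n))_m$ is a nonincreasing $0/1$ sequence, $c_{>m}(L_k(n))=c_{>m-1}(n)$ for $1\le m\le k-1$, and $F_k^m(L_k(n))=F_k^{m-1}(n)$ via Proposition~\ref{p:FL}.

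The genuine gap is that this pointwise equivalence --- which, as you say yourself, carries the entire weight of the proof --- is never actually established. It is precisely the content of Proposition~\ref{diffFx} (recalled in Section~\ref{s:words} from the companion article): $x_k[n]=k$ iff $\partial F_k^{k-1}(n)=1$, and your ``sharper statement'' $x_k[n]>m \iff \partial F_k^m(n)=1$ is just its reformulation using the monotonicity of $(\partial F_k^m(n))_m$. You correctly diagnose why your own induction scheme is insufficient: agreement of $c_{>m}$ with $F_k^m$ on $\mathrm{Im}(L_k)$ does not control the arguments skipped by $L_k$, and the ``lockstep'' induction you then invoke to control the jumps is announced, not performed. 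So what you have is a correct and clean reduction of Proposition~\ref{p:L_as_F} to Proposition~\ref{diffFx}, not a proof. Note that the paper itself does not prove this proposition internally either --- it imports it from \cite{LetouzeyShuoSteiner2024} --- so your argument would become complete if you simply cited Proposition~\ref{diffFx}; but you would then need to check that the proof of that proposition in the source article does not itself rest on the identity $L_k(n)=n+F_k^{k-1}(n)$, since in that article the letter-versus-difference correspondence appears downstream of the $L_k$--$F_k$ relation, and otherwise your argument would be circular.
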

Actually, this could have been an alternative definition for $L_k$,
skipping all references of morphic words until Section~\ref{s:words}.
One way or the other, we have the following important relation
between $F_k$ and $L_k$:

\begin{proposition} \label{p:FL}
For $k\ge 1$ and $n\ge 0$, we have $F_k(L_k(n))=n$ and
$L_k(F_k(n))\in\{n,n{+}1\}$.
\end{proposition}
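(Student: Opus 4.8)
The plan is to derive both equalities from the single structural identity $L_k = \mathrm{id} + F_k^{k-1}$ of Proposition~\ref{p:L_as_F}, combined with the defining recursion $F_k(m) = m - F_k^k(m{-}1)$. I would first record that $L_k$ is strictly increasing with steps in $\{1,2\}$: by Proposition~\ref{p:L_as_F} one has $\partial L_k(n) = 1 + \partial F_k^{k-1}(n)$, which lies in $\{1,2\}$ by Proposition~\ref{p:basicF}(e). This bounded monotonicity is exactly what will later upgrade ``$F_k$ inverts $L_k$'' into the two-sided statement we want.

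For the first equality I would prove, by induction on $n$, the companion pair
\[
P(n):\ F_k(L_k(n)) = n, \qquad Q(n):\ F_k(L_k(n){+}1) = n{+}1 .
\]
Both hold at $n=0$ since $L_k(0)=0$, $F_k(0)=0$, and $F_k(1)=1$ by Proposition~\ref{p:basicF}(a). Writing $F_k(L_k(n)) = L_k(n) - F_k^k(L_k(n){-}1)$ and $L_k(n) = n + F_k^{k-1}(n)$, the statement $P(n)$ becomes equivalent to the clean identity $F_k^k(L_k(n){-}1) = F_k^{k-1}(n)$. Here I split on $\partial F_k^{k-1}(n{-}1)\in\{0,1\}$, which simultaneously governs whether $L_k(n){-}1$ equals $L_k(n{-}1)$ or $L_k(n{-}1){+}1$. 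When the increment is $0$, I rewrite $F_k^k(L_k(n{-}1)) = F_k^{k-1}(F_k(L_k(n{-}1)))$ and close with $P(n{-}1)$; when the increment is $1$, I instead invoke $Q(n{-}1)$ to evaluate $F_k(L_k(n{-}1){+}1)=n$, which again delivers the identity. Finally $Q(n)$ is free once $P(n)$ is known, since $F_k(L_k(n){+}1) = L_k(n)+1 - F_k^{k-1}(F_k(L_k(n))) = L_k(n)+1-F_k^{k-1}(n) = n+1$. The genuinely non-obvious point, and what I expect to be the main obstacle, is precisely this: $P$ does not induct by itself, and recognizing that the auxiliary statement $Q$ must be carried alongside is exactly what makes the ``$+1$'' case close.

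For the second equality I would read off from $P$ and $Q$ that $L_k(v)$ is the \emph{largest} integer $m$ with $F_k(m)=v$: indeed $F_k(L_k(v))=v$ while $F_k(L_k(v){+}1)=v{+}1$, and $F_k$ is nondecreasing by Proposition~\ref{p:basicF}(f). Consequently the fibre $\{m : F_k(m)=v\}$ is the interval $\{L_k(v{-}1){+}1,\dots,L_k(v)\}$, whose length equals $\partial L_k(v{-}1) = 1+\partial F_k^{k-1}(v{-}1)\in\{1,2\}$. Now fix $n$ and set $v = F_k(n)$. Since $n$ is a preimage of $v$, monotonicity gives $n \le L_k(v) = L_k(F_k(n))$; and since that fibre has length at most $2$ with right endpoint $L_k(v)$, we also get $L_k(v) \le n+1$. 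Hence $L_k(F_k(n)) \in \{n, n{+}1\}$, which completes the proof.
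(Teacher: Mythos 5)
Your proof is correct, but it takes a genuinely different route from the paper's. The paper's proof is a short direct computation: from $L_k=\mathrm{id}+F_k^{k-1}$ and the defining recursion read at $n{+}1$ (i.e., $F_k^k(n)=n{+}1-F_k(n{+}1)$), it gets the exact identity $L_k(F_k(n))=n{+}1-\partial F_k(n)$ in one line, which settles the second claim immediately; the first claim then follows by picking a preimage $m$ with $F_k(m)=n$ (surjectivity, Proposition~\ref{p:basicF}(f)) and applying the same identity at $m$, splitting on $\partial F_k(m)\in\{0,1\}$. You instead prove the first claim by induction on $n$, strengthening the inductive hypothesis with the companion statement $Q(n):F_k(L_k(n){+}1)=n{+}1$ — which is indeed the essential idea needed to make the ``increment $1$'' case close — and then derive the second claim from the resulting fibre structure: each fibre of $F_k$ is an interval of length $\partial L_k(v{-}1)\in\{1,2\}$ ending at $L_k(v)$. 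What each approach buys: the paper's argument is shorter and produces the quantitative formula $L_k(F_k(n))=n{+}1-\partial F_k(n)$ identifying exactly when the defect occurs, but it leans on surjectivity; your argument never needs surjectivity (your $P(n)$ re-proves it, exhibiting $L_k(n)$ as an explicit preimage) and yields as a by-product the precise fibre description $\{L_k(v{-}1){+}1,\dots,L_k(v)\}$, a fact the paper only extracts afterwards when remarking that $L_k(n)$ is the rightmost antecedent and that fibres have at most two elements. One tiny gap: your fibre argument with $v{-}1$ tacitly assumes $v=F_k(n)\ge 1$; the case $v=0$ forces $n=0$ by Proposition~\ref{p:basicF}(b) and is trivial, but it should be said.
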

\begin{proof}
See \cite[Sect.~\ref*{article1:s:relating_F_L}]{LetouzeyShuoSteiner2024}.
Alternatively, here is a more direct proof, just relying on
Propositions~\ref{p:basicF} and~\ref{p:L_as_F}:
\[
L_k(F_k(n)) = F_k(n) + F_k^k(n) = F_k(n) + (n{+}1-F_k(n{+}1)) =
n{+}1-\partial F_k(n).
\]
By Proposition~\ref{p:basicF}, $\partial F_k(n)\in\{0,1\}$ hence
$L_k(F_k(n))\in\{n,n{+}1\}$.

For the first part of the statement, let us pick $m$ such that
$F_k(m)=n$ ($F_k$ is onto, see Proposition~\ref{p:basicF}).
Proceeding as before, but for $m$ this time:
\[
F_k(L_k(n)) = F_k(L_k(F_k(m))) = F_k(m{+}1-\partial F_k(m))
\]
Now, either $\partial F_k(m) = 0$ and hence $F_k(L_k(n)) = F_k(m{+}1) =
F_k(m) = n$. Or $\partial F_k(m) = 1$ and hence
$F_k(L_k(n)) = F_k(m{+}1{-}1) = F_k(m) = n$.
\end{proof}

\begin{corollary}
For all $k\ge 1$, the functions $F_k$ and $L_k$
form a \emph{Galois connection} between $\mathbb{N}$ and itself
(with $F_k$ as \emph{left adjoint} and $L_k$ as \emph{right adjoint}).
Indeed, for all $n,m \ge 0$ we have $F_k(n) \le m$ iff $n \le
L_k(m)$. Moreover, this Galois connection is said to be a
\emph{Galois insertion} since $F_k\circ L_k = id$.
\end{corollary}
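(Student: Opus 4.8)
The plan is to prove the Galois connection property directly from Proposition~\ref{p:FL}, which gives us exactly the two ingredients we need: $F_k(L_k(m))=m$ and $L_k(F_k(n))\in\{n,n{+}1\}$. The key fact I would also use freely is that both $F_k$ and $L_k$ are monotonically increasing (for $F_k$ this is Proposition~\ref{p:basicF}(f); for $L_k$ this follows since $L_k(n)=n+F_k^{k-1}(n)$ by Proposition~\ref{p:L_as_F}, a sum of two non-decreasing functions). The heart of the matter is the biconditional $F_k(n)\le m \iff n\le L_k(m)$.

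For the forward direction, suppose $F_k(n)\le m$. Applying the monotone function $L_k$ gives $L_k(F_k(n))\le L_k(m)$. Since $L_k(F_k(n))\ge n$ by the right-hand membership in Proposition~\ref{p:FL} (as $n\le n{+}1$ covers both cases of $\{n,n{+}1\}$), we chain $n\le L_k(F_k(n))\le L_k(m)$, which is the desired conclusion. For the converse, suppose $n\le L_k(m)$. Applying the monotone function $F_k$ gives $F_k(n)\le F_k(L_k(m))$, and since $F_k(L_k(m))=m$ exactly by the first part of Proposition~\ref{p:FL}, we conclude $F_k(n)\le m$. This establishes the Galois connection with $F_k$ as left adjoint and $L_k$ as right adjoint.

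The final claim, that $F_k\circ L_k = \mathrm{id}$ making this a Galois insertion, is immediate and in fact already recorded: it is precisely the identity $F_k(L_k(m))=m$ from Proposition~\ref{p:FL}, so nothing further is needed beyond citing it.

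I do not anticipate any genuine obstacle here; the statement is essentially a repackaging of Proposition~\ref{p:FL} together with monotonicity. The only point requiring a moment of care is recognizing that the two halves of Proposition~\ref{p:FL} supply exactly the two inequalities $n\le L_k(F_k(n))$ and $F_k(L_k(m))=m$ that, combined with monotonicity, yield both directions of the adjunction—with the equality $F_k\circ L_k=\mathrm{id}$ being the stronger (insertion) statement that also settles the forward/converse asymmetry cleanly.
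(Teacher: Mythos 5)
Your proposal is correct and follows essentially the same route as the paper's own alternative proof: both derive monotonicity of $L_k$ from $L_k(n)=n+F_k^{k-1}(n)$, use $L_k(F_k(n))\in\{n,n{+}1\}$ to get the forward implication, and use $F_k(L_k(m))=m$ with monotonicity of $F_k$ for the converse. No gaps; the argument is sound.
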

\begin{proof}
See~\cite[Cor.~\ref*{article1:galois}]{LetouzeyShuoSteiner2024}.
Or alternatively, by Proposition~\ref{p:basicF}, the function $F_k^{k-1}$ is
increasing, and hence $L_k(n)=n+F_k^{k-1}(n)$ is increasing as well.
If $F_k(n) \le m$, then $L_k(F_k(n)) \le L_k(m)$. By
Proposition~\ref{p:FL}, $L_k(F_k(n))\in\{n,n{+}1\}$ hence
$n\le L_k(F_k(n)\le L_k(m)$.
Conversely, $n\le L_k(m)$ implies $F_k(n) \le F_k(L_k(m)) = m$.
\end{proof}

In particular, $L_k(n)$ is the rightmost antecedent of $n$ by $F_k$.
Note that there cannot exist more than two antecedents of a given
point by $F_k$. Indeed, if $F_k(a)=F_k(a{+}1)$ then $F_k^k(a) = F_k^k(a{+}1)$
hence $\partial F_k^k(a)=0$ hence $\partial F_k(a{+}1)=1$
by Proposition~\ref{p:basicF}, so $F_k(a{+}2) > F_k(a)$.
Similarly, if $F_k(a)=F_k(a{+}1)$ then $a\neq 0$
and $F_k(a{-}1)<F_k(a)$, otherwise $\partial F_k(a{-}1)=0$ hence
$\partial F_k^k(a{-}1)=0$ and $\partial F_k(a)=1$.

\section{Related Fibonacci-like recurrence}
\label{s:fibo}

A family of Fibonacci-like recursive sequences is of
particular interest here. It it obtained by
generalizing the Fibonacci recurrence in the following way:
instead of adding the two previous terms of the sequence, we add the
previous term and the $k$-th earlier term, for some $k\ge 1$.
We will see below that these sequences could also be expressed as
$L_k^p(1)$, and that $F_k$ shifts down the corresponding
sequence by one step.

\begin{definition}
For any integer $k\ge 1$, let
$(\A{k}{p})_{p\in\mathbb{N}}$ be the following linear recurrence:
\[
\A{k}{p} =
\begin{cases}
 p{+}1, & \text{if $0 \le p < k$;} \\
 \A{k}{p-1} + \A{k}{p-k}, & \text{if $p \ge k$.}
\end{cases}
\]
\end{definition}

\begin{figure}[ht]
\begin{tabular}{c|rrrrrrrrrrr}
p & 0 & 1 & 2 & 3 & 4 & 5 & 6 & 7 & 8 & 9 & 10 \\
\hline
\A{1}{p} & 1 & 2 & 4 & 8 & 16 & 32 & 64 & 128 & 256 & 512 & 1024 \\
\A{2}{p} & 1 & 2 & 3 & 5 & 8 & 13 & 21 & 34 & 55 & 89 & 144 \\
\A{3}{p} & 1 & 2 & 3 & 4 & 6 & 9 & 13 & 19 & 28 & 41 & 60 \\
\A{4}{p} & 1 & 2 & 3 & 4 & 5 & 7 & 10 & 14 & 19 & 26 & 36 \\
\A{4}{p} & 1 & 2 & 3 & 4 & 5 & 6 & 8  & 11 & 15 & 20 & 26
\end{tabular}
\caption{Initial values of $\A{k}{p}$.}
\end{figure}

Hence $\A{1}{p} = 2^p$ for all $p\ge 0$ and $(\A{2}{p})$ is the Fibonacci
sequence with initial values
$1$ $2$ $3$ $5$ $8$ (no initial 0, only one 1).
The sequence $(\A{3}{p})$ is also known as Narayana's Cows, see
OEIS~\seqnum{A930} but with an index shift. See also the recent
study by Shallit~\cite{Shallit2025}.
Moreover, this whole family of recursive sequences has already been
considered many
times~\cite{MeekVanRees84,Dilcher1993,Kimberling95,EriksenAnderson12}.
Several of these earlier papers also use the $\A{k}{p}$ numbers
as basis for alternative digital expansions, as done
in the following Section~\ref{s:numrep}.

Note that these $\A{k}{p}$ numbers should not be confused with the
number of ordered $k$-arrangements of $n$, which is denoted
$A_n^k$ in French texts.
Interestingly, the $\A{k}{p}$ numbers also have a nice combinatorial
interpretation:

\begin{proposition}\label{p:Acombi}
 For $k\ge 1$ and $p\ge 0$, $\A{k}{p}$ is the
  number of subsets of $\{1,\ldots,p\}$ with a distance of at least $k$
 between elements.
\end{proposition}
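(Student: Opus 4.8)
The plan is to prove the identity by strong induction on $p$, showing that the counting function obeys exactly the defining recurrence of $\A{k}{p}$. Write $C_k(p)$ for the number of subsets of $\{1,\ldots,p\}$ in which any two distinct elements differ by at least $k$; the goal is then $C_k(p) = \A{k}{p}$ for all $k\ge 1$ and $p\ge 0$.

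First I would settle the base cases $0 \le p < k$. Here any subset containing two distinct elements $a < b$ would force $b - a \ge k$, whereas $b - a \le p - 1 < k$, a contradiction. Hence the only admissible subsets are the empty set and the $p$ singletons $\{1\},\ldots,\{p\}$, giving $C_k(p) = p+1 = \A{k}{p}$, in agreement with the first branch of the definition.

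For $p \ge k$, the main step is a case analysis according to whether the largest possible element $p$ belongs to the subset $S$. If $p \notin S$, then $S$ is an arbitrary admissible subset of $\{1,\ldots,p-1\}$, contributing $C_k(p-1)$ possibilities. If $p \in S$, then none of $p-k+1,\ldots,p-1$ can lie in $S$, since each would be within distance $k-1$ of $p$; thus $S \setminus \{p\}$ is an admissible subset of $\{1,\ldots,p-k\}$, and conversely every such subset extends uniquely by adjoining $p$ (the gap to $p$ being at least $k$). This contributes $C_k(p-k)$ possibilities, where the convention $C_k(0)=1$ accounts for the lone subset $\{p\}$ in the boundary case $p=k$. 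Since the two cases are disjoint and exhaustive, $C_k(p) = C_k(p-1) + C_k(p-k)$, which matches the second branch of the recurrence, and the induction closes.

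The argument is essentially routine; the only point requiring genuine care is the width of the forbidden window below $p$. Placing $p$ in the subset excludes exactly the $k-1$ integers $p-k+1,\ldots,p-1$, and it is precisely this count that produces the index shift by $k$ (rather than an off-by-one variant) in the recurrence. Double-checking that this window has the correct width, and that the boundary case $p=k$ correctly yields the single subset $\{k\}$ through $C_k(0)=1$, is where the bookkeeping should be verified.
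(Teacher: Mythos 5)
Your proof is correct and takes essentially the same route as the paper: strong induction on $p$, with the base cases $p<k$ handled by noting only the empty set and singletons qualify, and the inductive step partitioning subsets according to whether they contain $p$, yielding the counts $\A{k}{p-1}$ and $\A{k}{p-k}$. Your extra care about the forbidden window $p{-}k{+}1,\ldots,p{-}1$ and the boundary case $p=k$ is sound but matches the paper's (more terse) reasoning.
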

\begin{proof}
Let $k\ge 1$. We proceed by strong induction over $p$.
For $p < k$, the distance constraint ensures that the considered
subsets are either empty or singleton, so we have indeed $p{+}1$ such
subsets. And for $p\ge k$, we partition into the subsets
that contain $p$ and the ones that do not. By induction hypothesis,
the first group
has cardinal $\A{k}{p-k}$ (since the second-largest element
of these subsets is at most $p{-}k$ in this case),
while the second group has cardinal
$\A{k}{p-1}$, hence a total cardinal of $\A{k}{p}$.
\end{proof}

Obviously, all $\A{k}{p}$ are strictly positive, hence for a given
$k$, $(\A{k}{p})$ is strictly increasing over $p$.
Now, as an easy consequence of the previous combinatorial
interpretation, for a given $p\ge 1$
the sequence $(\A{k}{p})_{k>0}$ is decreasing from $\A{1}{p}=2^p$
till $\A{p}{p}=p{+}1$ and stationary afterwards. Indeed, the
rule $\A{k}{p} = p{+}1$ for initial terms actually holds even when
$p=k$, since $\A{p}{p} = \A{p}{p-1} + \A{p}{p-p} = p{+}1$.

Interestingly, we can also extend the recurrence rule and
make it handle all terms except $\A{k}{0}$:
\begin{equation}
\label{Agenrec}
\A{k}{p} = \A{k}{p-1} + \A{k}{p\ominus k} \qquad\quad \text{for all}\ p\ge 1
\end{equation}
where $a\ominus b = \max(0,a-b)$ is the total subtraction from
$\mathbb{N}^2$ to $\mathbb{N}$ that rounds negative results to 0.

As promised, the functions $F_k$ and $L_k$ have a specific
behavior on these $\A{k}{p}$ numbers.
\begin{proposition}
\label{p:FA}
For all $k \ge 1$ and $p,j \ge 0$,
\begin{align*}
F_k^j(\A{k}{p}) &= \A{k}{p\ominus j} \\
L_k^j(\A{k}{p}) &= \A{k}{p+j}.
\end{align*}
Hence $\A{k}{p}$ can also be expressed as $L_k^p(1)$.
\end{proposition}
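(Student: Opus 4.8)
The plan is to reduce both iterated identities to their single-step versions and then to bootstrap these by a simultaneous strong induction on $p$. First I would record that, once the two single steps
\[
(\mathrm{I}')\quad F_k(\A{k}{p}) = \A{k}{p\ominus 1}, \qquad
(\mathrm{II}')\quad L_k(\A{k}{p}) = \A{k}{p+1}
\]
are known for every $p\ge 0$, the full statements follow by a routine induction on $j$: for the first, $F_k^{j+1}(\A{k}{p}) = F_k(\A{k}{p\ominus j}) = \A{k}{(p\ominus j)\ominus 1} = \A{k}{p\ominus(j+1)}$, using the identity $(a\ominus b)\ominus 1 = a\ominus(b{+}1)$ for truncated subtraction; symmetrically $L_k^{j+1}(\A{k}{p}) = L_k(\A{k}{p+j}) = \A{k}{p+j+1}$. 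The closing assertion is then just the instance $L_k^p(\A{k}{0}) = \A{k}{p}$ together with $\A{k}{0} = 1$.

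The two single steps are intertwined, so the core of the argument is a strong induction on $p$ establishing $(\mathrm{I}')$ and $(\mathrm{II}')$ simultaneously. The algebraic link I would exploit is that, by Proposition~\ref{p:L_as_F}, $L_k(\A{k}{p}) = \A{k}{p} + F_k^{k-1}(\A{k}{p})$, whereas applying the extended recurrence~\eqref{Agenrec} at index $p{+}1$ and noting $(p{+}1)\ominus k = p\ominus(k{-}1)$ yields $\A{k}{p+1} = \A{k}{p} + \A{k}{p\ominus(k-1)}$. Hence $(\mathrm{II}')$ at index $p$ is \emph{equivalent} to $F_k^{k-1}(\A{k}{p}) = \A{k}{p\ominus(k-1)}$, i.e.\ to the $j = k{-}1$ instance of the first identity at index $p$.

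Inside the induction step at a fixed $p$ I would proceed in three moves, in this order. First, establish $(\mathrm{I}')$ at $p$: for $p = 0$ it is $F_k(1) = 1$ from Proposition~\ref{p:basicF}, while for $p \ge 1$ the induction hypothesis provides $(\mathrm{II}')$ at $p{-}1$, so $\A{k}{p} = L_k(\A{k}{p-1})$ and then $F_k(\A{k}{p}) = F_k(L_k(\A{k}{p-1})) = \A{k}{p-1}$ by the Galois insertion $F_k\circ L_k = id$ (Proposition~\ref{p:FL}). Second, iterate $(\mathrm{I}')$ to reach $F_k^{k-1}(\A{k}{p}) = \A{k}{p\ominus(k-1)}$: the first application is the move just proved, and each subsequent application invokes $(\mathrm{I}')$ at one of the indices $p\ominus 1, p\ominus 2, \dots$, all strictly below $p$ and thus supplied by the strong hypothesis (the case $p = 0$ being handled directly via $F_k^{k-1}(1) = 1 = \A{k}{0}$). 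Third, feed this value into the algebraic link above to conclude $(\mathrm{II}')$ at $p$.

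The hard part will be precisely this circularity: $(\mathrm{II}')$ at $p$ consumes the $F_k^{k-1}$ value at $p$, which is assembled by iterating $(\mathrm{I}')$ downward from $p$, while $(\mathrm{I}')$ at $p$ itself needs $(\mathrm{II}')$ at $p{-}1$. Laying out the three moves in the order above is what severs the loop, and the only remaining care is to verify that each truncated-subtraction edge case (indices collapsing to $0$) behaves as claimed; everything else is bookkeeping.
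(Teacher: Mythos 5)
Your proof is correct and takes essentially the same route as the paper's: both arguments hinge on recognizing $\A{k}{p} = L_k(\A{k}{p-1})$ (via the extended recurrence~\eqref{Agenrec} together with $L_k = \mathrm{id} + F_k^{k-1}$ from Proposition~\ref{p:L_as_F}) and then collapsing with the Galois insertion $F_k \circ L_k = \mathrm{id}$ of Proposition~\ref{p:FL}. The only difference is bookkeeping: the paper runs a single induction on $p$ carrying the $j$-quantified identity $F_k^j(\A{k}{p}) = \A{k}{p\ominus j}$ as hypothesis and derives the $L_k$ identity afterwards, whereas you run a simultaneous strong induction on the two single-step identities and recover the iterated versions at the end by induction on $j$.
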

\begin{proof}
We prove the first equation by induction over $p$.
The case $p=0$ is obvious since
$F_k^j(1)=1$. Now let $p>0$ and assume the equation is true for
$p-1$ and for all $j\ge 0$. Of course $F_k^0(\A{k}{p}) =
\A{k}{p+0}$. Now, for $j\ge 0$, note that here
$p{\ominus} (j{+}1) = p{-}1\ominus j$ and
$p{\ominus} k = (p{-}1)\ominus(k{-}1)$, hence
\begin{align*}
F_k^{j+1}(\A{k}{p}) &= F_k^j(F_k(\A{k}{p-1}+\A{k}{p\ominus k}))
                      \qquad \mbox{(by Eq.~\eqref{Agenrec})}\\
                    &= F_k^j(F_k(\A{k}{p-1}+F_k^{k-1}(\A{k}{p-1}))
                       \qquad \mbox{(by I.H.)}\\
                    &= F_k^j(F_k(L_k(\A{k}{p-1})) \\
                    &= F_k^j(\A{k}{p-1})
                       \qquad \mbox{(by Prop.~\ref{p:FL})} \\
                    &= \A{k}{p-1\ominus j}
                       \qquad \mbox{(by I.H.)}\\
                    &= \A{k}{p\ominus(j+1)}
\end{align*}
Now, from the definition of $L_k$ and Equation~\eqref{Agenrec}:
\[
L_k(\A{k}{p}) = \A{k}{p}+F_k^{k-1}(\A{k}{p})
              = \A{k}{p}+\A{k}{p\ominus(k-1)}
              = \A{k}{p}+\A{k}{(p+1)\ominus k}
              = \A{k}{p+1}.
\]
This extends to $L_k^j$ via a direct induction on $j$. Finally,
$L_k^p(1) = L_k^p(\A{k}{0}) = \A{k}{p}$.
\end{proof}

This last proposition will be generalized in the next Section,
where Theorem~\ref{t:Fshift} and Proposition~\ref{p:Lshift}
establish the behavior of $F_k^j$ and $L_k^j$ on some sums
of $\A{k}{p}$ numbers.

\section{Fibonacci-like digital expansions}
\label{s:numrep}

For a given $k\ge 1$, the Fibonacci-like numbers $(\A{k}{p})$
seen in section~\ref{s:fibo} can be used as a base for a family of digital
expansions.
In these numeration systems, we will see that $L_k$ and $F_k$ behave as
bitwise operations, namely left and right shifts (with a particular
treatment of the lowest digit in the case of $F_k$). This extends
Proposition~\ref{p:FA}.

Such a digital expansion can be described in a standard way
through sequences of digits $(d_i)_{i\in\mathbb{N}}$ that
are ultimately null, leading to sums $\Sigma_{i=0}^\infty\ d_i\A{k}{i}$.
As usual for such a digital expansion, we may display a number
via its digits, with the least-significant digit on the right.
Nonetheless, we favor here an alternative presentation based
on the positions of the non-null digits, calling this
a $k$-\emph{decomposition}. In the rare situations where some digits
will be 2 or more, we repeat the corresponding positions.
This approach alleviates the technical lemmas to come.

\begin{definition}
For $k\ge 1$, a $k$-decomposition is a finite sequence
$D=p_0 \cdots p_m$ of
increasing natural numbers, with possible repetitions.
Such a sequence may be empty, noted $D=\varnothing$.
The sum of a $k$-decomposition $D=p_0 \cdots p_m$ is
$\Sigma_k(D) = \A{k}{p_0}+\cdots+\A{k}{p_m}$.
For $n\ge 0$,
a $k$-decomposition of $n$ is any $k$-decomposition $D$ such that
 $\Sigma_k(D)=n$.
A $k$-decomposition $D$ will be said \emph{canonical}
(resp., \emph{lax}) when the distance between the elements of $D$ is
at least $k$ (resp., at least $k{-}1$).
\end{definition}

Note that for $k\ge 1$, the $k$-decompositions that are canonical
cannot contain
repeated positions, and can hence be viewed as simple sets of
natural numbers (with sufficient distance between them).
For simplicity sake, we (ab)use set notation for these
$k$-decompositions, when this remains clear enough, for instance
$D\cup\{p\}$ or $D\smallsetminus\{p\}$ for adding position $p$ to a
decomposition $D$ (or removing it).
Nonetheless, the exact description remains here through
finite sequences, allowing the rare but complex
situations where repeated positions are needed,
see in particular the Definition~\ref{def_rshift} below.

As an example, let us try $k=2$, i.e., decompose in sums of Fibonacci numbers:
for instance the number 17 is equal to
$13 + 3 + 1 = \A{2}{5} + \A{2}{2} + \A{2}{0}$
so 17 have a digit representation of $100101$, corresponding to a
(canonical) $2$-decomposition of $0,2,5$.
This decomposition is indeed canonical since the positions are
separated by at least 2. Other decompositions are possible,
for instance $17 = 8 + 5 + 3 + 1 = 11101$, i.e.,
the lax $2$-decomposition $0,2,3,4$.
Considering higher digits allows more decompositions, for example
$17 = 2*8 + 1 = 20001$, i.e., the $2$-decomposition $0,4,4$.
This one is not canonical nor lax, and will hence be avoided as
much as possible.
Now, if we switch to $k=3$, the same example 17 is now
$13 + 4 = \A{3}{6} + \A{3}{3} = 1001000$, hence the
canonical $3$-decomposition $3,6$.
Note that for $k=1$, a canonical $1$-decomposition is actually
a sum of distinct powers of two, i.e., the usual base-2
representation, for instance $17 = 16 + 1 = 10001$.
In this case $k=1$, considering lax decompositions already
implies having digits of 2 or more, i.e., repeated positions.

The least digit has always the same weight
independently of $k$, since
$\A{k}{0} = 1$. Similarly, the second digit always weights $\A{k}{1} = 2$.
All other digits have weights that depend on $k$, for instance
$\A{k}{2}\in\{3,4\}$ and $\A{k}{3}\in\{4,5,8\}$. For a given $k$,
the corresponding numeration system is clearly redundant, but
with unique canonical decomposition.
Indeed, Zeckendorf's theorem can be adapted for this generalized setting,
as already seen for instance by Kimberling~\cite{Kimberling95} or
Eriksen and Anderson~\cite{EriksenAnderson12}:

\begin{theorem}[Zeckendorf]
\label{t:zeck}
For $k\ge 1$ and $n\ge 0$, there exists a unique canonical
$k$-decomposition of $n$. We note it $D_k(n)$.
\end{theorem}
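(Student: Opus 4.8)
The plan is to prove existence and uniqueness separately, both by strong induction on $n$, exactly in the spirit of the classical Zeckendorf argument, but using the sequence $(\A{k}{p})$ in place of the Fibonacci numbers and the distance-$k$ condition in place of the no-two-consecutive condition.

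For \textbf{existence}, I would induct on $n$. The base cases $0 \le n \le k$ are immediate: the empty decomposition gives $n=0$, and for $1\le n \le k$ we have $n = \A{k}{n-1}$ by the initial rule $\A{k}{p}=p{+}1$ for $0\le p<k$, a single position with no distance constraint to check. For the inductive step, given $n \ge 1$ (or $n>k$), let $p$ be the largest index with $\A{k}{p}\le n$, which exists and is well defined since $(\A{k}{p})$ is strictly increasing with $\A{k}{0}=1$. Set $n' = n - \A{k}{p}$ and apply the induction hypothesis to get a canonical decomposition $D'$ of $n'$; the candidate decomposition is $D' \cup \{p\}$. The key estimate is that $n' < \A{k}{p-k+1}$, so that every position in $D'$ is at most $p-k$, guaranteeing the required distance $k$ from the new top position $p$ and preserving canonicity. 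This bound follows from maximality of $p$: since $\A{k}{p+1}>n$, we have $n' = n-\A{k}{p} < \A{k}{p+1}-\A{k}{p} = \A{k}{(p+1)\ominus k}$ using the extended recurrence~\eqref{Agenrec}, and $(p+1)\ominus k = p-k+1$ once $p\ge k$.

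For \textbf{uniqueness}, again by strong induction on $n$, the crucial lemma is that in any canonical decomposition of $n$ the largest position is forced to be exactly the $p$ chosen above. Concretely, if $D = p_0\cdots p_m$ is canonical with top position $p_m$, then on one hand $\A{k}{p_m}\le \Sigma_k(D)=n$, so $p_m \le p$; on the other hand the canonical spacing forces a tight upper bound $\Sigma_k(D) < \A{k}{p_m+1}$, which gives $p \le p_m$ and hence $p_m=p$. This upper bound is the arithmetic heart of the argument: for a canonical decomposition with top position $q$, the sum is maximized by taking positions $q, q-k, q-2k, \ldots$, and a telescoping computation via~\eqref{Agenrec} shows $\A{k}{q}+\A{k}{q-k}+\cdots < \A{k}{q+1}$. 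Once the top position is pinned down, removing it reduces both decompositions to canonical decompositions of the same smaller number $n-\A{k}{p}$, and the induction hypothesis finishes the proof.

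The main obstacle is establishing the sharp upper bound $\Sigma_k(D) < \A{k}{q+1}$ for a canonical decomposition with top position $q$; this is what makes the greedy top position unique and drives uniqueness. I would prove it as a standalone lemma by induction on the number of positions (or on $q$), using the extended recurrence~\eqref{Agenrec} to telescope $\A{k}{q}+\A{k}{q-k} = \A{k}{q}+\A{k}{(q+1)\ominus k}$-style sums up to $\A{k}{q+1}$. A mild technical nuisance is the handling of small indices where $p-k+1$ or $q+1-k$ would be negative and must be read through $\ominus$; the boundary convention $\A{k}{p}=p{+}1$ for $p<k$ together with the observation that $\A{k}{k}=k{+}1$ keeps the initial segment consistent with~\eqref{Agenrec}, so these edge cases cause no real difficulty once phrased carefully.
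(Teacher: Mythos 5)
Your proposal is correct and follows essentially the same route as the paper's proof: existence via the greedy choice of the largest $p$ with $\A{k}{p}\le n$, using $n-\A{k}{p} < \A{k}{p+1-k}$ (from Equation~\eqref{Agenrec}) to preserve canonicity, and uniqueness via the key bound that a canonical decomposition with top position $q$ sums to less than $\A{k}{q+1}$, proved by induction on the cardinality and forcing the top positions to agree. The only cosmetic difference is bookkeeping: you isolate the base cases $n\le k$ so that the inductive step always has $p\ge k$, whereas the paper keeps a single induction and splits on $p<k$ versus $p\ge k$ inside the step.
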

\begin{proof}
The proof of this theorem is quite standard. First we build by strong
induction a canonical $k$-decomposition of $n$. For $n=0$ we take
the empty decomposition, whereas for
$n>0$ we consider first the highest position $p$ such that $\A{k}{p} \le n$,
then build recursively a canonical $k$-decomposition $D$ of the rest
$n-\A{k}{p}$. Now
$D \cup \{p\}$ is a satisfactory decomposition of $n$.
Indeed, the choice criterion of $p$ implies $\A{k}{p} \le n < \A{k}{p+1}$.
If $i<k$, then $\A{k}{p+1} = 1 + \A{k}{p}$ and hence $n$ is exactly $\A{k}{p}$
and the decomposition stops there, with a singleton decomposition
which is obviously correct and canonical. Otherwise $i\ge k$ and
so $\A{k}{p+1} = \A{k}{p} + \A{k}{p+1-k}$ hence $n-\A{k}{p} < \A{k}{p+1-k}$.
So the largest index in the decomposition $D$ cannot be $p{+}1{-}k$ or
more, so $D \cup\{p\}$ is still canonical. Repeating this will hence
build a decomposition that is indeed correct and canonical.

For proving the uniqueness, the key ingredient is the following:
if a non-empty canonical decomposition $D$ of $n$ has $p$
for highest position, then $n < \A{k}{p+1}$. This is provable by induction
on the cardinal of $D$, and then considering the sub-decomposition
$D{\smallsetminus}\{p\}$. With this key fact, one can easily prove that
two canonical
decompositions of the same number must have the same highest positions,
and then recursively the same other elements.
\end{proof}

In general, a number may admit several lax $k$-decompositions
beside the canonical one. Let us call $\norm_k(D)$ the unique
canonical $k$-decomposition having the same sum as $D$. Obviously,
it can be computed by sum and decompose: $\norm_k(D) = D_k(\Sigma_k(D))$.
When $D$ is
lax, a more efficient algorithm is to ``repair'' this almost canonical
decomposition: we repeatedly
find the highest pair of indices in $D$ at distance $k{-}1$, say
$p{-}1$ and $p{-}k$, and replace them by $p$. On $\Sigma_k(D)$, this
amounts to change terms $\A{k}{p-1} + \A{k}{p-k}$ with $\A{k}{p}$.
The choice of the highest possible pair implies that the new decomposition
is at least lax again, while having a strictly lower cardinal that $D$.
Repeating the process will hence stop on
a canonical decomposition of equal sum, in no more steps than the cardinal
of~$D$.

Let us express now the decomposition of $n+1$ in terms of the
decomposition of $n$. For that, we need to discriminate on the least
element of this decomposition.

\begin{definition}
The rank of a decomposition $D$ is the least element of $D$,
or $+\infty$ when $D=\varnothing$.
The $k$-rank of a number $n$ is the rank of $D_k(n)$.
\end{definition}

\begin{proposition}
\label{DkSucc}
For $k\ge 1$ and $n\ge 0$, let us call $r$ the $k$-rank of $n$.
The canonical $k$-decomposition of $n{+}1$ is
\[
D_k(n{+}1) =
\begin{cases}
 D_k(n) \cup \{0\}, & \text{if $r \ge k$;} \\
 \norm_k(D_k(n)\smallsetminus \{r\}\cup\{r{+}1\}), & \text{otherwise.}
\end{cases}
\]
\end{proposition}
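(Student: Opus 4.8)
The plan is to reduce everything to a single arithmetic identity on the $\A{k}{p}$ numbers, namely that for $0 \le r < k$ one has $\A{k}{r+1} = \A{k}{r}+1$, and then to exploit the uniqueness half of Theorem~\ref{t:zeck}. First I would verify this identity: for $r < k-1$ both sides are governed by the initial rule $\A{k}{p}=p{+}1$, giving $\A{k}{r+1}=r{+}2=\A{k}{r}+1$; for the boundary case $r=k-1$ I would use the extended recurrence, observing that $\A{k}{k}=\A{k}{k-1}+\A{k}{0}=\A{k}{k-1}+1$, as already noted in the text. The fact that $\A{k}{0}=1$ is what makes adding $1$ to $n$ correspond to adjusting the lowest digit only.

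For the case $r \ge k$ (which includes $D_k(n)=\varnothing$, where $r=+\infty$), I would simply check that inserting position $0$ keeps the decomposition canonical: the only new gap is between $0$ and the former least element $r$, which is at least $k$ by hypothesis, so every gap of $D_k(n)\cup\{0\}$ remains at least $k$. Its sum is $\Sigma_k(D_k(n))+\A{k}{0}=n+1$, so by the uniqueness part of Theorem~\ref{t:zeck} this canonical decomposition must equal $D_k(n{+}1)$.

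For the case $r<k$, I would apply the key identity at the lowest position. Writing $\Sigma_k(D_k(n))=\Sigma_k(D_k(n)\smallsetminus\{r\})+\A{k}{r}$ and using $\A{k}{r}+1=\A{k}{r+1}$ gives
\[
n+1 = \Sigma_k(D_k(n)\smallsetminus\{r\})+\A{k}{r+1} = \Sigma_k\bigl(D_k(n)\smallsetminus\{r\}\cup\{r{+}1\}\bigr).
\]
The sequence $D_k(n)\smallsetminus\{r\}\cup\{r{+}1\}$ need not be canonical (for instance $r{+}1$ may already occur when $k=1$, producing a repeated position), but since $\norm_k$ is defined to return the canonical decomposition of equal sum, i.e. $\norm_k(D)=D_k(\Sigma_k(D))$, the displayed equality yields $\norm_k(D_k(n)\smallsetminus\{r\}\cup\{r{+}1\})=D_k(n{+}1)$, as claimed.

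The routine parts are the two sum computations and the appeals to uniqueness and to the definition of $\norm_k$; I expect the only genuinely delicate point to be the boundary value $r=k-1$, where the identity $\A{k}{r+1}=\A{k}{r}+1$ straddles the initial segment $\A{k}{p}=p{+}1$ and the first application of the recurrence, together with keeping the empty-decomposition convention ($r=+\infty$) consistent throughout the first case.
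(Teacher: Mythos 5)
Your proof is correct. The paper states this proposition without any proof (it is left as routine, with the formal verification in the Coq development), and your argument --- the key identity $\A{k}{r+1}=\A{k}{r}+1$ for $0\le r<k$ (including the boundary case $r=k-1$ via the recurrence and $\A{k}{0}=1$), the canonicity check for $D_k(n)\cup\{0\}$ combined with the uniqueness part of Theorem~\ref{t:zeck}, and the definitional property $\norm_k(D)=D_k(\Sigma_k(D))$ --- is precisely the routine argument the statement relies on, with the genuinely delicate points (the $r=+\infty$ convention, and the possible repeated position when $k=1$) correctly handled.
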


We now generalize some bitwise operations of base 2.

\begin{definition}
\label{def_rshift}
For a $k$-decomposition $D = p_0 \cdots p_m$ and $q\ge 0$,
the left shift of $D$ by $q$ is adding $q$ to all positions, i.e.,
\[
 D\ll q = (p_0{+}q) \cdots (p_m{+}q).
\]
The upper right shift of $D$ by $q$ is
\[
 D\gg_+ q = (p_0{\ominus}q) \cdots (p_m{\ominus}q).
\]
Finally, the (usual) right shift of $D$ by $q$ is obtained by
considering only the positions $p_\ell \cdots p_m$ in D that are
greater or equal to $q$ (if any), and then:
\[
 D\gg q = (p_\ell{-}q) \cdots (p_m{-}q).
\]
\end{definition}

Recall that $p\ominus q = \max(0,p-q)$, so an upper right shift
may produce repeated position~0, even when starting from a canonical
decomposition. At least $D \gg_+ 1$ is known to be lax when
$D$ is a canonical $k$-decomposition with $k\ge 1$, while
$D \ll q$ and $D \gg q$ are still canonical.
Still for a canonical $k$-decomposition $D$, $D \gg_+ 1$ is
either equal to $D \gg 1$ or differ by an extra initial 0,
depending on whether $D$ contains 0 or not.
In the particular case $k=1$, $D_k(n) \gg 1$ and $D_k(n) \gg_+ 1$
correspond respectively to $\lfloor n/2 \rfloor$ and
$\lceil n/2 \rceil$.

Meek and Van Rees~\cite{MeekVanRees84} showed that the right shift
$\gg 1$ on $k$-decompositions is actually a recursive function
very similar to $F_k$, with just an extra $-1$ in the recursive definition.
\begin{definition}
For $k\ge 1$, let $\tF_k$ be defined recursively by
\[
\tF_k:\, \mathbb{N} \to \mathbb{N}, \quad n \mapsto
\begin{cases}
 0, & \text{if $n = 0$;} \\
 n-1-\tF_k^k(n{-}1), & \text{otherwise.}
\end{cases}
\]
\end{definition}
As noticed by Meek and Van Rees, $\tF_k$ is a translated version of $F_k$.
\begin{proposition}
For all $k\ge 1$ and $n\ge 0$, $\tF_k(n)=F_k(n+1)-1$.
\end{proposition}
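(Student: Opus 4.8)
The plan is to set $g(n)=F_k(n{+}1)-1$ and show directly that $g$ coincides with $\tF_k$, exploiting that both functions are pinned down by the same recursion scheme. First I would record that $g$ is a genuine function $\mathbb{N}\to\mathbb{N}$: by Proposition~\ref{p:basicF}(b) we have $F_k(n{+}1)\ge 1$, so $g(n)\ge 0$, and since $F_k(m)\le m$ always, also $g(n)\le n$. This bound will guarantee that iterates of $g$ never increase their argument, which is exactly what makes the recursive comparison well-founded.

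The key step is an iterate lemma: for all $j,n\ge 0$,
\[
g^j(n)=F_k^j(n{+}1)-1.
\]
I would prove this by induction on $j$. The base case $j=0$ is immediate since both sides equal $n$. For the inductive step I use that $g^j(n){+}1=F_k^j(n{+}1)\ge 1$ (again Proposition~\ref{p:basicF}(b)), so that
\[
g^{j+1}(n)=g\big(g^j(n)\big)=F_k\big(g^j(n){+}1\big)-1=F_k\big(F_k^j(n{+}1)\big)-1=F_k^{j+1}(n{+}1)-1.
\]
The only thing to watch here is that the subtraction by $1$ stays inside $\mathbb{N}$, which is precisely why the positivity $F_k^j(m)\ge 1$ is invoked at each stage; this bookkeeping of the offset through the $k$-fold iteration is, in my view, the main (if modest) obstacle of the argument.

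With the iterate lemma in hand, the rest is a direct verification that $g$ obeys the defining recursion of $\tF_k$. Indeed $g(0)=F_k(1)-1=0$ by Proposition~\ref{p:basicF}(a), and for $n>0$ the definition of $F_k$ at $n{+}1>0$ gives
\[
g(n)=F_k(n{+}1)-1=(n{+}1-F_k^k(n))-1=n-F_k^k(n),
\]
while the iterate lemma at $j=k$ yields $g^k(n{-}1)=F_k^k(n)-1$, so that $n-1-g^k(n{-}1)=n-F_k^k(n)=g(n)$. Hence $g$ satisfies $g(0)=0$ and $g(n)=n-1-g^k(n{-}1)$ for $n>0$, which is exactly the recursion defining $\tF_k$. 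Since this recursion has a unique solution in $\mathbb{N}\to\mathbb{N}$ (the bound $g(n)\le n$ established above makes the defining equation well-founded, just as for $F_k$), I conclude $g=\tF_k$, i.e.\ $\tF_k(n)=F_k(n{+}1)-1$. Alternatively, one avoids any abstract uniqueness principle by a single strong induction on $n$: the bound $\tF_k(m)\le m$ keeps every iterate $\tF_k^i(n{-}1)$ below $n$, so the induction hypothesis upgrades $\tF_k^k(n{-}1)$ to $g^k(n{-}1)$ and the same computation closes the step.
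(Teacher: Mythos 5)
Your proof is correct. Note that the paper itself gives no argument for this proposition: it is stated as an observation credited to Meek and Van Rees, so there is no ``paper proof'' to match against, and your write-up supplies exactly what the paper leaves implicit. Your route is the natural one and is sound: the invariant $g^j(n)=F_k^j(n{+}1)-1$ (proved by induction on $j$, with Proposition~\ref{p:basicF}(b) guaranteeing that the subtraction of $1$ never leaves $\mathbb{N}$) is the key lemma, and once it is in place the verification that $g(n)=F_k(n{+}1)-1$ satisfies $g(0)=0$ and $g(n)=n-1-g^k(n{-}1)$ is a one-line computation from the defining equation of $F_k$ at $n{+}1$. Two small points of care, both of which you handle: first, the uniqueness of the solution to the $\tF_k$ recursion needs the bound $h(n)\le n$ for an \emph{arbitrary} solution $h$, not just for $g$; this is automatic since $h(n)=n-1-h^k(n{-}1)\le n-1$ forces every iterate $h^i(n{-}1)$ to stay below $n$, so the recursion is well-founded and strong induction on $n$ gives uniqueness. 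Second, your alternative closing argument (a single strong induction on $n$, using $\tF_k(m)\le m$ to keep all iterates $\tF_k^i(n{-}1)$ strictly below $n$ so the induction hypothesis applies to each of them in turn) avoids invoking any abstract uniqueness principle at all and is the cleaner of the two; it mirrors how the paper justifies well-definedness of $F_k$ itself.
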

Here comes the result of Meek and Van Rees in our notation.
\begin{theorem}[Meek and Van Rees]
For all $k\ge 1$ and $n\ge 0$, $D_k(\tF_k(n)) = D_k(n) \gg 1$ hence
$\tF_k(n) = \Sigma_k (D_k(n) \gg 1)$.
\end{theorem}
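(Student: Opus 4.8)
The plan is to prove the numerical identity $\tF_k(n) = \Sigma_k(D_k(n)\gg 1)$ by induction on $n$; since $D_k(n)\gg 1$ is canonical, the decomposition form $D_k(\tF_k(n)) = D_k(n)\gg 1$ then follows at once from the uniqueness part of Theorem~\ref{t:zeck}. The base case $n=0$ is immediate, both sides being $0$. For the inductive step I would compare $D_k(n{+}1)$ with $D_k(n)$ through Proposition~\ref{DkSucc}, splitting on the $k$-rank $r$ of $n$. When $r\ge k$, we have $D_k(n{+}1) = D_k(n)\cup\{0\}$ with $0\notin D_k(n)$, and the extra position $0$ is precisely the one discarded by $\gg 1$, so $\Sigma_k(D_k(n{+}1)\gg 1) = \Sigma_k(D_k(n)\gg 1)$; matching the induction hypothesis then requires $\tF_k(n{+}1)=\tF_k(n)$. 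When $r<k$, we have $D_k(n{+}1)=\norm_k(E)$ with $E=(D_k(n)\smallsetminus\{r\})\cup\{r{+}1\}$, and a short computation using $\A{k}{r}-\A{k}{r-1}=1$ (resp.\ $\A{k}{0}=1$ when $r=0$) gives $\Sigma_k(E\gg 1)=\Sigma_k(D_k(n)\gg 1)+1$; matching the hypothesis then requires $\tF_k(n{+}1)=\tF_k(n)+1$.

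Two technical points then need care. First, in the case $r<k$ one must pass from $E$ to its normal form $\norm_k(E)$ \emph{under the shift}: I would establish the auxiliary fact that if a lax decomposition $D$ contains no position $0$, then $\Sigma_k(\norm_k(D)\gg 1)=\Sigma_k(D\gg 1)$. This holds because normalization only merges a pair $\A{k}{p-1}+\A{k}{p-k}$ into $\A{k}{p}$ at strictly increasing positions, never creating a $0$; since $E$ here has least position $\ge 1$ (we removed the rank $r$ and inserted $r{+}1$), every such merge has $p-k\ge 1$, hence $p-1\ge k$, so the recurrence $\A{k}{p-1}=\A{k}{p-2}+\A{k}{p-1-k}$ applies and the merge commutes with the downward shift. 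This yields $\Sigma_k(D_k(n{+}1)\gg 1)=\Sigma_k(E\gg 1)$ and closes the arithmetic of the $r<k$ case.

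The genuine crux is the remaining increment fact: $\tF_k(n{+}1)=\tF_k(n)$ exactly when $r\ge k$, and $\tF_k(n{+}1)=\tF_k(n)+1$ exactly when $r<k$. Since $\tF_k(n{+}1)-\tF_k(n)=F_k(n{+}2)-F_k(n{+}1)=\partial F_k(n{+}1)$, I would argue as follows. From the computation in the proof of Proposition~\ref{p:FL} one has $L_k(F_k(q))=q{+}1-\partial F_k(q)$ for every $q$; combined with $F_k\circ L_k=\mathrm{id}$, this gives the clean equivalence that $\partial F_k(q)=1$ iff $q$ lies in the image of $L_k$. By the left-shift behavior of $L_k$ on canonical decompositions (Proposition~\ref{p:FA} for single terms, and its extension to sums, Proposition~\ref{p:Lshift}), the image of $L_k$ is exactly the set of numbers whose canonical decomposition omits position $0$, i.e.\ those of $k$-rank $\ge 1$. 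Hence $\partial F_k(n{+}1)=1$ iff $\rank(n{+}1)\ge 1$, and Proposition~\ref{DkSucc} (together with the fact, already used above, that normalization never recreates a $0$) shows $\rank(n{+}1)\ge 1$ iff $r<k$. This is precisely the required dichotomy.

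I expect the main obstacle to be the interplay between normalization and the right shift in the $r<k$ case: one must verify that carrying in this numeration system commutes with shifting down, which is exactly the $0$-free condition isolated above, and that this condition is guaranteed by the shape of $E$. The increment dichotomy, by contrast, reduces cleanly to the image characterization of $L_k$ once the identity $L_k(F_k(q))=q{+}1-\partial F_k(q)$ is in hand.
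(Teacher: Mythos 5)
Your proof is correct, but note that the paper itself gives no proof of this statement: it is quoted from Meek and Van Rees, and the paper instead proves the companion Theorem~\ref{t:Fshift} (that $F_k$ is the upper shift $\gg_+ 1$) directly, by showing that $n\mapsto\Sigma_k(D_k(n)\gg_+ 1)$ obeys the defining recursion of $F_k$, which requires the iterate formula for $\gg_+ q$ and the normalization lemma (Proposition~\ref{shuntnorm}). Your route is genuinely different and works increment by increment: the crux is the equivalence $\partial F_k(q)=1$ iff $q\in\mathrm{Im}(L_k)$, which you extract cleanly from the Galois connection of Proposition~\ref{p:FL}, combined with the identification of $\mathrm{Im}(L_k)$ with the numbers of $k$-rank at least $1$; and your zero-free commutation lemma is the right analogue of Proposition~\ref{shuntnorm} for the plain shift $\gg$ (the zero-free hypothesis is genuinely needed there: a merge involving position $0$ does not commute with $\gg 1$, precisely because $\gg 1$ discards that position). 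In effect you re-derive the $q=1$ case of Theorem~\ref{t:flatrank} from first principles, never touching nested iterates $F_k^q$, which is a pleasant simplification. The one structural caveat: your image characterization of $L_k$ invokes Proposition~\ref{p:Lshift}, which the paper proves \emph{via} Theorem~\ref{t:Fshift}; since the paper's proof of that theorem is independent of the Meek--Van Rees statement there is no circularity, but your argument thereby sits logically downstream of Theorem~\ref{t:Fshift} rather than giving an independent proof at the point where the theorem is stated --- to make it freestanding you would need to prove $D_k(L_k(m))=D_k(m)\ll 1$ (or at least both inclusions of the image characterization of $L_k$) without that detour.
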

Combining these two last results, it is possible to prove that $F_k$ is
actually the upper right shift $\gg_+ 1$. Instead, we propose here
a direct proof of this fact, without translation to $\tF_k$.
The downside of this direct proof is that lax decompositions have to
be considered, hence the following technical lemma.

\begin{proposition}
\label{shuntnorm}
Let $k\ge 1$. For a lax $k$-decomposition $D$ and a number $q < k$, we have
\[
\Sigma_k(\norm_k(D) \gg_+ q) = \Sigma_k(D \gg_+ q)
\]
\end{proposition}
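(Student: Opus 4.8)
The plan is to exploit the explicit ``repair'' description of $\norm_k$ recalled just above the statement: starting from the lax decomposition $D$, one reaches $\norm_k(D)$ after finitely many \emph{merge steps}, each of which locates a consecutive pair of positions $p{-}k$ and $p{-}1$ at distance exactly $k{-}1$ and replaces them by the single position $p$ (necessarily $p \ge k$, so that the recurrence $\A{k}{p} = \A{k}{p-1} + \A{k}{p-k}$ applies). It therefore suffices to show that both $\Sigma_k(\cdot)$ and $\Sigma_k(\cdot \gg_+ q)$ are invariant under one merge step; the claim then follows by induction on the number of steps, the endpoint being the unique canonical decomposition of the same sum (Theorem~\ref{t:zeck}). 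Invariance of $\Sigma_k$ is exactly the recurrence and is precisely what justifies the repair algorithm, so the real content is invariance of $\Sigma_k(\cdot \gg_+ q)$.

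For the latter I would view a decomposition as a multiset of positions and write $\Sigma_k(D \gg_+ q) = \sum_{r \in D} \A{k}{r \ominus q}$, so that the positions untouched by a merge contribute identically before and after. The whole statement thus reduces to the single \emph{shifted-recurrence} identity
\[
\A{k}{(p-k)\ominus q} + \A{k}{(p-1)\ominus q} = \A{k}{p \ominus q}, \qquad p \ge k,\ q < k.
\]
Because $q < k \le p$, the two upper positions satisfy $p \ominus q = p-q$ and $(p-1)\ominus q = p-1-q$ with no truncation, so only the lowest term $(p-k)\ominus q$ can hit the floor at $0$. I would then split on the sign of $p-k-q$.

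When $p - k - q \ge 0$ there is no truncation anywhere; setting $s = p - q \ge k$, the identity becomes $\A{k}{s-k} + \A{k}{s-1} = \A{k}{s}$, i.e.\ the defining recurrence. When $p - k - q < 0$ we have $(p-k)\ominus q = 0$ and $k \le p < k+q$, hence both $p-q$ and $p-1-q$ fall in the ``initial'' range $[0,k)$ where $\A{k}{\cdot}$ equals its argument plus one; the identity then reduces to $1 + (p-q) = (p-q)+1$. Both cases check out, closing the single-step argument.

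The only delicate point is this truncation-at-zero case: it is exactly where the hypothesis $q < k$ is needed (it is what forces $p \ominus q = p-q$ and keeps the two upper arguments in range), and where a careless treatment would break down. Everything else — that normalization is a finite chain of merges, and that the untouched positions cancel in the multiset sum — is bookkeeping that the induction carries through without difficulty.
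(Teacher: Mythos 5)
Your proof is correct and follows essentially the same route as the paper's: reduce to invariance of $\Sigma_k(\cdot \gg_+ q)$ under a single merge step, then verify the shifted recurrence $\A{k}{(p-1)\ominus q}+\A{k}{(p-k)\ominus q} = \A{k}{p\ominus q}$ using $p\ge k > q$. The only cosmetic difference is that the paper dispatches the truncation case via the extended recurrence~\eqref{Agenrec} (valid for all $p\ge 1$) together with the identities $p\ominus q\ominus k = p\ominus k\ominus q$ and $p\ominus q - 1 = p-1\ominus q$, whereas you handle the case $p-k-q<0$ by an explicit (and correct) check against the initial values $\A{k}{m}=m{+}1$.
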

\begin{proof}
The normalization repeatedly replaces pairs of positions such as
$p{-}1,p{-}k$ into $p$, and exploits the
equation $\A{k}{p-1}+\A{k}{p-k} = \A{k}{p}$ for keeping the
overall sum unchanged. Here, $p{-}k$ is a legal position, hence
$p\ge k > q$ and so $p\ominus q = p-q \ge 1$.
Now, if we shift by $\gg_+ q$ before summing,
the corresponding equation
$\A{k}{p-1\ominus q}+\A{k}{p-k\ominus q} = \A{k}{p\ominus q}$
still hold. Indeed,
Equation \eqref{Agenrec} for $p\ominus q$ gives
$\A{k}{p\ominus q} = \A{k}{p\ominus q-1}+\A{k}{p\ominus q\ominus k}$
while here
$p\ominus q\ominus k = p\ominus k\ominus q$ and
$p\ominus q-1 = p-1\ominus q$.
\end{proof}

\begin{theorem}
\label{t:Fshift}
For all $k\ge 1$ and $n\ge 0$,
\[
F_k(n) = \Sigma_k(D_k(n)\gg_+ 1).
\]
More generally, for all $1\le q \le k$, $F_k^q(n) = \Sigma_k(D_k(n)\gg_+ q)$.
\end{theorem}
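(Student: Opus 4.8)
The plan is to prove the more general statement for all $1\le q\le k$ at once, by strong induction on $n$, first settling the case $q=1$ and then propagating it to the higher iterates. The base cases are immediate: for $n=0$ both sides vanish since $D_k(0)=\varnothing$, and for $n=1$ both sides equal $1$, using $D_k(1)=\{0\}$, $\A{k}{0\ominus q}=\A{k}{0}=1$ and Proposition~\ref{p:basicF}. So I would fix $n\ge 2$ and assume the full statement for every $m<n$ and every $1\le q\le k$.

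For the core case $q=1$, I would start from the defining recurrence $F_k(n)=n-F_k^k(n-1)$. As $n-1<n$, the induction hypothesis applies with $q=k$ and yields $F_k^k(n-1)=\Sigma_k(D_k(n-1)\gg_+ k)$. Since $n=\Sigma_k(D_k(n))$, it then suffices to prove the purely combinatorial identity
\[
\Sigma_k(D_k(n)) - \Sigma_k(D_k(n)\gg_+ 1) = \Sigma_k(D_k(n-1)\gg_+ k),
\]
for then $F_k(n)=\Sigma_k(D_k(n))-\Sigma_k(D_k(n-1)\gg_+ k)=\Sigma_k(D_k(n)\gg_+ 1)$. The left-hand side rewrites term by term: for each position $p\ge 1$ one has $\A{k}{p}-\A{k}{p-1}=\A{k}{p\ominus k}$ by Equation~\eqref{Agenrec}, while a position $p=0$, if present, contributes $0$. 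Hence the left-hand side equals $\Sigma_k(D_k(n)\gg_+ k)$, diminished by $\A{k}{0}=1$ exactly when $0\in D_k(n)$.

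It remains to compare $\Sigma_k(D_k(n)\gg_+ k)$ with $\Sigma_k(D_k(n-1)\gg_+ k)$, which I would carry out through Proposition~\ref{DkSucc}, on the two cases given by the $k$-rank $r$ of $n-1$ (recalling that $0\in D_k(n)$ precisely when $r\ge k$). If $r\ge k$, then $D_k(n)=D_k(n-1)\cup\{0\}$; the extra position $0$ survives the shift and contributes $\A{k}{0}=1$, so the two sums differ by exactly the diminishing term above, and the identity holds. If $r<k$, then $D_k(n)=\norm_k(E)$ with $E=D_k(n-1)\smallsetminus\{r\}\cup\{r+1\}$ a lax decomposition of rank $r+1\ge 1$; Proposition~\ref{shuntnorm} lets me drop the normalization under the shift (the shift is by $k$, but no anomaly occurs since $E$ has no position $0$), and since both $r$ and $r+1$ are $\le k$ they collapse to position $0$ under $\gg_+ k$, so $E$ and $D_k(n-1)$ yield the same shifted sum; here $0\notin D_k(n)$, so there is no diminishing term and the identity again holds. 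This proves the case $q=1$. To then reach $2\le q\le k$, I would write $F_k^q(n)=F_k^{q-1}(F_k(n))$ and use $F_k(n)<n$ (Proposition~\ref{p:basicF}) to invoke the induction hypothesis at $F_k(n)$, getting $F_k^{q-1}(F_k(n))=\Sigma_k\big(D_k(F_k(n))\gg_+(q-1)\big)$. Since $F_k(n)=\Sigma_k(D_k(n)\gg_+ 1)$ we have $D_k(F_k(n))=\norm_k(D_k(n)\gg_+ 1)$, and as $D_k(n)\gg_+ 1$ is lax with $q-1<k$, Proposition~\ref{shuntnorm} removes the normalization; the two shifts then compose, $(p\ominus 1)\ominus(q-1)=p\ominus q$, giving $\Sigma_k(D_k(n)\gg_+ q)$.

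The main obstacle is the combinatorial identity of the second paragraph, that is, matching the single recursive subtraction hidden in $F_k$ with the exact change of the canonical decomposition across $n-1\mapsto n$. Its delicate point is the emergence of lax, non-canonical decompositions in the rank-$<k$ case, which is exactly what Proposition~\ref{shuntnorm} is meant to absorb; one must merely observe that the borderline shift by $q=k$ is still harmless because the lax decomposition $E$ involved has rank at least $1$, hence no position $0$ that could collapse anomalously. The remaining ingredients, namely the base cases, the case $k=1$ (ordinary base-$2$ expansions), and the bookkeeping that keeps the full hypothesis available for all $q\le k$ at each step, are routine.
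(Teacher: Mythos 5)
Your proof is correct, and its engine is the same as the paper's: the case analysis on the rank $r$ of $n-1$ via Proposition~\ref{DkSucc}, with Proposition~\ref{shuntnorm} and Equation~\eqref{Agenrec} doing the work, ultimately checking that the shift formula is compatible with the recursion $F_k(n)=n-F_k^k(n-1)$. The packaging, however, is genuinely reorganized. The paper names $G_k(n)=\Sigma_k(D_k(n)\gg_+ 1)$, proves the iterate formula for $G_k^q$ by induction on $q$, then verifies $G_k(n{+}1)+G_k^k(n)=n{+}1$ and concludes $F_k=G_k$ by uniqueness of solutions of the recursion; you instead run a single strong induction on $n$ carrying all exponents $1\le q\le k$, getting $q=1$ from the recursion plus the hypothesis at $n-1$ with exponent $k$, and $q\ge 2$ by composition through $F_k(n)<n$. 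This reorganization has one real cost: in your rank-$<k$ case you must strip the normalization under a shift by exactly $k$, i.e., you invoke Proposition~\ref{shuntnorm} at $q=k$, outside its stated hypothesis $q<k$ --- and for good reason, since the proposition is false at $q=k$ in general: for $k=2$ the lax decomposition $\{0,1\}$ normalizes to $\{2\}$, yet $\A{2}{0}+\A{2}{0}=2\neq 1=\A{2}{0}$ after shifting by $2$. Your patch is the right one: $E=D_k(n-1)\smallsetminus\{r\}\cup\{r{+}1\}$ has rank $r{+}1\ge 1$, normalization only moves positions upward, so every merge performed creates some position $p\ge k{+}1$, and then $\A{k}{p-1-k}+\A{k}{(p-k)\ominus k}=\A{k}{p-k}$ is exactly Equation~\eqref{Agenrec} at $p-k\ge 1$, so the shifted sums are preserved throughout. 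But this deserves to be stated and proved as an explicit variant of Proposition~\ref{shuntnorm} (lax decomposition of rank at least $1$, shift $q\le k$) rather than left as a parenthetical remark, since it is the only nontrivial point where your argument departs from the quoted lemmas. Note that the paper's arrangement is engineered to avoid exactly this: there the normalized decomposition is only ever shifted by $1<k$, while the shift by $k$ hits only the already-canonical $D_k(n)$, so Proposition~\ref{shuntnorm} suffices as stated. What your arrangement buys in exchange is a self-contained induction that never introduces the auxiliary function $G_k$.
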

\begin{proof}
The case $k=1$ is a consequence of the identity
$F_1(n) = \lceil n/2 \rceil$
already mentioned earlier. We now assume $k\ge 2$.
Let us note $G_k(n) = \Sigma_k(D_k(n)\gg_+ 1)$.
We first prove that
\begin{equation}
\label{Gkp}
G_k^q(n) = \Sigma_k(D_k(n)\gg_+ q)
\end{equation}
for all $1 \le q \le k$ and $n\ge 0$. The case $q=1$ is straightforward.
Let $1 \le q < k$ and assume \eqref{Gkp} for this $q$ and all $n$.
By noting $D = D_k(n)\gg_+ 1$ we have
\[
G_k^{q+1}(n) = G_k^q (G_k(n)) = \Sigma_k(D_k(\Sigma_k(D)) \gg_+ q)
\]
Due to its definition, $D$ is at least a lax decomposition, hence
$D_k(\Sigma_k(D)) = \norm_k(D)$ and by Proposition \ref{shuntnorm}
we obtain
\[
G_k^{q+1}(n) = \Sigma_k(D \gg_+ q) = \Sigma_k (D_k(n) \gg_+ 1 \gg_+ q)
= \Sigma_k(D_k(n) \gg_+ (q{+}1)).
\]
For the last equality, note the law
$a\ominus b\ominus c = a\ominus (b+c)$, hence
two upper right shifts may be grouped into a single one.
By induction on $q$, this concludes the proof of \eqref{Gkp}.

Now for proving $F_k=G_k$ we show that $G_k(0)=0$ (which is obvious)
and that $G_k$ satisfies the same recursive equation as $F_k$.
The equation about $F_k^q$ then follows from \eqref{Gkp}.
So let $n\ge 0$ and consider $G_k(n{+}1)+G_k^k(n)$.
If $n=0$ this sum is clearly $1=n{+}1$. Assume now $n>0$ and
note $r = \rank_k(n)$ and $D = D_k(n)\smallsetminus\{r\}$. First, if
$r < k$, then $D_k(n{+}1) = \norm_k(D\cup\{r{+}1\})$, so
\begin{align*}
G_k(n{+}1)+G_k^k(n) &=
  \Sigma_k(\norm_k(\{r{+}1\}\cup D) \gg_+ 1) + \Sigma_k((\{r\}\cup D) \gg_+ k)\\
&=\Sigma_k((\{r{+}1\}\cup D) \gg_+ 1) + \Sigma_k((\{r\}\cup D) \gg_+ k)
  \ \ \ \text{(by Prop. \ref{shuntnorm})} \\
&=\A{k}{r{+}1\ominus 1} + \A{k}{r\ominus k} + \Sigma_k(D \gg_+ 1) +
 \Sigma_k(D \gg_+ k)\\
&=\A{k}{r} + \A{k}{0} + \Sigma_k(D)\\
&=1 + \Sigma_k(\{r\}\cup D) \\
&=1 + n
\end{align*}
We exploited here the fact that all elements in $D$ are at least $k{+}r$
hence non-zero. So $\Sigma_k(D \gg_+ 1) + \Sigma_k(D \gg_+ k)$
is $\Sigma_k(D)$ via repeated use of Equation~\ref{Agenrec}.

Otherwise rank $r \ge k$ and $D_k(n{+}1) = \{0\}\cup D_k(n)$ and
\begin{align*}
G_k(n{+}1)+G_k^k(n) &=
 \Sigma_k((\{0\}\cup D_k(n)) \gg_+ 1) + \Sigma_k(D_k(n) \gg_+ k) \\
 &= \A{k}{0\ominus 1} + \Sigma_k(D_k(n) \gg_+ 1) + \Sigma_k(D_k(n)
 \gg_+ k) \\
 &= 1 + \Sigma_k(D_k(n)) \\
 &= 1 + n
\end{align*}
This time, it is $D_k(n)$ whose elements are non-zero, and we
handled the sum of its shifts just as before.
\end{proof}

As noticed earlier, $D_k(n) \gg_+ 1$ is at least a lax
$k$-decomposition, hence a renomalization may be necessary
when computing the canonical $k$-decomposition of $F_k(n)$:
\[
D_k(F_k(n)) = \norm_k(D_k(n) \gg_+ 1).
\]
As an alternative formulation,
$F_k(n)$ is also the (usual) right shift of the
$k$-decomposition of $n$ (i.e., $\tF_k(n)$),
plus the least digit of this decomposition of $n$
(i.e., its ``$k$-parity''). When this digit is $1$, this may trigger
a renormalization of the decomposition.

We now relate the $k$-rank and the differences $\partial F_k^q$.
From Section~\ref{s:basic}, we recall that $\partial F_k^q(n)$
denotes $F_k^q(n{+}1)-F_k^q(n)$ and that it is always either 0 or 1.

\begin{theorem}
\label{t:flatrank}
For all $k\ge 1$ and $n\ge 0$ and $0\le q \le k$,
we have $\partial F_k^q(n) = 0$ iff $\rank_k(n)< q$.
In particular $\partial F_k(n) = 0$ iff $\rank_k(n) = 0$.
\end{theorem}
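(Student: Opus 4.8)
The plan is to read off $\partial F_k^q(n)$ directly from the digital expansions, using the shift description $F_k^q(n) = \Sigma_k(D_k(n) \gg_+ q)$ from Theorem~\ref{t:Fshift} (valid for $0 \le q \le k$, the case $q=0$ being the identity). Writing $r = \rank_k(n)$, I would compute the difference $\partial F_k^q(n) = \Sigma_k(D_k(n{+}1)\gg_+ q) - \Sigma_k(D_k(n)\gg_+ q)$ by substituting the two cases of the successor formula from Proposition~\ref{DkSucc}, splitting on whether $r \ge k$ or $r < k$. The case $q = 0$ is immediate: $\partial F_k^0(n) = 1$ while $\rank_k(n) < 0$ never holds, so the equivalence is vacuous.

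If $r \ge k$, then $D_k(n{+}1) = D_k(n) \cup \{0\}$, and since $0 \ominus q = 0$ the upper right shift by $q$ simply appends one extra term $\A{k}{0} = 1$; hence $\partial F_k^q(n) = 1 \neq 0$. As $r \ge k \ge q$, this matches $\neg(r < q)$. If instead $r < k$ and $q < k$, then $D_k(n{+}1) = \norm_k(E)$ with $E = D_k(n) \smallsetminus \{r\} \cup \{r{+}1\}$, which is lax. Proposition~\ref{shuntnorm} (applicable since $q < k$) lets me drop the normalization, so the two decompositions differ only in their smallest term and $\partial F_k^q(n) = \A{k}{(r{+}1)\ominus q} - \A{k}{r \ominus q}$. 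When $r < q$ both indices collapse to $0$ and the difference vanishes; when $q \le r \le k{-}1$ both indices lie in the initial range where $\A{k}{p} = p{+}1$, so the difference is exactly $1$. This gives precisely $\partial F_k^q(n) = 0 \iff r < q$.

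The hard part is the remaining corner $r < k$ with $q = k$, where Proposition~\ref{shuntnorm} does not apply. Here I would avoid normalization entirely and instead use the defining identity $F_k(m{+}1) = (m{+}1) - F_k^k(m)$, which rearranges to $\partial F_k^k(n) = 1 - \partial F_k(n{+}1)$. Since $r < k$ entails (via Proposition~\ref{DkSucc}) that $D_k(n{+}1)$ has smallest index $r{+}1 \ge 1$ and normalization only raises indices, we get $\rank_k(n{+}1) \ge 1$; then the already-established case $q = 1 < k$ yields $\partial F_k(n{+}1) = 1$, whence $\partial F_k^k(n) = 0$, as required. The isolated base case $k = 1$ (where $q = 1 = k$ cannot be reduced to $q < k$) is settled directly from $F_1(n) = \lceil n/2 \rceil$, noting that $\rank_1(n) = 0$ exactly when $n$ is odd, which is exactly when $\partial F_1(n) = 0$. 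The final clause of the statement, $\partial F_k(n) = 0 \iff \rank_k(n) = 0$, is just the specialization $q = 1$.
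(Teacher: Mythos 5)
Your proof is correct and follows essentially the same route as the paper's: the same case analysis (trivial $q=0$, the rank-$\ge k$ branch, the rank-$<k$ branch with $q<k$ via Theorem~\ref{t:Fshift}, Proposition~\ref{DkSucc} and Proposition~\ref{shuntnorm}, and the corner $q=k$ resolved through the identity $\partial F_k^k(n)=1-\partial F_k(n{+}1)$, with $k=1$ settled from $F_1(n)=\lceil n/2\rceil$). The only difference is organizational: you let the rank-$\ge k$ computation cover $q=k$ as well, so the identity trick is needed only in the single corner $\rank_k(n)<k$, $q=k$, whereas the paper scopes that computation under $q<k$ and recovers the full equivalence for $q=k$ from the rank relation between $n$ and $n{+}1$.
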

\begin{proof}
First, for $q=0$, $\partial F_k^0(n)=1\neq 0$ and $\rank_k(n) \nless 0$.
Moreover, for $q=k=1$, $F_1(n)=\lceil n/2 \rceil$ and
$\partial F_1(n) = 0$ when $n$ is odd which is indeed when $\rank_1(n)= 0$.

Consider now $1 \le q < k$, leaving away the case $q=k$ for the moment
to be able to benefit from Proposition~\ref{shuntnorm}.

If $n=0$, $F_k^q(1)=1\neq 0 = F_k^q(0)$ and $\rank_k(0) = +\infty > q$.

If $n>0$ and $\rank_k(n)<k$,
we reuse the notation of the last proof: $r = \rank_k(n)$ and
$D=D_k(n)\smallsetminus\{r\}$. The last theorem states that
\begin{align*}
\partial F_k^q(n)
&= \Sigma_k(D_k(n{+}1) \gg_+ q) - \Sigma_k(D_k(n) \gg_+ q) \\
&= \Sigma_k(\norm_k(\{r{+}1\}\cup D) \gg_+ q) - \Sigma_k((\{r\}\cup D)
\gg_+ q) \\
&= \Sigma_k((\{r{+}1\}\cup D) \gg_+ q) - \Sigma_k((\{r\}\cup D) \gg_+ q)
                \ \ \ \text{(by Prop. \ref{shuntnorm})} \\
&= \A{k}{r{+}1\ominus q} - \A{k}{r \ominus q}
\end{align*}
This is indeed 1 when $q\le r<k$, since $r{+}1\ominus q = 1+(r\ominus q) < k$.
Otherwise $r<q$ and $r{+}1\ominus q = 0 = r\ominus q$ hence
$\partial F_k^q(n) = 0$.

If $n>0$ and $\rank_k(n)\ge k$ and in particular $\rank_k(n)\ge q$:
\begin{align*}
\partial F_k^q(n)
&= \Sigma_k(D_k(n{+}1) \gg_+ q) - \Sigma_k(D_k(n) \gg_+ q) \\
&= \Sigma_k((\{0\}\cup D_k(n)) \gg_+ q) - \Sigma_k(D_k(n) \gg_+ q) \\
&= \A{k}{0} \\
&= 1
\end{align*}

Finally, for the case $q=k$, we use the identity
$\partial F_k^k(n) = 1 - \partial F_k(n{+}1)$. We have just shown that
$\partial F_k(n{+}1) = 0$ iff $\rank_k(n{+}1)<1$. From
Proposition~\ref{DkSucc}, we can see that $\rank_k(n{+}1)=0$ iff
$\rank_k(n)\ge k$. Hence $\partial F_k^k(n) = 0$ iff $\rank_k(n)<k$.
\end{proof}

\begin{proposition}
\label{p:Lshift}
For all $k\ge 1$ and $n, q\ge 0$, we have
$L_k^q(n) = \Sigma_k(D_k(n) \ll q)$ and hence
$D_k(L_k^q(n)) = D_k(n) \ll q$.
\end{proposition}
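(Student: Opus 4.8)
The plan is to reduce everything to the single-step identity $L_k(n) = \Sigma_k(D_k(n)\ll 1)$ and then iterate. First I would establish this base identity directly from the two facts $L_k(n) = n + F_k^{k-1}(n)$ (Proposition~\ref{p:L_as_F}) and $F_k^{k-1}(n) = \Sigma_k(D_k(n)\gg_+(k{-}1))$ (Theorem~\ref{t:Fshift} with $q=k{-}1$; for $k=1$ the shift by $0$ is the identity and $F_1^0=\mathrm{id}$, so the formula still holds). Writing the canonical decomposition as $D_k(n) = p_0\cdots p_m$, so that $n=\sum_i \A{k}{p_i}$, the key observation is a position-by-position application of the extended recurrence~\eqref{Agenrec}: for each $i$,
\[
\A{k}{p_i+1} = \A{k}{p_i} + \A{k}{(p_i+1)\ominus k} = \A{k}{p_i} + \A{k}{p_i\ominus(k-1)},
\]
where the last equality uses $(p_i+1)\ominus k = p_i\ominus(k{-}1)$. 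Summing over $i$ turns the left-hand side into $\Sigma_k(D_k(n)\ll 1)$ and the right-hand side into $n + \Sigma_k(D_k(n)\gg_+(k{-}1)) = n + F_k^{k-1}(n) = L_k(n)$.

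Next I would upgrade this equality of sums to an equality of decompositions. Since $D_k(n)$ is canonical, its positions are pairwise at distance at least $k$; adding $1$ to every position preserves these gaps, so $D_k(n)\ll 1$ is again a canonical $k$-decomposition. By the uniqueness part of Zeckendorf's theorem (Theorem~\ref{t:zeck}), it must therefore be \emph{the} canonical decomposition of its own sum, giving $D_k(L_k(n)) = D_k(n)\ll 1$.

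Finally, the general statement follows by induction on $q$. The case $q=0$ is the identity $L_k^0=\mathrm{id}$ together with $D_k(n)\ll 0 = D_k(n)$. For the step, I apply the base identity to $m=L_k^q(n)$, whose canonical decomposition is $D_k(n)\ll q$ by the induction hypothesis; since $(D\ll q)\ll 1 = D\ll(q{+}1)$, this yields both $D_k(L_k^{q+1}(n)) = D_k(n)\ll(q{+}1)$ and, by taking $\Sigma_k$, the value $L_k^{q+1}(n) = \Sigma_k(D_k(n)\ll(q{+}1))$.

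The only delicate point, and the one I would check most carefully, is the boundary bookkeeping in the base identity: verifying that $(p_i+1)\ominus k$ really equals $p_i\ominus(k{-}1)$ for every $i$ (so that the summed correction terms assemble into exactly the upper right shift $\gg_+(k{-}1)$ computing $F_k^{k-1}$), together with the degenerate case $k=1$, where $F_k^{k-1}$ is the identity rather than an instance literally covered by Theorem~\ref{t:Fshift}. Everything else is routine once the single step is secured, since left shifts commute with iteration and preserve canonicity.
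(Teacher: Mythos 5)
Your proof is correct and follows essentially the same route as the paper's: the single step $L_k(n)=\Sigma_k(D_k(n)\ll 1)$ via Proposition~\ref{p:L_as_F}, Theorem~\ref{t:Fshift} with $q=k{-}1$, and Equation~\eqref{Agenrec}, then canonicity of $D_k(n)\ll 1$ plus Zeckendorf uniqueness to upgrade to the decomposition identity, and finally induction on $q$ by merging successive left shifts. The only (harmless) differences are that you factor the induction step as $L_k\circ L_k^q$ rather than the paper's $L_k^q\circ L_k$, and that you explicitly note the degenerate case $k=1$ (where $F_k^{k-1}=F_1^0$ falls outside the literal range of Theorem~\ref{t:Fshift}), a point the paper leaves implicit.
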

\begin{proof}
First, for $q=1$, thanks to Proposition~\ref{p:L_as_F} and
Theorem~\ref{t:Fshift} and Equation~\eqref{Agenrec}:
\begin{align*}
L_k(n) &= n + F_k^{k-1}(n) \\
       &= \Sigma_k(D_k(n)) + \Sigma_k(D_k(n) \gg_+ (k-1)) \\
       &= \sum_{p\in D_k(n)} (\A{k}{p} + \A{k}{p\ominus(k-1)}) \\
       &= \sum_{p\in D_k(n)} \A{k}{p+1} \\
       &= \Sigma_k(D_k(n) \ll 1)
\end{align*}
Since $D_k(n) \ll 1$ is still a canonical $k$-decomposition, whose
sum is equal to $L_k(n)$, we can conclude $D_k(L_k(n)) = D_k(n) \ll 1$
by Theorem~\ref{t:zeck}.

This extends easily by induction on $q$. It is clear for $q=0,1$.
Now, assume $L^q(n) = \Sigma_k(D_k(n) \ll q)$ for a specific
$q$ and all $n$. Then:
\[
L_k^{q+1}(n) = L_k^q(L_k(n)) = \Sigma_k(D_k(L_k(n)) \ll q) =
\Sigma_k (D_k(n) \ll 1 \ll q)
\]
and we can conclude by regrouping the two successive left shifts.
\end{proof}

\section{Decompositions and morphic words}
\label{s:words}

We already presented in Definition~\ref{defL} the substitution
$\tau_k$ and the infinite morphic word $x_k$ that are intimately
related with each function $F_k$. In this section we show that
these words $x_k$ can also be expressed in term of $k$-decomposition
and $k$-rank. Before that, we recall an important link
between functions $F_k$ and words $x_k$,
see~\cite[Prop.~\ref*{article1:diffFx}]{LetouzeyShuoSteiner2024}.
Note that in the word $x_k$, positions are indexed from 0.

\begin{proposition}
\label{diffFx}
Consider $1\le j < k$ and $n\ge 0$. We have $x_k[n]=j$ if and only if
both $\partial F_k^{j-1}(n) = 1$ and $\partial F_k^j (n) = 0$.
Moreover for $k\ge 1$ we have $x_k[n]=k$ iff $\partial F_k^{k-1}(n) = 1$
(in this case $\partial F_k^{k}(n)$ could be either 0 or~1).
\end{proposition}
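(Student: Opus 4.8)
The plan is to route everything through the already-proved Theorem~\ref{t:flatrank}, which converts the vanishing of $\partial F_k^q$ into a condition on the $k$-rank. Applying it in the equivalent form $\partial F_k^q(n)=1 \iff \rank_k(n)\ge q$ (valid for $0\le q\le k$), the hypothesis ``$\partial F_k^{j-1}(n)=1$ and $\partial F_k^j(n)=0$'' becomes ``$\rank_k(n)\ge j-1$ and $\rank_k(n)<j$'', i.e.\ $\rank_k(n)=j-1$, for $1\le j<k$; while ``$\partial F_k^{k-1}(n)=1$'' becomes $\rank_k(n)\ge k-1$. Both clauses are subsumed by the single clean formula
\[
x_k[n]=\min(\rank_k(n)+1,\,k),
\]
with the convention that $\rank_k(0)=+\infty$ yields the value $k$. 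So it suffices to prove this identity, and the whole proposition follows at once.

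To establish the identity I would show that the sequence $y$ defined by $y[n]=\min(\rank_k(n)+1,k)$ is a fixed point of $\tau_k$ beginning with the letter $k$; since $x_k$ is the unique such fixed point, this forces $y=x_k$. The initial letter is immediate: $\rank_k(0)=+\infty$ gives $y[0]=k$. Next I locate the block boundaries. Writing $\ell(n)=\sum_{i<n}|\tau_k(y[i])|$ for the starting position of the $n$-th block of $\tau_k(y)$, and noting $|\tau_k(y[i])|$ is $2$ when $y[i]=k$ and $1$ otherwise, one gets $\ell(n)=n+\#\{i<n:y[i]=k\}$. The key computation is that this count equals $F_k^{k-1}(n)$: indeed $y[i]=k \iff \rank_k(i)\ge k-1 \iff \partial F_k^{k-1}(i)=1$ by Theorem~\ref{t:flatrank}, so the count telescopes to $F_k^{k-1}(n)-F_k^{k-1}(0)=F_k^{k-1}(n)$. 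By Proposition~\ref{p:L_as_F} this is exactly $\ell(n)=L_k(n)$.

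With the block starts identified as $L_k(n)$, I would verify $\tau_k(y)=y$ position by position, tracking how the $k$-rank transforms under the two relevant operations. For a block start $m=L_k(n)$, Proposition~\ref{p:Lshift} gives $D_k(L_k(n))=D_k(n)\ll 1$, hence $\rank_k(L_k(n))=\rank_k(n)+1$. If $y[n]=j<k$, i.e.\ $\rank_k(n)\le k-2$, the block is the single letter $j+1$, and $y[L_k(n)]=\min(\rank_k(n)+2,k)=\rank_k(n)+2=j+1$ matches. If $y[n]=k$, i.e.\ $\rank_k(n)\ge k-1$, the block is $k1$: the first letter matches since $\rank_k(L_k(n))=\rank_k(n)+1\ge k$ forces $y[L_k(n)]=k$, and for the second letter I invoke Proposition~\ref{DkSucc}, where $\rank_k(L_k(n))\ge k$ makes the successor rule yield $D_k(L_k(n)+1)=D_k(L_k(n))\cup\{0\}$, so $\rank_k(L_k(n)+1)=0$ and $y[L_k(n)+1]=1$, as needed. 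This exhausts every position and proves $\tau_k(y)=y$.

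The main obstacle is the bookkeeping in the middle step: proving $\ell(n)=L_k(n)$, which hinges on recognizing the counting identity $\#\{i<n:y[i]=k\}=F_k^{k-1}(n)$ as a telescoping sum delivered by Theorem~\ref{t:flatrank}, and then correctly propagating the $k$-rank through both $L_k$ (a left shift, via Proposition~\ref{p:Lshift}) and the successor map (via Proposition~\ref{DkSucc}). The edge cases $n=0$ and $\rank_k=+\infty$ need a little care, but $L_k(0)=0$ and $y[0]=k$ make them consistent with the initial letter $x_k[0]=k$.
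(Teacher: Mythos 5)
Your proof is correct, and it takes a genuinely different route from the paper's. The paper gives no proof of Proposition~\ref{diffFx} in this article at all: the result is imported from the companion article~\cite{LetouzeyShuoSteiner2024}, where it is proved on the word/substitution side, before the numeration systems of Section~\ref{s:numrep} even exist; this paper then uses it, combined with Theorem~\ref{t:flatrank}, to deduce the identity $x_k[n]=\min(k,1+\rank_k(n))$ stated immediately afterwards. That identity is exactly your target, so quoting that theorem would have been circular --- but you avoid the trap by reproving it from scratch: you show that $y[n]=\min(\rank_k(n)+1,k)$ is a fixed point of $\tau_k$ (hence equals $x_k$, the unique fixed point, since $k$ is the only letter whose image begins with itself), using only material that is independent of Proposition~\ref{diffFx}, namely Theorem~\ref{t:flatrank}, Proposition~\ref{p:L_as_F}, Proposition~\ref{p:Lshift} and Proposition~\ref{DkSucc}. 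The key steps all check out: the telescoping count $\#\{i<n:y[i]=k\}=\sum_{i<n}\partial F_k^{k-1}(i)=F_k^{k-1}(n)$, which identifies the block starts of $\tau_k(y)$ with $L_k(n)$; the rank shift $\rank_k(L_k(n))=\rank_k(n)+1$ coming from $D_k(L_k(n))=D_k(n)\ll 1$; and $\rank_k(L_k(n)+1)=0$ from the successor rule when $\rank_k(L_k(n))\ge k$, with the edge case $n=0$ (where $\rank_k(0)=+\infty$ and $L_k(0)=0$) absorbed by the same formulas. What your route buys is self-containedness: Proposition~\ref{diffFx} and the theorem following it both become consequences of Section~\ref{s:numrep} alone, with the logical order of Section~\ref{s:words} reversed (the rank formula is the lemma, Proposition~\ref{diffFx} the corollary). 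What it costs is that it is only available downstream of the Zeckendorf machinery, whereas the companion article's proof works from the bare definitions of $\tau_k$, $x_k$ and $F_k$.
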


From that, the letters of the word $x_k$ can be expressed in terms of $k$-rank.

\begin{theorem}
For $k\ge 1$ and $n\ge 0$ we have $x_k[n] = \min(k,1{+}\rank_k(n))$.
\end{theorem}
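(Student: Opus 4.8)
The plan is to connect the two characterizations we already have: Proposition~\ref{diffFx} describes $x_k[n]$ via the differences $\partial F_k^{j}(n)$, and Theorem~\ref{t:flatrank} translates each such difference into a condition on $\rank_k(n)$. So the strategy is simply to substitute the latter into the former and read off the value of $x_k[n]$ as a function of $\rank_k(n)$.

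First I would record the key consequence of Theorem~\ref{t:flatrank}: for $0\le q\le k$ we have $\partial F_k^q(n)=1$ if and only if $\rank_k(n)\ge q$. Thus as $q$ ranges over $0,1,\dots,k$, the quantity $\partial F_k^q(n)$ starts at $1$ (for $q=0$) and switches from $1$ to $0$ exactly at the threshold $q=\rank_k(n)+1$ (and stays $0$ afterwards, at least up to $q=k$). In particular, writing $r=\rank_k(n)$, the largest $q\le k$ with $\partial F_k^q(n)=1$ is $\min(k,r)$, and the differences are $1$ for $q\le \min(k,r)$ and $0$ for $\min(k,r)<q\le k$.

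Next I would split into the two cases of Proposition~\ref{diffFx}. For a letter value $1\le j<k$, the proposition says $x_k[n]=j$ iff $\partial F_k^{j-1}(n)=1$ and $\partial F_k^{j}(n)=0$, i.e.\ iff the switch from $1$ to $0$ happens precisely between $q=j-1$ and $q=j$. By the previous paragraph this occurs exactly when $\min(k,r)=j-1$; since $j-1<k-1<k$ here, this is equivalent to $r=j-1$, i.e.\ $j=r+1$ and $j=1+\rank_k(n)$. Because in this regime $1+r<k$, we have $\min(k,1+r)=1+r=j$, matching the claimed formula. For the remaining value $j=k$, the proposition says $x_k[n]=k$ iff $\partial F_k^{k-1}(n)=1$, i.e.\ iff $r\ge k-1$, i.e.\ $1+\rank_k(n)\ge k$, which is exactly when $\min(k,1+\rank_k(n))=k$. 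Combining the two cases covers every possible letter, so $x_k[n]=\min(k,1+\rank_k(n))$ in all cases.

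The only genuine care needed is bookkeeping at the boundaries: one must check the edge case $k=1$ (where $x_1$ is the constant word $1$ and $\rank_1(n)\ge 0$ forces $\min(1,1+\rank_1(n))=1$), and confirm that the ``switch point'' description is consistent with $\rank_k(n)=+\infty$ when $n=0$, so that $x_k[0]=k$. I expect the main obstacle to be nothing deep but rather making the threshold argument airtight—namely verifying that the differences $\partial F_k^q(n)$ really are monotone nonincreasing in $q$ on the range $0\le q\le k$, which is exactly the content of Theorem~\ref{t:flatrank} rephrased, so no separate monotonicity lemma is actually required.
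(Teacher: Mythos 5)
Your proof is correct and follows exactly the paper's route: the paper's own proof is the one-line "combination of Proposition~\ref{diffFx} and Theorem~\ref{t:flatrank}," and your write-up simply spells out that combination (the threshold/switch-point reading of $\partial F_k^q(n)$, the case split $1\le j<k$ versus $j=k$, and the $k=1$ and $n=0$ edge cases). No gaps; the extra bookkeeping you supply is precisely what the paper leaves implicit.
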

\begin{proof}
Combination of Proposition~\ref{diffFx} and Theorem~\ref{t:flatrank}.
\end{proof}

We also recall that the substitution $\tau_k$ satisfies a recursive
equation extending the recursive rule of $\A{k}{p}$ numbers,
see~\cite[Prop.~\ref*{article1:p:taujk_rec}]{LetouzeyShuoSteiner2024}.

\begin{proposition}
\label{p:tau-alt}
 For all $k\ge 1$ and $j\ge 0$,
either $j\le k$ and $\tau_k^j(k)=k1\cdots j$, or $j\ge k$ and
$\tau_k^j(k) = \tau_k^{j-1}(k)\tau_k^{j-k}(k)$. Hence the following
equalities on words and lengths:
\begin{itemize}
\item $x_k[0{:}\A{k}{j}) = \tau_k^j(k)$
\item $\A{k}{j} = |\tau_k^j(k)| = L_k^j(1)$.
\end{itemize}
\end{proposition}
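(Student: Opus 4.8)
The plan is to treat the three assertions in order, building each on the previous. Write $w_j = \tau_k^j(k)$. The heart of the matter is the recursive equation for the $w_j$, which I would prove by induction on $j$, separating the ``linear'' regime $j \le k$ from the ``Fibonacci-like'' regime $j \ge k$. For $0 \le j \le k$, a short sub-induction shows $w_j = k1\cdots j$: indeed $w_0 = k$, and applying $\tau_k$ to $k1\cdots(j{-}1)$ sends the leading $k$ to $k1$ and each remaining letter $i$ (which satisfies $i < k$ as long as $j \le k$) to $i{+}1$, producing $k1\cdots j$. At the boundary $j = k$ this already yields the recursive equation, since $k1\cdots k = (k1\cdots(k{-}1))\cdot k = w_{k-1}\,w_0 = w_{k-1}\,w_{k-k}$.

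For $j > k$, I would obtain the recursion by applying the morphism $\tau_k$ to the instance at $j{-}1$ (available by induction, since $j{-}1 \ge k$): using $\tau_k(uv) = \tau_k(u)\tau_k(v)$,
\[
w_j = \tau_k(w_{j-1}) = \tau_k\big(w_{j-2}\,w_{j-1-k}\big) = \tau_k(w_{j-2})\,\tau_k(w_{j-1-k}) = w_{j-1}\,w_{j-k}.
\]
This completes the first bullet. The length identity $\A{k}{j} = |w_j|$ then follows by a parallel induction: the base values agree, since $|w_j| = j{+}1 = \A{k}{j}$ for $0 \le j \le k$ (recall $\A{k}{k} = k{+}1$), and the concatenation $w_j = w_{j-1}\,w_{j-k}$ gives $|w_j| = |w_{j-1}| + |w_{j-k}|$, matching the defining recurrence $\A{k}{j} = \A{k}{j-1} + \A{k}{j-k}$.

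For the word identity $x_k[0{:}\A{k}{j}) = w_j$, I would first check that $(w_j)$ is an increasing chain of prefixes converging to the fixed point $x_k$. Since $w_0 = k$ is a prefix of $w_1 = k1 = \tau_k(w_0)$, and $\tau_k$ preserves the prefix order (being a morphism: $u$ a prefix of $v$ forces $\tau_k(u)$ a prefix of $\tau_k(v)$), applying $\tau_k$ repeatedly shows $w_j$ is a prefix of $w_{j+1}$ for every $j$; as these words have strictly increasing length and $x_k$ is by definition the fixed point $\lim_j w_j$, each $w_j$ is a prefix of $x_k$. Combined with $|w_j| = \A{k}{j}$, this gives exactly $x_k[0{:}\A{k}{j}) = w_j$. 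Finally, the identity $\A{k}{j} = L_k^j(1)$ is already recorded in Proposition~\ref{p:FA}, so nothing new is needed there.

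I expect the main obstacle to be the boundary $j = k$: this is precisely where the explicit form $w_k = k1\cdots k$ must be reconciled with the two-term recursion, and where the convention $\A{k}{k} = k{+}1$ (the extension of the ``$p{+}1$'' rule to $p = k$ noted just before Equation~\eqref{Agenrec}) has to be invoked so that the length and word recursions line up. Everything else is routine bookkeeping with the morphism property $\tau_k(uv)=\tau_k(u)\tau_k(v)$ and the prefix-preservation of $\tau_k$.
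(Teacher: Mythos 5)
Your proof is correct and is precisely the ``routine induction'' that the paper's proof consists of (the paper gives no further detail), handled in the natural way: the explicit form $k1\cdots j$ for $j\le k$ via the morphism rules, the recursion for $j>k$ by applying $\tau_k$ to the previous instance, and the prefix/length bookkeeping to identify $\tau_k^j(k)$ with $x_k[0{:}\A{k}{j})$. Your attention to the boundary $j=k$, where $k1\cdots k = w_{k-1}w_0$ reconciles the two clauses and $\A{k}{k}=k{+}1$ is needed, is exactly the one point worth spelling out.
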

\begin{proof} Routine induction.
\end{proof}

The Theorem~\ref{t:zeck} can be extended to express decompositions
of prefixes of words~$x_k$.
\begin{definition}
For a $k$-decomposition $D = p_0 \cdots p_m$,
we define the corresponding word $W_k(D)$ as
\[
    W_k(D) = \tau_k^{p_m}(k)\cdots\tau_k^{p_0}(k).
\]
\end{definition}

Note that $|W_k(D)| = \Sigma_k(D)$. We recall from our conventions
that the positions $p_0 \cdots p_m$ are increasing:
$p_0 \le \cdots \le p_m$. Even more, they are strictly increasing
(by steps of at least $k$) for canonical $k$-decompositions.

\begin{theorem}
\label{t:WkDk}
For $k\ge 1$ and $n\ge 0$, the prefix of $x_k$ of size $n$ is
$x_k[0{:}n) = W_k(D_k(n))$.
Moreover, for $q\ge 0$, we have $\tau_k^q(x_k[0{:}n)) = W_k(D_k(n) \ll q)$.
\end{theorem}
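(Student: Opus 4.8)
The plan is to prove the first assertion $x_k[0{:}n) = W_k(D_k(n))$ by induction on $n$, using Proposition~\ref{DkSucc} to pass from $D_k(n)$ to $D_k(n{+}1)$, and then to obtain the general statement about $\tau_k^q$ essentially for free. Indeed, since $\tau_k$ is a morphism and $\tau_k^q(\tau_k^{p}(k)) = \tau_k^{p+q}(k)$, applying $\tau_k^q$ to the definition of $W_k$ gives $\tau_k^q(W_k(D)) = W_k(D \ll q)$, whence $\tau_k^q(x_k[0{:}n)) = \tau_k^q(W_k(D_k(n))) = W_k(D_k(n) \ll q)$. So all the work lies in the case $q = 0$.

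For the induction, the base case $n = 0$ is immediate, since $D_k(0) = \varnothing$ and both sides are the empty word. For the step, write $r = \rank_k(n)$; by the theorem expressing $x_k[n] = \min(k, 1{+}\rank_k(n))$ the new letter is $x_k[n] = \min(k,1{+}r)$, and by Proposition~\ref{p:tau-alt} we have $|W_k(D_k(n))| = \Sigma_k(D_k(n)) = n$, so it suffices to check that $W_k(D_k(n{+}1))$ equals $W_k(D_k(n))$ followed by the single letter $x_k[n]$. When $r \ge k$, Proposition~\ref{DkSucc} gives $D_k(n{+}1) = D_k(n) \cup \{0\}$, so $W_k(D_k(n{+}1)) = W_k(D_k(n)) \cdot \tau_k^0(k) = W_k(D_k(n)) \cdot k$, and indeed $\min(k,1{+}r) = k$; this case is settled at once.

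The delicate case is $r < k$, where $D_k(n{+}1) = \norm_k(D')$ with $D' = (D_k(n) \smallsetminus \{r\}) \cup \{r{+}1\}$, and here the main obstacle is the normalization. I would isolate it as a lemma: for any lax $k$-decomposition $E$, one has $W_k(\norm_k(E)) = W_k(E)$. This holds because each normalization step replaces an adjacent pair of positions $p{-}1, p{-}k$ by $p$, which on the word side replaces the factor $\tau_k^{p-1}(k)\,\tau_k^{p-k}(k)$ by $\tau_k^{p}(k)$, the same word by the identity $\tau_k^{p}(k) = \tau_k^{p-1}(k)\,\tau_k^{p-k}(k)$ of Proposition~\ref{p:tau-alt} (valid since $p \ge k$). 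Granting this, write $D_k(n) = r\,p_1\cdots p_m$; then $D'$ and $D_k(n)$ share the higher positions $p_1,\dots,p_m$ and differ only in their lowest block, $\tau_k^{r+1}(k)$ for $D'$ versus $\tau_k^{r}(k)$ for $D_k(n)$. Since $r{+}1 \le k$, Proposition~\ref{p:tau-alt} gives $\tau_k^{r+1}(k) = \tau_k^{r}(k)\cdot(r{+}1)$, so $W_k(D') = W_k(D_k(n))\cdot(r{+}1)$; combined with the lemma this yields $W_k(D_k(n{+}1)) = W_k(D') = W_k(D_k(n))\cdot(r{+}1)$, which is exactly $x_k[0{:}n)$ extended by $x_k[n] = \min(k,1{+}r) = r{+}1$, completing the induction.

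The case $k = 1$ fits this scheme uniformly: every block $\tau_1^p(1)$ is a string of $2^p$ ones, and the identities above specialize to the binary carry, so no separate treatment is needed. The only genuinely nontrivial ingredient is the normalization-invariance lemma; everything else is bookkeeping on the block decomposition defining $W_k$, together with Propositions~\ref{DkSucc} and~\ref{p:tau-alt}.
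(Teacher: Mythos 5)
Your proposal is correct, but it takes a genuinely different route from the paper. The paper proves $x_k[0{:}n) = W_k(D_k(n))$ by \emph{strong} induction on $n$, working top-down on the \emph{highest} position $p$ of $D_k(n)$: it mirrors the existence part of Theorem~\ref{t:zeck}, observing that $\A{k}{p}\le n < \A{k}{p+1}$ forces $\tau_k^p(k)$ to be a prefix of $x_k[0{:}n)$, which in turn is a prefix of $\tau_k^{p+1}(k)=\tau_k^p(k)\tau_k^{p-k}(k)$, so the remainder $x_k[\A{k}{p}{:}n)$ is again a prefix of $x_k$ and the induction hypothesis applies to $m=n-\A{k}{p}$. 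That argument needs nothing beyond Proposition~\ref{p:tau-alt} and the Zeckendorf structure; in particular it never touches normalization. You instead work bottom-up on the \emph{lowest} position (the rank), stepping from $n$ to $n{+}1$ via Proposition~\ref{DkSucc} and identifying the appended letter through the theorem $x_k[n]=\min(k,1{+}\rank_k(n))$ (which is legitimate: that theorem is proved from Proposition~\ref{diffFx} and Theorem~\ref{t:flatrank}, independently of Theorem~\ref{t:WkDk}, so there is no circularity). The price is your normalization-invariance lemma, $W_k(\norm_k(E))=W_k(E)$ for lax $E$, which the paper's route avoids entirely; its proof needs the two small facts you left implicit, namely that in a lax decomposition a pair at distance $k{-}1$ is necessarily adjacent in the sorted order, and that the highest-pair strategy keeps the sequence sorted and lax after replacement (for $k\ge 2$, the next position above satisfies $q_+\ge p+k-1$ precisely because a higher pair at distance $k{-}1$ would otherwise exist). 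The reward is a word-level analogue of Proposition~\ref{shuntnorm} of independent interest --- it shows that $W_k$ on lax decompositions depends only on the sum --- and a letter-by-letter construction that makes the link with the successor formula and the rank explicit, at the cost of pulling in more machinery (Theorem~\ref{t:flatrank}, hence Theorem~\ref{t:Fshift}) than the paper's shorter, self-contained argument. Your treatment of the second assertion (distributing $\tau_k^q$ over the blocks of $W_k$) is identical to the paper's.
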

\begin{proof}
We follow the existence part of Theorem~\ref{t:zeck}.
Once again we proceed by strong induction over $n$. If $n=0$, then
indeed $x_k[0{:}0) = \varepsilon = W_k(\varnothing) = W_k(D_k(0))$.
Otherwise for $n>0$ let $p$ be the highest position such that
$\A{k}{p} \le n < \A{k}{p+1}$.
If $p<k$, then $\A{k}{p+1} = 1 + \A{k}{p}$ and hence $n=\A{k}{p}$ and
$D_k(n)=\{p\}$. So by Proposition~\ref{p:tau-alt},
\[
x_k[0{:}n) = \tau_k^p(k) = W_k(\{p\}) = W_k(D_k(n)).
\]
Otherwise $p\ge k$.
Since $\A{k}{p} \le n < \A{k}{p+1}$, Proposition~\ref{p:tau-alt}
implies that $\tau_k^p(k)$ is a prefix of $x_k[0{:}n)$
which is a prefix of
$\tau_k^{p+1}(k) = \tau_k^{p}(k)\tau_k^{p-k}(k)$.
That means that $x_k[0{:}n)$ can be written $\tau_k^{p}(k)w$ where
$w$ is a prefix of $\tau_k^{p-k}(k)$ and hence a prefix of $x_k$.
Hence $w = x_k[0{:}m)$ if we call $m$ the desired length of $|w|$,
i.e., $m = n-\A{k}{p}$. But here $D_k(n) = \{p\}\cup D_k(m)$ and $m<n$
and $p$ is strictly above all positions in $D_k(m)$.
By induction hypothesis, $x_k[0{:}m) = W_k(D_k(m))$ and finally
\[
 x_k[0{:}n) = \tau_k^p(k)W_k(D_k(m)) = W_k(\{p\}\cup D_k(m)) =
 W_k(D_k(n)).
\]

Now, for any $k$-decomposition $D$, we have $\tau_k^q(W_k(D)) = W_k(D \ll q)$
by distributing the substitutions $\tau_k^q$ into the components of
$W_k(D)$. Hence
\[
 \tau_k^q(x_k[0{:}n)) = \tau_k^q(W_k(D_k(n))) = W_k(D_k(n) \ll q).
\]
\end{proof}

\section{Related polynomials and algebraic integers}
\label{s:poly}

For a finer description of the Fibonacci-like $(\A{k}{p})$ numbers in
Section~\ref{s:Aalg}, we investigate now some
corresponding polynomials. This is an extension
of our previous
work~\cite[Sect.~\ref*{article1:s:poly}]{LetouzeyShuoSteiner2024},
where we already presented the following polynomials and their
positive zeros. Most of this section was already in Dilcher~\cite{Dilcher1993}.

\begin{definition} \label{d:alphabeta}
For $k\ge 1$, we name $P_k(X) = X^k+X-1$ and
$Q_k(X) = X^k-X^{k-1}-1$. We name $\alpha_k$ (resp., $\beta_k$) the
unique positive zero of $P_k$ (resp., $Q_k$).
Note that $\beta_k = 1/\alpha_k$ since $P_k$ and $Q_k$ are reciprocal
polynomials.
\end{definition}

\begin{proposition}
\label{p:alpha-incr}
 $(\alpha_k)_{k\in\mathbb{N}_+}$ is a strictly
  increasing sequence in $[\tfrac{1}{2},1)$ while
$(\beta_k)_{k\in\mathbb{N}_+}$ is a strictly decreasing sequence in
    $(1,2]$.
Moreover $1 + \tfrac{1}{k} \le \beta_k \le 1 + \tfrac{1}{\sqrt{k}}$
hence $\beta_k$ and $\alpha_k$ converge to $1$ when $k\to\infty$.
\end{proposition}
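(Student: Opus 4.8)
The plan is to study $\alpha_k$ through $P_k$ and transfer everything to $\beta_k=1/\alpha_k$ via the reciprocity noted in Definition~\ref{d:alphabeta}, reserving the direct analysis of $Q_k$ for the quantitative bounds. First I would record the shape of $P_k$ on $(0,\infty)$: since $P_k'(X)=kX^{k-1}+1>0$, the map $P_k$ is strictly increasing there, and from $P_k(0)=-1<0$ and $P_k(1)=1>0$ its unique positive zero $\alpha_k$ lies in $(0,1)$. Evaluating at $\tfrac12$ gives $P_k(\tfrac12)=2^{-k}-\tfrac12\le 0$, with equality exactly at $k=1$; hence $\alpha_1=\tfrac12$ and $\alpha_k\in(\tfrac12,1)$ for $k\ge2$, so $\alpha_k\in[\tfrac12,1)$ throughout.

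For strict monotonicity of $(\alpha_k)$, the key computation is to feed $\alpha_k$ into the next polynomial. Using $\alpha_k^k=1-\alpha_k$, one finds
\[
P_{k+1}(\alpha_k)=\alpha_k^{k+1}+\alpha_k-1=\alpha_k^{k+1}-\alpha_k^{k}=\alpha_k^{k}(\alpha_k-1)<0,
\]
since $0<\alpha_k<1$. As $P_{k+1}$ is strictly increasing on $(0,\infty)$ and vanishes at $\alpha_{k+1}$, this forces $\alpha_k<\alpha_{k+1}$. Taking reciprocals and using $\beta_k=1/\alpha_k$ turns the strictly increasing sequence $(\alpha_k)$ in $[\tfrac12,1)$ into the strictly decreasing sequence $(\beta_k)$ in $(1,2]$, with $\beta_1=2$.

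For the quantitative bounds it is cleaner to work with $Q_k$ directly, writing $Q_k(X)=X^{k-1}(X-1)-1$; since $Q_k$ is increasing on $[1,\infty)$ with $Q_k(1)=-1<0$ and zero $\beta_k$, one has $Q_k(v)\le 0\iff v\le\beta_k$ and $Q_k(u)\ge 0\iff u\ge\beta_k$. At $v=1+\tfrac1k$ this reads $Q_k(v)=v^{k-1}/k-1$, and $(1+\tfrac1k)^{k-1}\le(1+\tfrac1k)^{k}<e<3\le k$ for $k\ge3$ (the cases $k=1,2$ being immediate) gives $Q_k(v)\le0$, i.e.\ $\beta_k\ge1+\tfrac1k$. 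At $u=1+\tfrac1{\sqrt k}$ one has $Q_k(u)=u^{k-1}/\sqrt k-1$, and Bernoulli's inequality yields $(1+\tfrac1{\sqrt k})^{k-1}\ge 1+\tfrac{k-1}{\sqrt k}=1+\sqrt k-\tfrac1{\sqrt k}\ge\sqrt k$, so $Q_k(u)\ge0$, i.e.\ $\beta_k\le1+\tfrac1{\sqrt k}$. Both bounds tend to $1$, so $\beta_k\to1$ by squeezing, and $\alpha_k=1/\beta_k\to1$. The fiddly part is the two-sided estimate on $\beta_k$: the upper bound is a one-line Bernoulli estimate, but the lower bound rests on $(1+\tfrac1k)^{k-1}\le k$, which I expect to be the main obstacle, since the clean bound $<e$ only settles $k\ge3$ and the small cases $k=1,2$ must be checked separately.
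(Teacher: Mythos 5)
Your proof is correct and complete. Note that the paper itself gives no proof of this proposition: it is recalled from the companion article (Section on polynomials of the first paper), so there is nothing internal to compare against; your argument stands as a valid self-contained justification. Each step checks out: $P_k$ is strictly increasing on $(0,\infty)$, the sign evaluation $P_k(\tfrac12)=2^{-k}-\tfrac12$ correctly isolates the equality case $k=1$, and the key monotonicity computation $P_{k+1}(\alpha_k)=\alpha_k^k(\alpha_k-1)<0$ is exactly the right trick to compare consecutive zeros of a family of increasing polynomials. On the quantitative side, $Q_k'(X)=X^{k-2}\bigl(kX-(k-1)\bigr)>0$ on $[1,\infty)$ justifies your sign-to-order transfer (worth one explicit line, but trivial), Bernoulli gives the upper bound cleanly, and your handling of the lower bound via $(1+\tfrac1k)^{k-1}<e<3\le k$ for $k\ge3$ with direct verification of $k=1$ (where $\beta_1=2=1+\tfrac11$, so equality) and $k=2$ (where $Q_2(\tfrac32)=-\tfrac14<0$) closes the only delicate case. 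The one stylistic remark: you could avoid the case split entirely by checking $(1+\tfrac1k)^{k-1}\le k$ directly for all $k\ge1$ (it holds with equality at $k=1$), but the argument as written is sound.
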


\begin{theorem}
\label{t:limits}
For $k\ge 1$ and $j\ge 0$, the following limits exist and have the
given values:
\begin{align*}
\lim_{n\to\infty} \tfrac{1}{n}F_k^j(n) &= \alpha_k^j \\
\lim_{n\to\infty} \tfrac{1}{n}L_k^j(n) &= \beta_k^j
\end{align*}
Said otherwise, when $n\to\infty$ we have
$F_k^j(n) \sim \alpha_k^j\,n$ and $L_k^j(n) \sim \beta_k^j\, n$.
\end{theorem}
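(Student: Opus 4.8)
The plan is to reduce everything to a single asymptotic statement about the Fibonacci-like numbers, namely that $\A{k}{p} \sim C\,\beta_k^p$ for a constant $C>0$, and then transport this to $F_k$ and $L_k$ through the shift descriptions of Theorem~\ref{t:Fshift} and Proposition~\ref{p:Lshift}. This asymptotic is the heart of the matter. The sequence $(\A{k}{p})$ satisfies the linear recurrence whose characteristic polynomial is $Q_k(X)=X^k-X^{k-1}-1$, with real root $\beta_k\in(1,2]$ by Proposition~\ref{p:alpha-incr}. I would first show that $\beta_k$ is the unique root of maximal modulus: any nonzero root $z$ of $Q_k$ solves $z^{k-1}(z-1)=1$, so $|z-1|=|z|^{-(k-1)}$; since $\beta_k^{k-1}(\beta_k-1)=1$ gives $\beta_k^{-(k-1)}=\beta_k-1$, any root with $|z|\ge\beta_k$ satisfies $|z-1|\le\beta_k-1$, i.e.\ it lies in the closed disk centered at $1$ of radius $\beta_k-1$. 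The only point of that disk of modulus $\ge\beta_k$ is $z=\beta_k$ itself, so every other root has modulus strictly below $\beta_k$, and a direct computation of $Q_k'(\beta_k)=\beta_k^{k-2}(k\beta_k-(k-1))>0$ shows $\beta_k$ is simple. The general solution of the recurrence then yields $\A{k}{p}=C\beta_k^p+o(\beta_k^p)$; positivity $C>0$ follows because $u_p:=\A{k}{p}/\beta_k^p$ obeys $u_p=\beta_k^{-1}u_{p-1}+\beta_k^{-k}u_{p-k}$ with weights summing to $1$ (as $\beta_k^k=\beta_k^{k-1}+1$), hence stays a convex combination of its initial values and is bounded below by $\min_{0\le i<k}u_i>0$. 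In particular $\A{k}{p+j}/\A{k}{p}\to\beta_k^{\,j}$ and $\A{k}{p\ominus j}/\A{k}{p}\to\beta_k^{-j}=\alpha_k^{\,j}$.

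Next I would establish the two base cases $j=1$. By Proposition~\ref{p:Lshift}, $L_k(n)=\Sigma_k(D_k(n)\ll 1)=\sum_{p\in D_k(n)}\A{k}{p+1}$, while $n=\sum_{p\in D_k(n)}\A{k}{p}$. Fixing $\varepsilon>0$, choose $P$ so that $|\A{k}{p+1}/\A{k}{p}-\beta_k|<\varepsilon$ for $p\ge P$, and split $D_k(n)$ at the threshold $P$. The positions below $P$ contribute at most the constant $\A{k}{P}$ to $n$ and at most $\A{k}{P+1}$ to $L_k(n)$ (by the Zeckendorf bound of Theorem~\ref{t:zeck}), whereas the high positions contribute a total between $(\beta_k-\varepsilon)$ and $(\beta_k+\varepsilon)$ times $n_{\mathrm{high}}:=\sum_{p\in D_k(n),\,p\ge P}\A{k}{p}$, where $n_{\mathrm{high}}/n\to 1$. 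Dividing by $n$ and letting $n\to\infty$ then $\varepsilon\to 0$ gives $\tfrac1n L_k(n)\to\beta_k$. The identical argument with Theorem~\ref{t:Fshift} (upper right shift $\gg_+ 1$, sending $p\mapsto p\ominus 1$, where only the bounded block of low positions feels the clamping at $0$) gives $\tfrac1n F_k(n)\to\alpha_k$.

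Finally I would lift to arbitrary $j$ by composition of limits. Since $F_k$ is monotone with $F_k(m)/m\to\alpha_k$ and $F_k^{j-1}(n)\to\infty$, the factorization $\tfrac1n F_k^{\,j}(n)=\tfrac{F_k(F_k^{j-1}(n))}{F_k^{j-1}(n)}\cdot\tfrac{F_k^{j-1}(n)}{n}$ yields $\tfrac1n F_k^{\,j}(n)\to\alpha_k^{\,j}$ by induction on $j$; the same scheme applied to $L_k$ gives $\tfrac1n L_k^{\,j}(n)\to\beta_k^{\,j}$, consistently with $\beta_k=1/\alpha_k$ and $F_k\circ L_k=\mathrm{id}$.

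The main obstacle is the dominant-root analysis of $Q_k$: proving that $\beta_k$ strictly dominates every other root in modulus and that the leading coefficient $C$ is positive. Once $\A{k}{p}\sim C\beta_k^p$ is secured, the remainder is the routine decomposition-sum squeeze together with a composition-of-limits induction.
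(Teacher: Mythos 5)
Your proposal is correct, but there is nothing to compare it with line-by-line: this paper states Theorem~\ref{t:limits} as a recall from the companion article (\cite{LetouzeyShuoSteiner2024}, the section on infinitary behavior) and gives no proof here. What is interesting is how your argument relates to the machinery this paper builds for \emph{other} purposes. Your dominant-root step (every root $z\neq\beta_k$ of $Q_k$ has $|z|<\beta_k$, via $|z|^{k-1}|z-1|=1$ and the disk argument) is essentially Proposition~\ref{p:betamax}; your positivity of the leading coefficient (the convex-combination bound on $u_p=\A{k}{p}/\beta_k^p$, using $\beta_k^{-1}+\beta_k^{-k}=1$) is a clean substitute for Proposition~\ref{p:betaA}, which gets $u_p\ge 1$ by a direct induction; and the resulting asymptotics $\A{k}{p+1}/\A{k}{p}\to\beta_k$ is exactly the corollary the paper later derives from the full Lagrange-interpolation formula of Proposition~\ref{p:Alincomb}. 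So you re-derive lightweight versions of Section~\ref{s:poly}--\ref{s:Aalg} results and then add the genuinely new ingredient: the threshold-splitting squeeze on canonical decompositions, transported through Theorem~\ref{t:Fshift} and Proposition~\ref{p:Lshift}, followed by the composition-of-limits induction on $j$. This is legitimate and non-circular within this paper's ordering, since Section~\ref{s:numrep} is proved independently of Theorem~\ref{t:limits}; what it buys is a proof of the limit theorem needing only the crude estimate $\A{k}{p}\sim C\beta_k^p$ rather than the exact expansion, at the cost of assuming the standard structure theory of linear recurrences (which the paper avoids by proving Proposition~\ref{p:Alincomb} from scratch). Two small points you should make explicit: the threshold $P$ must be taken $\ge 1$ so that $p\ominus 1=p-1$ on the high block, and $F_k^{j-1}(n)\to\infty$ needs a one-line justification (it is increasing and onto by Proposition~\ref{p:basicF}); neither is a real gap.
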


As a side remark, we are now able to provide asymptotic equivalents
for $\alpha_k$ and $\beta_k$, instead of just the previous bounds
of Proposition~\ref{p:alpha-incr}. Actually, the equivalent for
$\beta_k$ was already proposed by Selmer~\cite[Eq.~(4.4)]{Selmer56}
but without justifications. See also Dilcher~\cite[Lem.~3]{Dilcher1993}.

\begin{proposition}
\label{p:asympt}
When $k$ converges to $+\infty$,
\begin{align*}
\alpha_k &= 1 - \frac{\ln(k)}{k} + o\left(\frac{\ln(k)}{k}\right) \\
\beta_k &= 1 + \frac{\ln(k)}{k} + o\left(\frac{\ln(k)}{k}\right)
\end{align*}
\end{proposition}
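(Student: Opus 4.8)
The plan is to work with $\beta_k$, the positive zero of $Q_k(X)=X^k-X^{k-1}-1$, and then deduce the result for $\alpha_k=1/\beta_k$. Write $\beta_k = 1+u_k$ where, by Proposition~\ref{p:alpha-incr}, we already know $\tfrac{1}{k}\le u_k \le \tfrac{1}{\sqrt k}$, so in particular $u_k\to 0$. The defining equation $\beta_k^k=\beta_k^{k-1}+1$ can be rewritten by dividing through by $\beta_k^{k-1}$ as
\[
\beta_k = 1 + \beta_k^{-(k-1)} = 1 + (1+u_k)^{-(k-1)},
\]
so that $u_k = (1+u_k)^{-(k-1)}$. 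Taking logarithms gives the clean relation
\[
\ln u_k = -(k-1)\ln(1+u_k).
\]
This single equation is the engine of the whole proof; everything else is a bootstrapping argument to extract the leading term of $u_k$ from it.

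\textbf{Bootstrapping the leading term.}
First I would use the crude bounds to get a first estimate of $\ln u_k$. Since $\tfrac1k\le u_k\le\tfrac1{\sqrt k}$, we have $\ln u_k = \Theta(\ln k)$ (more precisely $-\ln k \le \ln u_k \le -\tfrac12\ln k$), and $\ln(1+u_k)=u_k(1+o(1))$ because $u_k\to0$. Substituting into $\ln u_k = -(k-1)\ln(1+u_k)$ yields
\[
u_k = \frac{-\ln u_k}{(k-1)}\,\bigl(1+o(1)\bigr) = \frac{\Theta(\ln k)}{k}.
\]
This already sharpens the bound to $u_k=\Theta(\ln k/k)$, hence $\ln u_k = \ln\ln k - \ln k + O(1) = -\ln k\,(1+o(1))$. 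Feeding this improved value of $\ln u_k$ back into the same relation gives
\[
u_k = \frac{-\ln u_k}{k-1}(1+o(1)) = \frac{\ln k}{k}(1+o(1)) = \frac{\ln k}{k}+o\!\left(\frac{\ln k}{k}\right),
\]
which is exactly the claimed expansion $\beta_k = 1+\tfrac{\ln k}{k}+o(\tfrac{\ln k}{k})$. The iteration is a standard fixed-point unwinding of an implicit equation of the form $u = \varphi(u,k)$, and two passes suffice because each pass improves the relative error.

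\textbf{Transferring to $\alpha_k$.}
For $\alpha_k=1/\beta_k$, I would simply expand the reciprocal. Writing $v_k = \alpha_k-1 = \tfrac{1}{1+u_k}-1 = -u_k + u_k^2 - \cdots = -u_k(1+O(u_k))$, and since $u_k=O(\ln k/k)$ makes the correction $O(u_k^2)=O((\ln k/k)^2)$ negligible against $\ln k/k$, we obtain
\[
\alpha_k = 1 - u_k + O(u_k^2) = 1 - \frac{\ln k}{k} + o\!\left(\frac{\ln k}{k}\right).
\]

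\textbf{Anticipated obstacle.}
The only delicate point is controlling the error terms cleanly during the bootstrap, in particular justifying that the $o(1)$ factors from $\ln(1+u_k)=u_k(1+o(1))$ and from the replacement of $\ln u_k$ by $-\ln k$ do not interfere across the two iterations. The substitution of $\ln\ln k$ is harmless because $\ln\ln k = o(\ln k)$, so it is absorbed into the error term of $\ln u_k$; making this rigorous requires carrying explicit inequalities rather than $o$-symbols through at least the first pass, but presents no conceptual difficulty. No genuinely hard step arises, which matches the paper's remark that Selmer stated the equivalent for $\beta_k$ without justification.
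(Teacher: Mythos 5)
Your proof is correct, and it takes a genuinely different route from the paper. The paper works directly with $\alpha_k$ as the attracting fixed point of $f(x)=\sqrt[k]{1-x}$ on $(0,1)$: starting from $u_0=1-\tfrac1e$, it uses the fact that the iterates bracket the fixed point ($u_2<\alpha_k<u_3$ for $k\ge 3$) and computes asymptotic expansions of $u_2$ and $u_3$, then inverts to get $\beta_k$. You instead work with $\beta_k$, exploit the rewritten defining equation $u_k=(1+u_k)^{-(k-1)}$ with $u_k=\beta_k-1$, and run a two-pass bootstrap on $\ln u_k=-(k-1)\ln(1+u_k)$, seeded by the a priori bounds $\tfrac1k\le u_k\le\tfrac1{\sqrt k}$ of Proposition~\ref{p:alpha-incr}; then you invert to get $\alpha_k$. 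Your argument is arguably more self-contained: it needs no dynamical facts (attraction, basin, oscillation of iterates) and no computation of explicit iterates, only the crude bounds the paper has already established and elementary logarithm asymptotics. What the paper's method buys in exchange is extra precision: the bracketing $u_2<\alpha_k<u_3$ gives concrete two-sided bounds for each finite $k\ge 3$, and since $u_3-u_2=O\bigl(\tfrac{\ln\ln k}{k}\bigr)$ it implicitly yields the sharper error term $\alpha_k = 1-\tfrac{\ln k}{k}+O\bigl(\tfrac{\ln\ln k}{k}\bigr)$, stronger than the stated $o\bigl(\tfrac{\ln k}{k}\bigr)$. (Your bootstrap could also be pushed a third pass to recover this, but as written it stops at the stated precision, which is all the proposition requires.)
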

\begin{proof}
Let $k\ge 1$.
$\alpha_k$ can be expressed as the solution of the equation
$x=\sqrt[k]{1-x}$ in the interval $(0,1)$. Even more,
it is an attracting fixed point of
the function $f(x)=\sqrt[k]{1-x}$, with the whole interval $(0,1)$ as
basin of attraction. It is hence sufficient to iterate this function
to get approximations of $\alpha_k$. Here, these approximations will
alternate on each side of $\alpha_k$. We choose
$u_0 = 1{-}\tfrac{1}{e}$ as initial point, since this cancels some
constants later,
then the sequence $u_{n+1}=f(u_n)$ converges to $\alpha_k$.
Actually,
$u_2$ and $u_3$ already have asymptotic expansions that suit our
needs. Indeed, we obtain $u_2 < \alpha_k < u_3$ (as soon as $k\ge 3$) and
\begin{align*}
u_2 &= f^2\left(1-\frac{1}{e}\right) = 1 - \frac{\ln(k)}{k} +
 o\left(\frac{1}{k}\right) \\
u_3 &= f(u_2) = 1 - \frac{\ln(k)}{k} + \frac{\ln(\ln(k))}{k} +
 o\left(\frac{1}{k}\right).
\end{align*}
Now, the expansion of $\beta_k$ is obtained by inversing the one of
$\alpha_k$.
\end{proof}

Let us study now the complex zeros of polynomials $P_k$ and $Q_k$.
Since $P_k$ has no common zero with its derivative,
it admits $k$
distinct complex zeros, including $\alpha_k$ and a real negative
zero when $k$ is even. The $k$ zeros of $Q_k$ are the inverse of the
zeros $P_k$, they are also distinct
and include $\beta_k$ and also a real negative zero when $k$ is even.
Moreover, $\beta_k$ is the dominant zero of $Q_k$:

\begin{proposition}
\label{p:betamax}
Let $z\in\mathbb{C}$ be a zero of $Q_k$ different from $\beta_k$. Then
$|z|<\beta_k$.
\end{proposition}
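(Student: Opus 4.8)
The plan is to reduce everything to a single modulus identity. Since $Q_k(z)=0$ means $z^k - z^{k-1} = 1$, factoring gives $z^{k-1}(z-1) = 1$, and taking moduli yields $|z|^{k-1}\,|z-1| = 1$. The same computation applied to the real dominant zero gives $\beta_k^{k-1}(\beta_k-1) = 1$, where both factors are positive because $\beta_k > 1$ by Proposition~\ref{p:alpha-incr}. I would then argue by contradiction: assuming a zero $z \neq \beta_k$ with $|z| \ge \beta_k$, I show that the identity $|z|^{k-1}|z-1|=1$ can hold only when $z = \beta_k$.

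First I would dispose of trivial cases: for $k=1$ one has $Q_1 = X-2$, whose sole zero is $\beta_1 = 2$, so the statement is vacuous and we may assume $k\ge 2$. Also $z=1$ is never a zero since $Q_k(1) = -1$, so every zero satisfies $|z-1|>0$. Assuming $|z|\ge\beta_k>1$, I would chain the inequalities
\[
1 = |z|^{k-1}|z-1| \;\ge\; \beta_k^{k-1}|z-1| \;\ge\; \beta_k^{k-1}(|z|-1) \;\ge\; \beta_k^{k-1}(\beta_k-1) = 1,
\]
where the first step uses $|z|\ge\beta_k$ together with $k-1\ge 1$, the second uses the triangle inequality $|z-1|\ge|z|-1$ (valid since $|z|\ge 1$), and the third uses $|z|\ge\beta_k$ again. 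Since the two ends coincide, every inequality must in fact be an equality.

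The heart of the argument, and its only delicate point, is the equality analysis. Equality in the first step forces $|z|=\beta_k$, using $|z-1|>0$ and $k-1\ge 1$; equality in the second step is the equality case of $|z|=|(z-1)+1|\le|z-1|+1$, which holds exactly when $z-1$ is a nonnegative real, i.e.\ when $z$ is real with $z\ge 1$. Combining $|z|=\beta_k$ with $z$ real and $z\ge 1$ yields $z=\beta_k$, contradicting $z\neq\beta_k$. I expect no serious obstacle here: once the factorization $z^{k-1}(z-1)=1$ is in place the reasoning is elementary and uses only the already-established bound $\beta_k>1$. An alternative route, passing to the reciprocal polynomial $P_k$ to compare the moduli of its zeros against $\alpha_k$, or invoking a Rouch\'e-type count, would also work, but the direct modulus estimate above seems shortest and avoids any analytic machinery.
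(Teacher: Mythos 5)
Your proof is correct and follows essentially the same route as the paper's: both rest on the modulus identity $|z|^{k-1}\,|z-1| = 1 = \beta_k^{k-1}(\beta_k-1)$, the triangle inequality $|z|-1 \le |z-1|$ with its equality case $z-1\in\mathbb{R}_{\ge 0}$, and the fact that $\beta_k$ is the unique positive real zero of $Q_k$. The only difference is one of ordering: the paper establishes the strict inequality $|z|-1 < |z-1|$ upfront (since $z\neq\beta_k$ excludes the equality case) and then gets a one-line contradiction from $1 \le (|z|-1)|z|^{k-1} < |z-1|\,|z|^{k-1} = 1$, whereas you chain weak inequalities and recover the same facts in a final equality analysis.
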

\begin{proof}
Via triangle inequality, $|(z-1)+1|\le |z-1|+|1|$
hence $|z|-1 \le |z-1|$. Moreover this is an equality only when
$z-1\in \mathbb{R_+}$. But $\beta_k$ is the unique zero of $Q_k$ which
is a positive real number. So here $|z|-1 < |z-1|$.

Now, if $\beta_k \le |z|$ we could derive the following
contradiction:
\[
1 = (\beta_k -1).\beta_k^{k-1} \le (|z|-1).|z|^{k-1} < |z-1|.|z|^{k-1}
 = |z^k-z^{k-1}| = 1.
\]
\end{proof}

For the same kind of reasons, when $Q_k$ admits a real negative zero
(i.e., when $k$ is even), then the absolute value of this zero is
strictly smaller than the modulus of the other zeros.

\begin{proposition}
For two zeros $z$ and $t$ of $Q_k$, $|z|=|t|$ if and only if
$\RE(z)=\RE(t)$, which happens exactly when $z$ and $t$ are equal or
complex conjugates. Similarly, $|z|<|t|$ whenever $\RE(z)<\RE(t)$.
\end{proposition}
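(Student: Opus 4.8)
The plan is to show that, for every zero $z$ of $Q_k$, the real part $\RE(z)$ is completely determined by the modulus $|z|$ through a single strictly increasing function; once this is established, all three assertions follow at once. The starting point is the factorization already used in the proof of Proposition~\ref{p:betamax}: any zero $z$ satisfies $z^{k-1}(z-1)=1$, and since $Q_k(0)=Q_k(1)=-1$ we have $z\notin\{0,1\}$. Taking moduli gives $|z|^{k-1}\,|z-1|=1$, hence $|z-1|^2=|z|^{-2(k-1)}$. Expanding $|z-1|^2=|z|^2-2\RE(z)+1$ and solving yields
\[
\RE(z) = g(|z|), \qquad\text{where}\quad g(r)=\tfrac12\bigl(r^2+1-r^{-2(k-1)}\bigr).
\]

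First I would check that $g$ is strictly increasing on $(0,\infty)$. Differentiating, $g'(r)=r+(k-1)\,r^{-(2k-1)}$, which is strictly positive for all $r>0$ since $k\ge 1$. Thus $g$ is a strictly increasing bijection from $(0,\infty)$ onto its image, so $g(|z|)=g(|t|)$ iff $|z|=|t|$, and $g(|z|)<g(|t|)$ iff $|z|<|t|$. Combined with $\RE(z)=g(|z|)$ and $\RE(t)=g(|t|)$, this immediately gives $\RE(z)=\RE(t)\iff|z|=|t|$ and $\RE(z)<\RE(t)\iff|z|<|t|$; in particular $\RE(z)<\RE(t)$ implies $|z|<|t|$, the last sentence of the statement.

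It remains to identify when $\RE(z)=\RE(t)$ geometrically. If $\RE(z)=\RE(t)$, then by the above also $|z|=|t|=:r$, so both $z$ and $t$ lie on the intersection of the circle of radius $r$ centred at the origin with the vertical line $\RE=\RE(z)$. A circle and a line meet in at most two points, here symmetric about the real axis, so $\{z,t\}\subseteq\{c+i\sqrt{r^2-c^2},\,c-i\sqrt{r^2-c^2}\}$ with $c=\RE(z)$; hence $z=t$ or $z=\bar t$. The converse is immediate: $z=t$ trivially gives equal real parts, and if $z=\bar t$ then $\RE(z)=\RE(t)$, while $\bar t$ is indeed again a zero of $Q_k$ because $Q_k$ has real coefficients.

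The only computational step is the monotonicity of $g$, which is routine; the single genuinely new idea is the identity $\RE(z)=g(|z|)$, obtained from $|z|^{k-1}|z-1|=1$, and the rest is elementary. No serious obstacle is expected.
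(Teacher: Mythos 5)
Your proof is correct. It rests on the same key identity as the paper's proof --- taking moduli in $(z-1)z^{k-1}=1$ --- but you organize it differently: where the paper compares two zeros pairwise (writing $z=a+ib$, $t=c+id$, getting $|z-1|=|t-1|$ from $|z|=|t|$, and subtracting the two Cartesian equations to conclude $a=c$, with the converse implications recovered by trichotomy of the order on $\mathbb{R}$), you solve the identity once and for all to exhibit an explicit functional dependence $\RE(z)=g(|z|)$ with $g(r)=\tfrac12\bigl(r^2+1-r^{-2(k-1)}\bigr)$, and then check $g'(r)=r+(k-1)\,r^{-(2k-1)}>0$. This buys you all four implications (both directions of the equality and of the inequality statement) in one stroke, with no case analysis and no appeal to trichotomy; the cost is a small calculus computation that the paper avoids, its argument being purely algebraic. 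Your circle-meets-line argument for the ``equal or conjugate'' clause is also slightly more geometric than the paper's $b=\pm d$ computation, but amounts to the same thing. Both proofs are complete; yours is arguably the cleaner packaging, since the monotone function $g$ makes transparent \emph{why} the real part and the modulus order the zeros identically.
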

\begin{proof}
Assume $|z|=|t|$.
\[
|z-1|.|z|^{k-1} = |(z-1)z^{k-1}| = 1 = |(t-1)t^{k-1}| = |t-1|.|t|^{k-1}.
\]
This simplifies into $|z-1|=|t-1|$ here,
since $0$ is not a zero of $Q_k$ and hence $|z|=|t|\neq 0$.
Let us write $z=a{+}ib$ and $t=c{+}id$.
We deduce $a^2+b^2=c^2+d^2$ and $(a-1)^2+b^2=(c-1)^2+d^2$.
Developing and subtracting these two equations leads to $a=c$, i.e.,
$\RE(z)=\RE(t)$, then $b=\pm d$, i.e., $t=z$ or $t=\overline{z}$.
Similarly, when $|z|<|t|$ we obtain $|z|^2<|t|^2$  and
$|t-1|^2<|z-1|^2$ and then $a<c$, i.e., $\RE(z)<\RE(t)$.
The reciprocal statements are obtained by totality of the order on
$\mathbb{R}$.
\end{proof}

\begin{definition}
For $k\ge 1$, let us name $r_{k,0} \cdots r_{k,k-1}$ the $k$ complex
zeros of $Q_k$, in decreasing lexicographic order of their Cartesian
coordinates.
\end{definition}

Thanks to the previous properties, this choice of ordering implies
that this sequence $r_{k,i}$ of zeros have decreasing modulus, with
equality of modulus exactly when a non-real zero is followed by its
complex conjugate. More precisely:
\begin{itemize}
\item the positive zero comes first: $r_{k,0} = \beta_k$;
\item the possible negative zero comes last:
  $r_{k,k-1}\in\mathbb{R}_{-}$ when $k$ is even;
\item the pairs of complex conjugate zeros come in between:
for $1< 2p < k$, we have $r_{k,2p} = \overline{r_{k,2p-1}}$.
\end{itemize}

Let us state some algebraic equations between these zeros.

\begin{proposition}
\label{p:sumprod}
For $k>1$, the product of the $r_{k,i}$ numbers is $(-1)^{k-1}$ while their
sum is 1, as well as their first Newton sums:
\[
\sum_{i=0}^{k{-}1} r_{k,i}^p = 1 \qquad \text{for}\ 1\le p < k.
\]
\end{proposition}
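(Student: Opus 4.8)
The plan is to read off the two extreme symmetric functions of the $r_{k,i}$ directly from the coefficients of $Q_k$, and then obtain the intermediate Newton sums by Newton's identities. Throughout, write $Q_k(X) = X^k - X^{k-1} - 1 = \prod_{i=0}^{k-1}(X - r_{k,i})$ and denote by $e_1,\dots,e_k$ the elementary symmetric functions of the roots and by $p_m = \sum_{i=0}^{k-1} r_{k,i}^m$ the $m$-th Newton sum.

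First I would compare $Q_k$ with the generic monic form $X^k - e_1 X^{k-1} + e_2 X^{k-2} - \cdots + (-1)^k e_k$. This yields $e_1 = 1$, then $e_2 = \cdots = e_{k-1} = 0$, and finally $e_k = (-1)^{k-1}$ from the constant term. The last two of these are already the product claim $\prod_i r_{k,i} = (-1)^{k-1}$ and, together with the $m=1$ case below, the sum claim $\sum_i r_{k,i} = 1$.

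Next I would invoke Newton's identities: for $1 \le m \le k$, $p_m - e_1 p_{m-1} + e_2 p_{m-2} - \cdots + (-1)^{m-1} e_{m-1} p_1 + (-1)^m m\, e_m = 0$. The case $m=1$ gives $p_1 = e_1 = 1$. For every $m$ with $2 \le m \le k-1$, all of $e_2,\dots,e_{m-1}$ vanish and $e_m = 0$ as well, so the identity collapses to $p_m = e_1 p_{m-1} = p_{m-1}$. An immediate induction then gives $p_m = 1$ for all $1 \le m < k$, which is exactly the asserted Newton-sum identity.

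The one point deserving care is that the naive recurrence coming from the defining relation $r_{k,i}^m = r_{k,i}^{m-1} + r_{k,i}^{m-k}$, namely $p_m = p_{m-1} + p_{m-k}$, is valid only for $m \ge k$ and cannot be used in the target range $1 \le m < k$, since it would reference a negative power index; it is precisely this low-order boundary that Newton's identities resolve correctly. Equivalently, one may package the whole computation into the logarithmic-derivative generating function $\sum_{m\ge 1} p_m X^m = -X\, f'(X)/f(X)$ with $f(X) = \prod_i(1 - r_{k,i} X) = X^k Q_k(1/X) = 1 - X - X^k$; expanding $(X + kX^k)/(1 - X - X^k)$ and reading off the coefficient of $X^m$ for $1 \le m < k$, where the $X^k$ contributions are inert, gives $p_m = 1$ directly.
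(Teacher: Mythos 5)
Your proof is correct and follows essentially the same route as the paper: Vieta's formulas read off $e_1=1$, $e_2=\cdots=e_{k-1}=0$, $e_k=(-1)^{k-1}$ from the coefficients of $Q_k$, and Newton's identities then collapse to $p_m=p_{m-1}$ in the range $2\le m\le k-1$, giving all power sums equal to $1$. Your closing remarks (the warning about the recurrence $p_m=p_{m-1}+p_{m-k}$ being invalid for $m<k$, and the logarithmic-derivative generating function) are correct additions but not needed; the core argument matches the paper's.
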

\begin{proof}
Since the $r_{k,i}$ numbers are the zeros of the polynomial
$Q_k(X) = X^k-X^{k-1}-1$, the Vieta's formulas directly give
their sum and product from the coefficients of $Q_k$, moreover
all the other elementary symmetric polynomials $\sigma_p$ of these roots
are null for $1<p<k$. Thanks to the Newton identities,
this implies that the Newton sums $\sum_i r_{k,i}^p$ are always $1$
for $1\le p < k$.
\end{proof}

For studying in Section~\ref{s:distF} the maximal distance between
$F_k$ and its linear equivalent $\alpha_k\,n$,
it will be essential to know whether the secondary zeros of $Q_k$
may reach or even exceed 1 in modulus (by secondary, we mean different
from the dominant zero $\beta_k$).

\begin{proposition}\label{p:secondary}
Let $k\ge 2$.
The secondary zeros of $Q_k$ of maximal modulus are $r_{k,1}$
and $r_{k,2} = \overline{r_{k,1}}$ (or solely $r_{k,1}$ when $k=2$).
Moreover the modulus $|r_{k,1}|$ is as follow:
\begin{itemize}
\item $|r_{k,1}|<1$ when $k\in\{2,3,4\}$;
\item $|r_{k,1}|=1$ when $k=5$;
\item $|r_{k,1}|>1$ when $k\ge 6$.
\end{itemize}
\end{proposition}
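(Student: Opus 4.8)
The statement asks me to locate the secondary zeros of $Q_k(X)=X^k-X^{k-1}-1$ of maximal modulus and to pin down their modulus relative to $1$ as $k$ ranges over the integers $\ge 2$. Let me think about how to attack this.

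The claim has three logically distinct parts: (a) that the secondary zeros of maximal modulus are precisely $r_{k,1}$ and its conjugate $r_{k,2}$; (b) the explicit thresholds $|r_{k,1}|<1$ for $k\in\{2,3,4\}$, $=1$ for $k=5$, $>1$ for $k\ge 6$.
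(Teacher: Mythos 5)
Your proposal stops before any proof begins: after restating the statement and separating it into parts (a) and (b), no argument is given for either part. There is no mathematical content to evaluate --- no bound on the zeros, no factorization, no case analysis --- so this is not a gap in a proof but the absence of one.

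To indicate what is actually needed: part (a) is essentially a bookkeeping consequence of the paper's ordering convention for the zeros $r_{k,i}$ (decreasing modulus, with each non-real zero followed by its conjugate), combined with the earlier fact that $\beta_k = r_{k,0}$ strictly dominates all other zeros in modulus. Part (b) is the substantive work, and the paper handles it in three genuinely different ways. For $k\in\{2,3,4\}$ one invokes that $\beta_2=\varphi$, $\beta_3$, $\beta_4$ are Pisot numbers whose minimal polynomials are exactly $Q_2,Q_3,Q_4$, so all conjugates lie strictly inside the unit circle. For $k=5$ one uses the explicit factorization $Q_5=(X^2-X+1)(X^3-X-1)$: the quadratic factor contributes the zeros $e^{\pm i\pi/3}$ of modulus exactly $1$, while $X^3-X-1$ has the plastic ratio $\beta_5$ as its real zero and two complex zeros of modulus $<1$. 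For $k\ge 6$ one argues by trichotomy on the largest modulus among the conjugates of $\beta_k$ (zeros of its minimal polynomial $M_k$, which divides $Q_k$): it cannot be $<1$, since then $\beta_k$ would be a Pisot number smaller than $\beta_5$, contradicting Siegel's theorem that the plastic ratio is the smallest Pisot number; it cannot be $=1$, since then $\beta_k$ would be a Salem number, whose minimal polynomial is self-reciprocal and would force a zero of $Q_k$ in $[0,1]$, which does not exist; hence it is $>1$, giving $|r_{k,1}|>1$. None of these ingredients --- the Pisot facts, the factorization of $Q_5$, or the Siegel/Salem trichotomy --- appears in your text, and without them the statement is unproved.
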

\begin{proof}
Our choice of ordering for the zeros $r_{k,i}$ implies that
the secondary zeros of maximal modulus are $r_{k,1}$ and its
conjugate $r_{k,2}$ (or solely $r_{k,1}$ when $k=2$).

First, $\beta_2=\varphi\approx 1.618$ and $\beta_3\approx 1.465$ and
$\beta_4 \approx 1.380$ are
well-known
Pisot numbers~\cite{DufresnoyPisot55} and $Q_2$, $Q_3$, $Q_4$
are their minimal polynomials, hence their other zeros
are all of modulus strictly below 1.

Now, the case $k=5$ is quite
particular, since $Q_5$ can be factorized as $(X^2-X+1)(X^3-X-1)$.
On one side, the zeros of $X^2-X+1$ are
$\tfrac{1+i\sqrt 3}{2} = e^{i\pi/3}$ and its conjugate,
both of modulus 1.
On the other side, $\beta_5$ is the real zero of $X^3-X-1$ and it is
hence the well-known Plastic Ratio $\approx 1.324718$, which is the
smallest Pisot number, as shown by Siegel~\cite{Siegel1944}.
The two remaining complex zeros of
$X^3-X-1$ are of modulus $<1$. Finally $r_{k,1}=e^{i\pi/3}$.

For $k\ge 6$, let us name $M_k$ the minimal polynomial of $\beta_k$.
This polynomial $M_k$ is monic, it has integer coefficients and it
divides $Q_k$, so the zeros of $M_k$ are also zeros of $Q_k$.
The degree of $M_k$ is at least 2, otherwise $\beta_k$ would be an integer.
So we can consider the largest modulus among the zeros of
$M_k$ distinct from $\beta_k$ and compare this modulus with 1.
\begin{itemize}
\item If this modulus is strictly less than 1, then $\beta_k$ is a
  Pisot number, but this case is impossible here, since $\beta_k$ is
  strictly less than the smallest Pisot $\beta_5$, see
  Siegel~\cite{Siegel1944}.
\item If this modulus is exactly 1, then $\beta_k$ is a Salem
  number~\cite{salem1963algebraic}, but this case is also impossible
  here. For instance, a Salem number is known to also have its
  inverse as zero of its minimal polynomial (which is self-reciprocal)
  while here neither $Q_k$ nor $M_k$ have zeros in $[0,1]$.
\item Finally, this modulus can only be strictly more that 1. This
  proves the existence of a zero of $M_k$ distinct from $\beta_k$ with
  modulus $>1$,
  and this zero is also a zero of $Q_k$. Considering the ordering of
  zeros of $Q_k$ chosen above, we can conclude $|r_{k,1}|>1$.
\end{itemize}
\end{proof}

As a complement to the previous proof, it is easy to show that $Q_k$
admits a zero $z$ of modulus 1 only when
$|z-1|=|(z-1).z^{k-1}|=1$, and having both $|z|=1$ and $|z-1|=1$ can
only occur for $z=e^{\pm i\pi/3}$, the zeros of $X^2-X+1$ already
encountered for $Q_5$.
In this case, $z-1 = z^2$ and the equation $Q_k(z)=0$ amounts to
$z^{k+1}=1$ hence $e^{\pm i(k+1)\pi/3}=1$ hence $k = \modd{5}{6}$.
When $k = \modd{5}{6}$, there exists a polynomial
$R_k\in\mathbb{Z}[X]$ such that $Q_k=(X^2-X+1)R_k(X)$. We have
already seen $R_5=X^3-X-1$ above; the other $R_k$ can be obtained via the
recursive equation $R_{k+6}=R_k+(X^5-X^3-X^2+1)X^{k-1}$.
Actually, this is the only possible factorization of $Q_k$ over
$\mathbb{Z}[X]$, or equivalently over $\mathbb{Q}[X]$. Indeed,
the work of Selmer~\cite{Selmer56} implies that
when $k\neq \modd{5}{6}$ the polynomial $Q_k$ is irreducible over
$\mathbb{Z}[X]$. This also
holds for the reciprocal polynomial $P_k$.
For a simpler proof, see Conrad~\cite{Conrad}. In this case, $Q_k$ is
hence the minimal polynomial of $\beta_k$.
This work of Selmer~\cite{Selmer56}
also implies that for $k=\modd{5}{6}$, the factor $R_k$ of $Q_k$ is always
irreducible over $\mathbb{Z}[X]$, and is hence the minimal polynomial
of $\beta_k$ in this case.

\smallskip

\section{Algebraic expressions of Fibonacci-like sequences}
\label{s:Aalg}

For the next section, it is crucial to provide precise algebraic
expressions for Fibonacci-like sequences $(\A{k}{n})$. In particular
we will distinguish here the main exponential behavior of $(\A{k}{n})$
from residual parts.

Recall from section~\ref{s:poly} that $Q_k(X)=X^k-X^{k-1}-1$ and
$\beta_k\in(1,2]$ is the unique positive zero of $Q_k$, as well as its
dominant zero. Also recall that the $k$ complex zeros $r_{k,i}$
of $Q_k$ are simple and hence distinct. In our notation, $r_{k,0} = \beta_k$.

\begin{proposition}\label{p:betaA}
For all $k\ge 1$ and $n\ge 0$, $\beta_k^n \le \A{k}{n}$
\end{proposition}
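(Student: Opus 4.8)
The plan is to prove the inequality by strong induction on $n$, exploiting that the sequence $(\beta_k^n)_n$ obeys the very same linear recurrence as $(\A{k}{n})_n$. Indeed, since $Q_k(\beta_k)=0$ we have $\beta_k^k=\beta_k^{k-1}+1$, and multiplying by $\beta_k^{n-k}$ gives $\beta_k^n=\beta_k^{n-1}+\beta_k^{n-k}$ for every $n\ge k$, exactly matching the rule $\A{k}{n}=\A{k}{n-1}+\A{k}{n-k}$. Consequently the inductive step is immediate: for $n\ge k$, the induction hypotheses $\beta_k^{n-1}\le\A{k}{n-1}$ and $\beta_k^{n-k}\le\A{k}{n-k}$ (both indices lie in $[0,n)$) add up to the desired bound.

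All the real work is therefore confined to the base cases $0\le n<k$, where $\A{k}{n}=n+1$ and I must show $\beta_k^n\le n+1$. The case $n=0$ is the trivial equality $1\le 1$. For the general base case I would first establish the single scalar estimate
\[
\beta_k^{k-1}\le k.
\]
This follows from Proposition~\ref{p:alpha-incr}, which supplies $\beta_k\ge 1+\tfrac1k$, combined with the identity $\beta_k^{k-1}(\beta_k-1)=1$ (a rewriting of $\beta_k^k=\beta_k^{k-1}+1$): indeed $\beta_k^{k-1}=1/(\beta_k-1)\le 1/(1/k)=k$.

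With this estimate in hand, I would close the base cases by convexity. The function $x\mapsto\beta_k^x=e^{x\ln\beta_k}$ is convex, so on the interval $[0,k-1]$ it lies below the chord joining $(0,1)$ and $(k-1,\beta_k^{k-1})$, namely $x\mapsto 1+\tfrac{x}{k-1}(\beta_k^{k-1}-1)$. The bound $\beta_k^{k-1}\le k$ forces this chord to have slope at most $1$, so it stays below the line $x\mapsto x+1$; evaluating at each integer $0\le n\le k-1$ then yields $\beta_k^n\le n+1=\A{k}{n}$. The degenerate case $k=1$ needs only $n=0$, which is the equality already treated.

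The main obstacle is precisely the base-case estimate $\beta_k^{k-1}\le k$: it is the one place where an arithmetic property of $\beta_k$ (its lower bound from Proposition~\ref{p:alpha-incr}) must be fed in. Once that pointwise bound is secured, the convexity interpolation disposes of the remaining low indices for free, and the shared linear recurrence propagates the inequality to all $n\ge k$ with no further effort.
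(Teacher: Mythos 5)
Your proof is correct, and its inductive skeleton (strong induction driven by the shared recurrence $\beta_k^n=\beta_k^{n-1}+\beta_k^{n-k}$ for $n\ge k$) matches the paper's; where you genuinely diverge is in the treatment of the low indices $0\le n<k$. The paper avoids separate base cases entirely: it observes that the identity $\beta_k^n=\beta_k^{n-1}+\beta_k^{n-k}$ holds for \emph{every} $n$ (real exponents are allowed), and that when $n<k$ the exponent $n-k$ is negative, so $\beta_k^{n-k}\le \beta_k^{n\ominus k}=1$ since $\beta_k>1$. This aligns term-by-term with the generalized recurrence~\eqref{Agenrec}, $\A{k}{n}=\A{k}{n-1}+\A{k}{n\ominus k}$, valid for all $n\ge 1$, so a single uniform induction step suffices and the only base case is $n=0$. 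You instead do real work on the base cases: the scalar bound $\beta_k^{k-1}=1/(\beta_k-1)\le k$ (which imports the lower bound $\beta_k\ge 1+\tfrac1k$ from Proposition~\ref{p:alpha-incr}) followed by a convexity/chord interpolation to get $\beta_k^n\le n+1$ for all integers $0\le n\le k-1$. Both routes are sound; the paper's trick is more self-contained (it needs only $Q_k(\beta_k)=0$ and $\beta_k>1$) and is essentially one line, which also explains why it formalizes cheaply in Coq, whereas your argument buys a nice explicit quantitative estimate ($\beta_k^{k-1}\le k$) at the cost of an external ingredient and some real-analytic machinery that the statement does not strictly require.
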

\begin{proof}
We proceed by strong induction over $n$.
Let $n\ge 0$ and assume $\beta_k^m \le \A{k}{m}$ for all $0\le m<n$.
Since $Q_k(\beta_k) = 0$, we always have:
\[
\beta_k^n = \beta_k^{n-k}.\beta_k^k = \beta_k^{n-k}.(\beta_k^{k-1}+1)
        = \beta_k^{n-1} + \beta_k^{n-k}.
\]
If $n=0$ then $\beta_k^0 = 1 = \A{k}{0}$. Assume now $n\neq 0$.
Note that $\beta_k^{n-k} \le \beta_k^{n\ominus k}$. Indeed,
the exponents are either equal (when $n\ge k$) or negative on the left
and $0$ on the right (when $n < k$). So:
\[
\beta_k^n = \beta_k^{n-1} + \beta_k^{n-k}
        \le \beta_k^{n-1} + \beta_k^{n\ominus k}
        \le \A{k}{n-1} + \A{k}{n\ominus k} = \A{k}{n}
\]
by induction hypothesis in $n{-}1$ and $n{\ominus}k$ and
Equation~\eqref{Agenrec}.
\end{proof}

\begin{proposition}\label{p:Alincomb}
Let us name $c_{k,i}$ the complex quantity
$r_{k,i}^k\,(k\,r_{k,i}{-}(k{-}1))^{-1}$.
All these coefficients $c_{k,i}$ differ from $0$ and:
\begin{equation}
\label{Alincomb}
   \A{k}{n} = \sum_{i=0}^{k-1} c_{k,i}\,r_{k,i}^n \qquad \text{
     for all}\ n\ge 0.
\end{equation}
\end{proposition}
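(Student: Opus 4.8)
The goal is to prove that the Fibonacci-like sequence $\A{k}{n}$ has the closed-form linear combination $\sum_{i} c_{k,i} r_{k,i}^n$ where $c_{k,i} = r_{k,i}^k/(k r_{k,i} - (k-1))$. This is a standard "solve the linear recurrence via its characteristic roots" argument, but two things need care: first, that the recurrence $\A{k}{n}$ genuinely satisfies the characteristic polynomial $Q_k$; second, pinning down the exact constants $c_{k,i}$ and verifying they are nonzero.

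Key steps. The sequence must satisfy the characteristic polynomial. And the coefficients come from initial conditions.

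Let me think about proving this.\emph{The plan.} Since the $k$ zeros $r_{k,i}$ of $Q_k$ are simple and distinct (recalled at the start of this section), the general solution space of the linear recurrence $u_n = u_{n-1} + u_{n-k}$ is exactly the $k$-dimensional space spanned by the sequences $(r_{k,i}^n)_n$. So there must exist complex constants $c_{k,i}$ for which $\A{k}{n} = \sum_{i=0}^{k-1} c_{k,i}\,r_{k,i}^n$ holds for all $n\ge 0$; the whole content of the proposition is to identify these constants explicitly and to check none of them vanishes. First I would confirm that $(\A{k}{n})$ obeys the homogeneous recurrence $\A{k}{n} = \A{k}{n-1}+\A{k}{n-k}$ for $n\ge k$, which is immediate from the definition, so that the generic theory of linear recurrences applies.

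\emph{Pinning down $c_{k,i}$.} The constants are determined by matching the $k$ initial values $\A{k}{0},\dots,\A{k}{k-1}$, i.e. by solving a Vandermonde-type linear system. Rather than inverting this system directly, the cleaner route is to use the generating function: since $\sum_{n\ge 0}\A{k}{n}X^n$ is a rational function whose denominator is (a reversal of) $Q_k$, a partial-fraction decomposition over the simple poles $1/r_{k,i}$ yields each $c_{k,i}$ as a residue, namely $c_{k,i} = r_{k,i}^k/Q_k'(r_{k,i})$ up to bookkeeping. Computing $Q_k'(X) = kX^{k-1}-(k-1)X^{k-2} = X^{k-2}(kX-(k-1))$ and evaluating at $r_{k,i}$, then using $r_{k,i}^k = r_{k,i}^{k-1}+1$ (from $Q_k(r_{k,i})=0$) to simplify the power of $r_{k,i}$, should collapse the residue to exactly $r_{k,i}^k/(k\,r_{k,i}-(k-1))$, matching the claimed formula for $c_{k,i}$.

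\emph{Nonvanishing.} To see $c_{k,i}\neq 0$, note that the numerator $r_{k,i}^k$ is nonzero because $0$ is not a zero of $Q_k$ (indeed $Q_k(0)=-1$), while the denominator $k\,r_{k,i}-(k-1) = r_{k,i}^{2-k}\,Q_k'(r_{k,i})$ is nonzero precisely because the zeros of $Q_k$ are simple, so $Q_k'(r_{k,i})\neq 0$. Hence every $c_{k,i}$ is a ratio of two nonzero quantities.

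\emph{Main obstacle.} The delicate point is not the existence of \emph{some} constants but the verification that the particular closed form $r_{k,i}^k/(k r_{k,i}-(k-1))$ is correct, i.e. that it genuinely reproduces all the initial values $\A{k}{m}=m+1$ for $0\le m<k$. The residue computation handles this in principle, but one must be careful about the exact indexing of the generating function and about the simplification of powers of $r_{k,i}$ via the relation $r_{k,i}^k = r_{k,i}^{k-1}+1$; an equally valid alternative is to verify the formula directly by substituting it into both the recurrence and the $k$ initial conditions and checking equality, where the initial-condition check reduces, after summing the geometric-like expressions, to the Newton-sum identities $\sum_i r_{k,i}^p = 1$ for $1\le p<k$ from Proposition~\ref{p:sumprod}. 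This second route makes the role of Proposition~\ref{p:sumprod} explicit and is probably the cleanest way to avoid residue-bookkeeping errors.
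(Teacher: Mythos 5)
Your overall strategy is viable---the solution space of the recurrence is spanned by the $(r_{k,i}^n)$ since the roots are simple, and the constants can indeed be extracted from the generating function $\sum_n \A{k}{n}X^n = (1+X+\cdots+X^{k-1})/(1-X-X^k)$ by partial fractions---but the identification step, which you yourself call the whole content of the proposition, is wrong as written. Since $Q_k'(X) = X^{k-2}\,(kX-(k-1))$, your claimed residue $r_{k,i}^k/Q_k'(r_{k,i})$ equals $r_{k,i}^2/(k\,r_{k,i}-(k-1))$, not $r_{k,i}^k/(k\,r_{k,i}-(k-1))$; the two differ by the factor $r_{k,i}^{k-2}$ and agree only for $k=2$, and no use of the relation $r_{k,i}^k = r_{k,i}^{k-1}+1$ can repair a multiplicative power discrepancy. (It even contradicts your own identity $k\,r_{k,i}-(k-1) = r_{k,i}^{2-k}\,Q_k'(r_{k,i})$, which forces the numerator $r_{k,i}^{2k-2}$, not $r_{k,i}^k$.) The correct bookkeeping is this: $1/Q_k'(r_{k,i})$ is the coefficient attached to the ``impulse'' solution with initial values $0,\dots,0,1$, and that solution is the shift of $(\A{k}{n})$ by $2k-2$ steps---this is exactly how the paper proceeds, introducing $\tilde{A}_{k,n} = \A{k}{n-2k+2}$ with $\tilde{c}_{k,i} = 1/Q_k'(r_{k,i})$, whence $c_{k,i} = r_{k,i}^{2k-2}/Q_k'(r_{k,i})$. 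Alternatively, carrying your partial fractions through honestly yields $c_{k,i} = r_{k,i}^k/\bigl((r_{k,i}-1)(k+r_{k,i}^{k-1})\bigr)$, and then $r_{k,i}^{k-1}(r_{k,i}-1)=1$ collapses the denominator to $k\,r_{k,i}-(k-1)$. So the route is repairable, but as sketched it derives a formula that is false for $k\ge 3$.

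Your fallback route contains the same gap in disguise. Verifying the initial conditions does not reduce to the power sums $\sum_i r_{k,i}^p = 1$ of Proposition~\ref{p:sumprod}, because $c_{k,i}$ is not a fixed polynomial expression in $r_{k,i}$: it contains the factor $1/(k\,r_{k,i}-(k-1))$, which you would first have to invert modulo $Q_k$ before any power-sum identity applies. What the check actually requires are the Lagrange--Euler identities $\sum_i r_{k,i}^m/Q_k'(r_{k,i}) = 0$ for $0\le m\le k-2$ and ${}=1$ for $m=k-1$; establishing these (via an explicit interpolation polynomial applied to the shifted sequence) is precisely the substance of the paper's proof, and it is the ingredient missing from both of your routes. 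The one part of your sketch that stands as is, is the nonvanishing argument: $r_{k,i}\neq 0$ because $Q_k(0)=-1$, and $Q_k'(r_{k,i})\neq 0$ by simplicity of the roots, so every $c_{k,i}\neq 0$.
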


\begin{proof}
See Dilcher~\cite[Eq.~(2.8)]{Dilcher1993}.
We present here a standalone proof based on Lagrange interpolation
polynomials.

Note first that all real roots of $Q_k$ are either higher than 1 or
strictly negative, hence $r_{k,i}\neq 0$ and
$(k\,r_{k,i}{-}(k{-}1))\neq 0$
hence $c_{k,i}$ is a finite complex different from $0$.
The case $k=1$ is obvious since $r_{1,0}=2$ and $c_{1,0}=1$ and $A_{1,n}=2^n$.
We now assume $k>1$ for the rest of this proof.
Since $k>1$, the $A_{k,n}$ sequence can be extended downward on all
negative indices $n$, by turning the recursive rule of Definition~\ref{Agenrec}
into $A_{k,n} = A_{k,n+k}-A_{k,n+k-1}$. In particular $A_{k,n} = 1$
when $-(k-1) \le n \le 0$ and $A_{k,n} = 0$ when $-(2k{-}2)\le n \le -k$.
Here, it is convenient to consider a shifted
version of $A_{k,n}$, namely $\tilde{A}_{k,n} := \A{k}{(n-2k+2)}$.
This sequence follows the same recursive rule as $A_{k,n}$, but its
$k$ first values are $0, \ldots, 0, 1$.
The corresponding coefficients are now
$\tilde{c}_{k,i} := c_{k,i}/r_{k,i}^{2k-2}$.
So let us abbreviate
$\Sigma_n := \sum_{i=0}^{k-1} \tilde{c}_{k,i}\, r_{k,i}^n $
and show that $\tilde{A}_{k,n} = \Sigma_n$ for all $n\ge 0$.
Since $r_{k,i}^k = r_{k,i}^{k-1}+1$,
these sums $\Sigma_n$ also follow the recursive rule
$\Sigma_n = \Sigma_{n-1}+\Sigma_{n-k}$, here for all $n\in\mathbb{Z}$.
So it suffices to show $\tilde{A}_{k,n} = \Sigma_n$ for the initial values
$0\le n \le k-1$, i.e., establish $\Sigma_n = 0$ for $0 \le n < k-1$
and $\Sigma_{k-1} = 1$.

For this purpose, we consider the following interpolation polynomial
for the zeros $r_{k,i}$ (which are already known to be all distinct):
\[
\mathcal{L}(X) = -1 + \sum_{i=0}^{k-1}
\prod\limits_{\substack {0\le j< k \\ j\ne i}}
\frac{X - r_{k,j}}{r_{k,i} - r_{k,j}}.
\]
This polynomial has a degree strictly less than $k$ while admitting the
$k$ distinct numbers $r_{k,i}$ as zeros. It is hence the null polynomial.
Let us reformulate now $\mathcal{L}(X)$ in terms of sums
$\Sigma_n$. First, for a given $i<k$, the numerator inside $\mathcal{L}$
satisfies
\[
\prod_{j\neq i} (X - r_{k,j}) = X^{k-1} + \sum_{m=0}^{k-2}
r_{k,i}^{-m-1} X^m.
\]
Indeed, multiplying by $(X - r_{k,i})$ on both sides yields
$Q_k(X) = X^k{-}X^{k-1}{-}1$ (note that $r_{k,i}^{-(k-1)} = r_{k,i}-1$
since $r_{k,i}$ is a zero of $Q_k$).
Secondly, by differentiating the polynomial
$Q_k(X) = \prod_i (X-r_{k,i})$ and evaluating the
result at each $r_{k,i}$, we can express the denominators in
$\mathcal{L}$ as
\[
\prod_{j\neq i} (r_{k,i} - r_{k,j}) = Q'_k(r_{k,i}) =
 k\,r_{k,i}^{k-1} - (k{-}1) r_{k,i}^{k-2} = 1 / \tilde{c}_{k,i}
\]
(alternatively, this last equation can also be obtained by instantiating
$X=r_{k,i}$ in the previous equation).
Finally, after substituting these equations in $\mathcal{L}$ and
regrouping the sums, we obtain
\[
\mathcal{L}(X) =
  \Sigma_0 X^{k-1} + \left(\sum_{m=1}^{k-2} \Sigma_{(-m-1)} X^m \right) +
  \Sigma_{(-1)} -1
\]
Since $\mathcal{L}(X) = 0$, all the coefficients of $\mathcal{L}$
are null, hence:
\begin{itemize}
\item $\Sigma_0 = 0$
\item For $1\le m \le k{-}2$,
  $\Sigma_{(-m-1)} = 0 = \Sigma_{k-m-1} - \Sigma_{k-m-2}$.
  These equations, combined with $\Sigma_0 = 0$, allow to
  successively obtain $\Sigma_n = 0$ for all $0\le n \le k{-}2$.
\item $\Sigma_{(-1)} = 1 = \Sigma_{k-1} - \Sigma_{k-2}$.
  Since we just proved $\Sigma_{k-2} = 0$, then $\Sigma_{k-1} = 1$.
\end{itemize}
This allows to conclude $\tilde{A}_{k,n} = \Sigma_{n}$ for all $n\ge 0$,
since these sequences have the same $k$ first values and the same
recursive rule. Actually, this equality even holds for all $n\in\mathbb{Z}$,
as a downward induction can handle the negative $n$, but we will not
use this fact. Finally, Equation~\eqref{Alincomb} is deduced by
considering $A_{k,n} = \tilde{A}_{k,(n+2k-2)} = \Sigma_{(n+2k-2)}$.
\end{proof}

Alternatively, the previous proof can also be done by inverting
the Vandermonde matrix of the roots, and multiply this inverse by the
vector of initial values of $(A_{k,n})$.
Since the general expression of this Vandermonde inverse is pretty
huge, it is crucial here again to shift $(A_{k,n})$ downward:
the initial values $0, \ldots, 0, 1$ of $(\tilde{A}_{k,n})$ allow to
use only
the last line of the Vandermonde inverse, which is far more tractable,
and yields the same expression as above.

Note that for $k=2$, this shifted sequence $(\tilde{A}_{2,n})$
is actually the usual definition of the Fibonacci
numbers, with initial values $0$ $1$, and the formula
$\tilde{A}_{2,n} = \sum_i \tilde{c}_{2,i}\,r_{2,i}^n$ is here the well-known
Binet formula, since here $r_{2,0} = \beta_2 = \varphi$ and
$r_{2,1} = 1-\varphi = \tfrac{-1}{\varphi} $ and
$\tilde{c}_{2,i}=\tfrac{\pm 1}{\sqrt{5}}$.
The Equation~\eqref{Alincomb} for $\A{2}{n}$ is less elegant, with
$c_{2,i} = \tfrac{1}{10}(5\pm 3\sqrt{5})$, but the choice of a
sequence without $0$ nor duplicated $1$ helps in
section~\ref{s:numrep}.

\begin{corollary}
$c_{k,0}$ is a real greater or equal to $1$ and
\[
\lim_{n\to\infty} \frac{\A{k}{n}}{\beta_k^n} = c_{k,0}
\quad
\text{and}
\quad
\lim_{n\to\infty} \frac{\A{k}{n+1}}{\A{k}{n}} = \beta_k.
\]
\end{corollary}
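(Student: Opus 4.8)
The plan is to read everything off directly from the closed form in Proposition~\ref{p:Alincomb}, using only the strict dominance of $\beta_k$ among the zeros of $Q_k$. First I would divide Equation~\eqref{Alincomb} by $\beta_k^n = r_{k,0}^n$ to get, for $k\ge 1$ and $n\ge 0$,
\[
\frac{\A{k}{n}}{\beta_k^n} = c_{k,0} + \sum_{i=1}^{k-1} c_{k,i}\left(\frac{r_{k,i}}{\beta_k}\right)^{\!n}.
\]
By Proposition~\ref{p:betamax}, every secondary zero satisfies $|r_{k,i}| < \beta_k$ for $1\le i \le k-1$, so each ratio $r_{k,i}/\beta_k$ has modulus strictly below $1$ and hence $(r_{k,i}/\beta_k)^n \to 0$ as $n\to\infty$. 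As there are only finitely many terms, the whole sum tends to $0$, so $\A{k}{n}/\beta_k^n \to c_{k,0}$. (For $k=1$ the sum is empty and the ratio equals $c_{1,0}=1$ identically.)

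Next I would pin down the value of $c_{k,0}$. Each quantity $\A{k}{n}/\beta_k^n$ is a real number, and by Proposition~\ref{p:betaA} it is bounded below by $1$; since the limit of a sequence of reals that are all $\ge 1$ is itself a real number $\ge 1$, the convergence just established gives at once that $c_{k,0}$ is real and $c_{k,0}\ge 1$. (Equivalently, one can check directly from the definition that $c_{k,0} = \beta_k^k/\big(k(\beta_k{-}1)+1\big)$ is a positive real, since $\beta_k>1$ by Proposition~\ref{p:alpha-incr}.)

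Finally, for the ratio of consecutive terms I would factor out $\beta_k$ and use the first limit twice:
\[
\frac{\A{k}{n+1}}{\A{k}{n}} = \beta_k\cdot\frac{\A{k}{n+1}/\beta_k^{n+1}}{\A{k}{n}/\beta_k^n}.
\]
Both numerator and denominator of the fraction converge to $c_{k,0}$, which is nonzero (indeed $\ge 1$), so the quotient tends to $1$ and the whole expression to $\beta_k$.

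There is no genuine obstacle here: the entire argument rests on the strict dominance $|r_{k,i}| < \beta_k$ for $i\ge 1$, which is exactly Proposition~\ref{p:betamax}, together with the lower bound of Proposition~\ref{p:betaA}. The only point deserving a moment's care is that for $k\ge 6$ the secondary zeros may themselves have modulus exceeding $1$; this is harmless, since what matters is only their modulus \emph{relative} to $\beta_k$, and that ratio stays strictly below $1$ regardless.
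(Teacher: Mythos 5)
Your proposal is correct and follows essentially the same route as the paper's own proof: both deduce the first limit from Equation~\eqref{Alincomb} together with the strict dominance $|r_{k,i}|<\beta_k$ of Proposition~\ref{p:betamax}, both obtain $c_{k,0}\ge 1$ as a limit of real numbers that are at least $1$ by Proposition~\ref{p:betaA}, and both get the ratio limit by writing $\A{k}{n+1}/\A{k}{n} = \beta_k\cdot(\A{k}{n+1}/\beta_k^{n+1})\cdot(\beta_k^n/\A{k}{n})$. Your added remarks (the empty sum when $k=1$, the closed form $c_{k,0}=\beta_k^k/(k(\beta_k{-}1)+1)$, and the note that secondary zeros exceeding $1$ in modulus for $k\ge 6$ are harmless) are correct but not needed beyond the paper's argument.
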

\begin{proof}
Recall from Proposition~\ref{p:betamax} that
$|r_{k,i}| < \beta_k $ for $1\le i< k$. The first limit above is
hence a direct consequence of Equation~\eqref{Alincomb}, all
terms $c_{k,i} (r_{k,i}/\beta_k)^n$ for $i\neq 0$ tend to 0 when $n$ grows.

Now, recall from Proposition~\ref{p:betaA} that
$\beta_k^n \le \A{k}{n}$.
As a finite limit of real numbers greater or equal to $1$,
$c_{k,0}$ is hence a real greater or equal to $1$.

Finally
\[
\frac{\A{k}{n+1}}{\A{k}{n}} =
\beta_k \cdot
\frac{\A{k}{n+1}}{\beta_k^{n+1}}\cdot\frac{\beta_k^n}{\A{k}{n}}
\]
hence this ratio converges to $\beta_k\,c_{k,0}\,c_{k,0}^{-1} = \beta_k$.
\end{proof}

In practice, the first values of $c_{k,0}$ are
$c_{1,0} = 1$ and $c_{2,0} \approx 1.17$ and
$c_{3,0} \approx 1.31$ and $c_{4,0} \approx 1.43$.
Despite these moderate initial values, $c_{k,0}$ tends to $+\infty$
when $k$ tends to $+\infty$, since the definition of $c_{k,0}$
combined with Proposition~\ref{p:asympt} lead to an asymptotic
equivalent of $k/(\ln(k))^2$.

As a side note, the coefficients $c_{k,i}$ are the $k$
zeros of a polynomial in $\mathbb{Z}[X]$ of degree $k$, with leading
coefficient $k^k+(k{-}1)^{k-1}$ when $k>1$ (or $1$ when $k=1$) and
constant coefficient $(-1)$. This fact is also true for
the coefficients $\tilde{c}_{k,i}$. Here are the first such polynomials:
\begin{itemize}
\item $X-1$ for $c_{1,0}=\tilde{c}_{1,0}=1$.
\item $5X^2-5X-1$ for $c_{2,i}$ and $5X^2-1$ for $\tilde{c}_{2,i}$.
\item $31X^3-31X^2-12X-1$ for $c_{3,i}$ and $31X^3+X-1$ for $\tilde{c}_{3,i}$.
\item $283X^4-283X^3-162X^2-24X-1$ for $c_{4,i}$ and $283X^4+6X^2+X-1$
  for $\tilde{c}_{4,i}$.
\end{itemize}
For the moment, we lack a general expression directly giving these
polynomials, but their existence is proved in
Appendix~\ref{s:polycoeff}.
Some more coefficients can be easily determined,
for instance the ones of degree $k{-}1$ since
$\sum_i c_{k,i} = 1$ (see Equation~\eqref{Alincomb} when $n=0$)
and similarly $\sum_i \tilde{c}_{k,i} = 0$.
For more on the sequence $5,\ 31,\ 283,\ 3381, \ldots k^k{+}(k-1)^{k-1}$,
see OEIS~\seqnum{A056788}.

Recall now that $\alpha_k = 1/\beta_k = 1/r_{k,0}$.

\begin{proposition}\label{p:Adiff}
For $k\ge 2$ and $0\le i< k$, let us name $d_{k,i}$ the complex quantity
$c_{k,i}\,(r_{k,i}^{-1}-\alpha_k)$. Hence $d_{k,i}=0$ iff $i=0$.
Then for all $n\ge 0$,
\[
\A{k}{n\ominus 1} - \alpha_k\,\A{k}{n} =
\sum_{i=1}^{k-1} d_{k,i}\,r_{k,i}^n.
\]
\end{proposition}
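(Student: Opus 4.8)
The plan is to reduce everything to the Binet-type formula \eqref{Alincomb} of Proposition~\ref{p:Alincomb} and to absorb the truncated subtraction $\ominus$ into an ordinary index shift, so that a single computation covers all $n\ge 0$.

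First I would settle the claim about the coefficients $d_{k,i}$. Since $r_{k,0}=\beta_k=1/\alpha_k$, we have $r_{k,0}^{-1}-\alpha_k=0$, whence $d_{k,0}=c_{k,0}\cdot 0=0$. Conversely, for $1\le i<k$ the factor $c_{k,i}$ is nonzero by Proposition~\ref{p:Alincomb}, and $r_{k,i}^{-1}-\alpha_k\ne 0$ because $r_{k,i}\ne r_{k,0}=\beta_k$ (the zeros of $Q_k$ are simple, hence distinct); therefore $d_{k,i}\ne 0$. This gives the stated equivalence $d_{k,i}=0\iff i=0$.

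Next comes the key observation, which is that the operator $\ominus$ creates no genuine difficulty here. For $n\ge 1$ one simply has $n\ominus 1=n-1$. For $n=0$ we have $0\ominus 1=0$, but recall that for $k\ge 2$ the sequence extends downward via $\A{k}{n}=\A{k}{n+k}-\A{k}{n+k-1}$, giving $\A{k}{-1}=\A{k}{k-1}-\A{k}{k-2}=k-(k-1)=1=\A{k}{0}$. Hence $\A{k}{n\ominus 1}=\A{k}{n-1}$ for \emph{every} $n\ge 0$, where at $n=0$ the right-hand side is read through this downward extension. Moreover, as noted in the proof of Proposition~\ref{p:Alincomb}, the identity \eqref{Alincomb} stays valid for all $n\in\mathbb{Z}$, so in particular $\A{k}{-1}=\sum_{i} c_{k,i}\,r_{k,i}^{-1}$.

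With these two facts the computation becomes immediate and uniform in $n\ge 0$:
\[
\A{k}{n\ominus 1}-\alpha_k\,\A{k}{n}
=\A{k}{n-1}-\alpha_k\,\A{k}{n}
=\sum_{i=0}^{k-1}c_{k,i}\bigl(r_{k,i}^{-1}-\alpha_k\bigr)r_{k,i}^{n}
=\sum_{i=0}^{k-1}d_{k,i}\,r_{k,i}^{n},
\]
after which the $i=0$ term drops because $d_{k,0}=0$, leaving exactly $\sum_{i=1}^{k-1}d_{k,i}\,r_{k,i}^{n}$. I expect the only delicate point to be the boundary value at $n=0$: one must justify replacing the truncated index $0\ominus1$ by the extended index $-1$, which is precisely the equality $\A{k}{-1}=\A{k}{0}$ together with the validity of \eqref{Alincomb} at negative arguments; once this is granted, the remaining manipulation is routine factoring of $r_{k,i}^{n}$.
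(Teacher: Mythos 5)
Your proof is correct, and its core computation is exactly the paper's: for $n\ge 1$ one writes $\A{k}{n-1}-\alpha_k\,\A{k}{n}=\sum_i c_{k,i}(r_{k,i}^{-1}-\alpha_k)\,r_{k,i}^n$ and drops the $i=0$ term since $d_{k,0}=0$. The only divergence is the boundary case $n=0$: the paper evaluates both sides separately (left side $1-\alpha_k$; right side $\sum_{i\ge 1} d_{k,i}=\tilde{A}_{k,2k-3}-\alpha_k\,\A{k}{0}=1-\alpha_k$ via the shifted sequence at the nonnegative index $2k-3$), whereas you absorb it into the general computation by reading $0\ominus 1$ as $-1$ through the downward extension and invoking \eqref{Alincomb} at a negative index. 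Your appeal to the ``valid for all $n\in\mathbb{Z}$'' remark is legitimate but stronger than needed: \eqref{Alincomb} at $n=-1$ is, after the shift $c_{k,i}=\tilde{c}_{k,i}\,r_{k,i}^{2k-2}$, precisely the identity $\tilde{A}_{k,2k-3}=\sum_i \tilde{c}_{k,i}\,r_{k,i}^{2k-3}$ at the nonnegative index $2k-3$, which the paper establishes in full; so in substance your boundary argument and the paper's coincide, yours being a slightly more uniform packaging of the same fact.
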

\begin{proof}
For $n>0$, $n{\ominus}1 = n{-}1$, and we apply twice
Proposition~\ref{p:Alincomb}:
\[
\A{k}{n-1} - \alpha_k\,\A{k}{n}
= \sum_{i=0}^{k-1} c_{k,i}\,r_{k,i}^{n-1}-\alpha_k\,\sum_{i=0}^{k-1} c_{k,i}\,r_{k,i}^n
= \sum_{i=0}^{k-1} c_{k,i}\,(\frac{1}{r_{k,i}}-\alpha_k)\,r_{k,i}^n
= \sum_{i=0}^{k-1} d_{k,i}\,r_{k,i}^n
\]
and the first term of this sum can be ignored since $d_{k,0} = 0$.

For $n=0$, $\A{k}{0} -\alpha_k\,\A{k}{0} = 1-\alpha_k$ while
we have (reusing the notations of the proof of Proposition~\ref{p:Alincomb}):
\[
\sum_{i=1}^{k-1} d_{k,i} = \sum_{i=0}^{k-1} d_{k,i} =
\sum_{i=0}^{k-1} \tilde{c}_{k,i}\,r_{k,i}^{2k-3} -
  \alpha_k\,\sum_{i=0}^{k-1} c_{k,i}\,r_{k,i}^0
= \tilde{A}_{k,(2k-3)} - \alpha_k\,\A{k}{0}
= 1 - \alpha_k.
\]
\end{proof}

\begin{lemma}\label{lemtech}
Let $\theta,\rho\in\mathbb{R}$ and assume $\sin \theta \neq 0$. If the
sequence $(a_n)_{n\in\mathbb{N}}$ is defined by $a_n = \cos(n\theta+\rho)$,
then there exists a subsequence of $(a_n)$ always greater or equal to $1/2$.
Similarly there exists a subsequence of $(a_n)$ always smaller or equal
to $-1/2$.
\end{lemma}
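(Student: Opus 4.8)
The plan is to translate the condition $a_n\ge 1/2$ into a statement about where the angle $\phi_n := n\theta+\rho$ lands on the circle $\mathbb{R}/2\pi\mathbb{Z}$, and then to show that region is visited by infinitely many indices. Concretely, $\cos\phi\ge 1/2$ holds exactly when $\phi\in[-\pi/3,\pi/3]\pmod{2\pi}$, and $\cos\phi\le-1/2$ exactly when $\phi\in[2\pi/3,4\pi/3]\pmod{2\pi}$; call these two closed arcs $I_+$ and $I_-$, each of length $2\pi/3$. A subsequence with the asserted property is then nothing but infinitely many indices $n$ with $\phi_n\in I_+$ (respectively $I_-$). Writing $\gamma:=\theta/(2\pi)$, I would split on whether $\gamma$ is rational, observing that the hypothesis $\sin\theta\ne 0$ says precisely $e^{i\theta}\ne\pm1$.

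In the irrational case ($\gamma\notin\mathbb{Q}$), the orbit $\{\phi_n\}_{n\ge0}$ is that of an irrational rotation, hence dense in $\mathbb{R}/2\pi\mathbb{Z}$; moreover every forward tail $\{\phi_n\}_{n\ge N}$ is again a dense orbit, since it is the forward rotation orbit of $\phi_N$. Thus the interior of $I_+$ is entered for some $n_0$, then again for some $n_1>n_0$ (apply density to the tail from $n_0{+}1$), and so on, producing infinitely many indices with $\phi_n\in I_+$; the same argument gives infinitely many with $\phi_n\in I_-$. This case is routine, relying only on the standard density of irrational rotations.

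In the rational case, write $\gamma=a/b$ in lowest terms. Since $\sin\theta\ne 0$ forbids $\theta\in\pi\mathbb{Z}$, i.e.\ $\gamma\equiv 0$ or $1/2\pmod 1$, we must have $b\ge 3$. The orbit is then periodic of period $b$ and consists of the $b$ equally spaced points $\rho+2\pi j/b$ ($0\le j<b$), each realized along a full arithmetic progression of indices $n$. The crux becomes purely geometric: any closed arc of length $2\pi/3$ contains at least one of these $b\ge 3$ points. Indeed, the shortest arc containing all $b$ points has length $2\pi-2\pi/b\ge 4\pi/3$, so the points cannot all lie in the complementary open arc of length $4\pi/3$ that would be needed to avoid a fixed closed arc of length $2\pi/3$. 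Applying this to $I_+$ and to $I_-$ separately yields, in each case, an orbit point inside the arc, and the arithmetic progression of indices realizing that point provides the desired subsequence. Pinning down this point-in-arc fact cleanly, together with the bookkeeping linking $\sin\theta\ne 0$ to $b\ge 3$, is the main obstacle; by contrast the analytic input in the irrational case is entirely standard.
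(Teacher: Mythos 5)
Your proof is correct and follows essentially the same route as the paper's: the same case split on whether $\theta/2\pi$ is rational, with the periodic equally-spaced orbit (and denominator $\ge 3$ forced by $\sin\theta\neq 0$) meeting every closed arc of length $2\pi/3$ in the rational case, and density of the forward orbit of an irrational rotation --- which is exactly the Kronecker approximation theorem the paper invokes --- in the irrational case. The only differences are cosmetic: the paper constructs a suitable index explicitly via Bezout and a Euclidean division (convenient for its Coq formalization) and gets the $-1/2$ statement by replacing $\rho$ with $\rho+\pi$, whereas you argue geometrically with the two arcs and handle them symmetrically.
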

\begin{proof}
If $\theta/2\pi$ is rational, the sequence $(a_n)$ is periodic and
we find infinitely many $n$ such that $a_n\ge 1/2$.
More precisely, we write $\theta/2\pi = p/q$ with
$p,q\in \mathbb{Z}$ and $q>0$ and $p,q$ relatively prime.
Moreover, $q$ cannot be 1 nor 2, otherwise $\sin \theta = 0$.
By Bezout, there exist integers $u,v$ such that $u p+v q = 1$.
By Euclidean division extended to real numbers,
there exists an integer $k$ and a real $0\le r<1/q$ such that
$(-\rho/2\pi+1/6) = k(1/q)+r$.
Then $\cos((k u)\theta+\rho) = cos(\pi/3-2\pi r) \ge 1/2$, since
$0\le 2\pi r < 2\pi/q \le 2\pi/3$.
A suitable subsequence is hence all the indices $n = \modd{k u}{q}$
that are positive.

We assume now that $\theta/2\pi$ is irrational. Let $N$ be a natural number.
The Kronecker
approximation theorem provides two integers $p,q$ such that $q>0$ and
\[|(\theta/2\pi)q - p + (N\theta+\rho)/2\pi|<1/6.\]
This implies that $\cos((N{+}q)\theta+\rho) \ge \cos(\pi/3) = 1/2$.
The interval $[1/2,1]$ is hence visited infinitely often by the
sequence $(a_n)$, ensuring the existence of the desired subsequence.

Finally, for finding a subsequence always smaller or equal to $-1/2$, it is
sufficient to use the first part of the statement with the same
$\theta$ but a shifted $\rho+\pi$, changing the sign of all cosine.
\end{proof}

\begin{proposition}\label{p:supAdiff}
For $k\ge 6$,
$\sup_{n\in\mathbb{N}} |\A{k}{n\ominus 1}-\alpha_k\,\A{k}{n}| = +\infty$.
More precisely, for a given $k\ge 6$ and the constant
$C = |d_{k,1}|/2 > 0$, there exist two strictly increasing sequences of
natural numbers $(u_n)$ and $(u'_n)$ such that for all $n\ge 0$,
\[
\A{k}{u_n\ominus 1}-\alpha_k\,\A{k}{u_n} > C |r_{k,1}|^{u_n}
\]
and
\[
\A{k}{u'_n\ominus 1}-\alpha_k\,\A{k}{u'_n} < -C |r_{k,1}|^{u'_n}.
\]
\end{proposition}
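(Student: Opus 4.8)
The plan is to extract the dominant oscillatory contribution from the exact formula of Proposition~\ref{p:Adiff} and show it forces the left-hand side to exceed $\pm C|r_{k,1}|^n$ infinitely often. By Proposition~\ref{p:Adiff} we have
\[
\A{k}{n\ominus 1}-\alpha_k\,\A{k}{n} = \sum_{i=1}^{k-1} d_{k,i}\,r_{k,i}^n,
\]
and by Proposition~\ref{p:secondary} the secondary zeros of largest modulus are exactly the conjugate pair $r_{k,1}$ and $r_{k,2}=\overline{r_{k,1}}$, with $|r_{k,1}|>1$ when $k\ge 6$. The coefficients satisfy $d_{k,2}=\overline{d_{k,1}}$ as well, since the $c_{k,i}$ and the $r_{k,i}$ come in conjugate pairs and $\alpha_k$ is real. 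So the two leading terms combine into a single real cosine: writing $r_{k,1}=|r_{k,1}|e^{i\theta}$ and $d_{k,1}=|d_{k,1}|e^{i\rho}$, the pair $d_{k,1}r_{k,1}^n+\overline{d_{k,1}}\,\overline{r_{k,1}}^n$ equals $2|d_{k,1}|\,|r_{k,1}|^n\cos(n\theta+\rho)$.

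First I would verify the conjugation bookkeeping $d_{k,2}=\overline{d_{k,1}}$ and note $d_{k,1}\neq 0$ (guaranteed by Proposition~\ref{p:Adiff}). Then I would factor out $|r_{k,1}|^n$ and write
\[
\frac{\A{k}{n\ominus 1}-\alpha_k\,\A{k}{n}}{|r_{k,1}|^n}
= 2|d_{k,1}|\cos(n\theta+\rho) + \sum_{i=3}^{k-1} d_{k,i}\left(\frac{r_{k,i}}{|r_{k,1}|}\right)^{\!n}|r_{k,1}|^{0}.
\]
All the remaining terms, after dividing by $|r_{k,1}|^n$, carry a factor $(r_{k,i}/r_{k,1})^n$ with $|r_{k,i}|<|r_{k,1}|$ (strict, by the ordering and Proposition~\ref{p:secondary}), hence form an error term that tends to $0$. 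The key input is Lemma~\ref{lemtech}: since $\theta$ is the argument of a non-real zero, $\sin\theta\neq 0$, so the cosine subsequence lemma applies and gives a strictly increasing sequence of indices along which $\cos(n\theta+\rho)\ge 1/2$, and another along which $\cos(n\theta+\rho)\le -1/2$.

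Putting these together: along the first subsequence the cosine term is at least $2|d_{k,1}|\cdot\tfrac12=|d_{k,1}|$, while the error term is eventually below $|d_{k,1}|/2=C$ in absolute value, so the normalized expression eventually exceeds $C$, giving $\A{k}{u_n\ominus 1}-\alpha_k\,\A{k}{u_n}>C|r_{k,1}|^{u_n}$; the second subsequence handles the $<-C|r_{k,1}|^{u'_n}$ bound symmetrically. Since $|r_{k,1}|>1$, this also yields the supremum being $+\infty$. The main obstacle, such as it is, is not conceptual but bookkeeping: one must ensure that discarding finitely many indices (where the error term has not yet shrunk below $C$) still leaves strictly increasing subsequences $(u_n)$ and $(u'_n)$, and that the conjugacy relations are set up so that exactly one real cosine emerges from the dominant pair rather than two unrelated terms.
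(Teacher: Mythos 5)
Your proposal is correct and follows essentially the same route as the paper's proof: isolate the conjugate pair $d_{k,1}r_{k,1}^n + \overline{d_{k,1}}\,\overline{r_{k,1}}^n = 2|d_{k,1}|\,|r_{k,1}|^n\cos(n\theta+\rho)$, bound the remaining terms by a quantity of order $|r_{k,3}|^n$ which is negligible against $|r_{k,1}|^n$, apply Lemma~\ref{lemtech} (valid since $r_{k,1}\notin\mathbb{R}$ gives $\sin\theta\neq 0$) to get the two cosine subsequences, and discard finitely many initial indices so the error drops below $C=|d_{k,1}|/2$. The only cosmetic difference is that you normalize by $|r_{k,1}|^n$ whereas the paper keeps the inequality unnormalized with an explicit ratio $\chi=|r_{k,3}|/|r_{k,1}|<1$; the arguments are otherwise identical.
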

\begin{proof}
Let $k\ge 6$. Recall from Proposition~\ref{p:secondary} that in this
case $|r_{k,1}|>1$ and $r_{k,2}=\overline{r_{k,1}}$ and for
$3\le i< k$ we have $|r_{k,i}|\le|r_{k,3}|<|r_{k,1}|$.
Also note that $d_{k,2} = \overline{d_{k,1}}$ just as
$c_{k,2} = \overline{c_{k,1}}$, thanks to their definitions.
Now, we apply Proposition~\ref{p:Adiff} and regroup the two first terms:
\[
\A{k}{n\ominus 1} - \alpha_k\,\A{k}{n}
= \sum_{i=1}^{k-1} d_{k,i}\,r_{k,i}^n
= 2\,\RE(d_{k,1}\,r_{k,1}^n) + \sum_{i=3}^{k-1} d_{k,i}\,r_{k,i}^n
\]
Since $|r_{k,i}|\le|r_{k,3}|$ for all $i\ge 3$, we
pose $C_3 = \sum_{3\le i} |d_{k,i}| > 0$ and get
\[
\left|\sum_{i=3}^{k-1}d_{k,i}\,r_{k,i}^n\right| \le C_3 |r_{k,3}|^n
\]
and hence
\[
 2\,\RE(d_{k,1}\,r_{k,1}^n) - C_3 |r_{k,3}|^n
 \le \A{k}{n\ominus 1} - \alpha_k\,\A{k}{n}
 \le 2\,\RE(d_{k,1}\,r_{k,1}^n) + C_3 |r_{k,3}|^n.
\]

By writing $d_{k,1}$ as $|d_{k,1}|e^{i\rho}$ and $r_{k,1}$
as $|r_{k,1}|e^{i\theta}$, we get
\[
\RE(d_{k,1}\,r_{k,1}^n) = |d_{k,1}|\cdot|r_{k,1}|^n\cdot\cos(n\theta+\rho).
\]
Since $k>2$, $r_{k,1}\notin\mathbb{R}$ hence $\sin \theta\neq 0$,
and the previous Lemma~\ref{lemtech} provides a
strictly increasing sequence $(v_n)$ such
that $|\cos(v_n \theta+\rho)| \ge 1/2$ and hence
\[
(|d_{k,1}| - C_3\cdot\chi^{v_n})\cdot|r_{k,1}|^{v_n}
\le
\A{k}{v_n\ominus 1} - \alpha_k\,\A{k}{v_n}
\]
where $\chi = |r_{k,3}|/|r_{k,1}| < 1$.
Recall that $C = |d_{k,1}|/2 > 0$.
We can find a shift $N$ large enough for having
$\chi^{v_{(n+N)}} < C/C_3$ for all $n\ge 0$. Taking
$u_n = v_{(n+N)}$ ensures the first desired inequality.

On the other side, the Lemma~\ref{lemtech} also provides a
strictly increasing sequence $(v'_n)$ such that
$|\cos(v'_n \theta+\rho)| \le -1/2$, leading to
\[
\A{k}{v'_n\ominus 1} - \alpha_k\,\A{k}{v'_n} \le
-(|d_{k,1}| - C_3\cdot\chi^{v'_n})\cdot|r_{k,1}|^{v'_n}.
\]
and $u'_n = v'_{(n+N')}$ is a suitable sequence as soon as the shift
$N'$ is large enough for having $\chi^{v'_{(n+N)}} < C/C_3$ for all $n\ge 0$.
\end{proof}

\section{Discrepancy: maximal distance to the linear equivalent}
\label{s:distF}

From Theorem~\ref{t:limits}, we know that for any $k\ge 1$,
$F_k(n)$ admits $\alpha_k\,n$ as linear equivalent. We investigate
here the distance between $F_k$ and this linear equivalent.

\begin{definition}
For $k\ge 1$, we note $\delta_k(n) = F_k(n){-}\alpha_k\,n$ and
we call \emph{discrepancy} the quantity
$\Delta_k = \sup_{n\in\mathbb{N}} |\delta_k(n)|$.
\end{definition}

We now retrieve that the discrepancy $\Delta_k$ is finite
(and quite small) for $k\le 4$ and infinite for $k\ge 5$.
These results are immediate consequences of Dilcher
theorems~\cite{Dilcher1993}. We are more precise here,
and state that for $k\ge 5$ we have both\footnote{
For the rest of this article, the supremum and infimum expressions
are written without domain when it is $\mathbb{N}$, i.e., the full
definition domain of the function $\delta_k$.}
$\sup_n \delta_k(n) = +\infty$ and $\inf_n \delta_k(n) = -\infty$.
Moreover for $k=3,4$ we provide much better bounds than Dilcher,
whose actual study of $|F_3(n){-}(n{-}1)\alpha_3{-}1|$ is quite awkward
compared to $|\delta_3(n)|$, same for $k=4$.
The first version of this document proved just that
$|\delta_3(n)|<1$ and $|\delta_4(n)|<2$, but we can now be much more
precise, thanks to new methods inspired in part by
Shallit~\cite{Shallit2025}.

\subsection{The general case}

In the general case $k\ge 6$, Proposition~\ref{p:supAdiff}
implies that $\delta_k$ diverges at least for some subsequence
of the $\A{k}{n}$ numbers. More precisely:

\begin{theorem}\label{t:supdelta}
For $k\ge 6$, $\Delta_k = +\infty$ and even
$\sup_n \delta_k(n) = +\infty$ and $\inf_n \delta_k(n) = -\infty$.
More precisely, for a given $k\ge 6$,
there exist a constant $C\in\mathbb{R}_+^*$
and two subsequences $(u_n)$ and $(u'_n)$ of the sequence
 $(\A{k}{n})_{n\in\mathbb{N}}$
such that for all $n\ge 0$,
\[
\delta_k(u_n) > C\,(u_n)^a
\]
and
\[
\delta_k(u'_n) < -C\,(u'_n)^a
\]
where $a = \ln |r_{k,1}| /\ln \beta_k$ and hence $0<a<1$.
\end{theorem}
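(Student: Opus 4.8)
The key connection I would exploit is that the Fibonacci-like numbers $\A{k}{n}$ are, via Proposition~\ref{p:FA}, fixed points for the right-shift behaviour of $F_k$: we have $F_k(\A{k}{n}) = \A{k}{n\ominus 1}$. This means that if I evaluate $\delta_k$ at the point $\A{k}{n}$ I obtain precisely
\[
\delta_k(\A{k}{n}) = F_k(\A{k}{n}) - \alpha_k\,\A{k}{n} = \A{k}{n\ominus 1} - \alpha_k\,\A{k}{n},
\]
which is exactly the quantity controlled by Proposition~\ref{p:supAdiff}. So the strategy is to feed the subsequences produced there directly into the discrepancy. The plan is therefore: first, reduce $\delta_k(\A{k}{n})$ to the algebraic expression of Proposition~\ref{p:Adiff}; second, apply Proposition~\ref{p:supAdiff} to obtain two increasing index sequences along which the right-hand side is bounded below (resp.\ above) by $\pm C\,|r_{k,1}|^{m}$; and third, re-express the bound $|r_{k,1}|^m$ as a power of the evaluation point $\A{k}{m}$ itself, which is asymptotically $\beta_k^m$.

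For the concrete construction, I would take $(v_n)$ and $(v'_n)$ to be the index sequences from Proposition~\ref{p:supAdiff}, and set $u_n = \A{k}{v_n}$ and $u'_n = \A{k}{v'_n}$ as the subsequences of $(\A{k}{m})_{m}$. By monotonicity of $m\mapsto\A{k}{m}$ these are strictly increasing, and substituting gives
\[
\delta_k(u_n) = \A{k}{v_n\ominus 1} - \alpha_k\,\A{k}{v_n} > C\,|r_{k,1}|^{v_n}
\]
directly from the stated lower bound, and symmetrically for $u'_n$. In particular $\sup_n\delta_k(n) = +\infty$ and $\inf_n\delta_k(n)=-\infty$, so $\Delta_k=+\infty$, since $|r_{k,1}|>1$ and the exponent $v_n$ (resp.\ $v'_n$) tends to infinity.

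The one genuinely nonroutine step is converting the bound $|r_{k,1}|^{v_n}$ into a power $(u_n)^a$ of the point $u_n=\A{k}{v_n}$, which is what the theorem explicitly asks for. Here I would use the corollary to Proposition~\ref{p:Alincomb}: since $\A{k}{m}\sim c_{k,0}\,\beta_k^m$ with $c_{k,0}\ge 1$, we have $\A{k}{m}=\beta_k^{m(1+o(1))}$, so $\log_{\beta_k}\!\A{k}{m} = m + O(1)$. Setting $a = \ln|r_{k,1}|/\ln\beta_k$, which lies in $(0,1)$ because $1<|r_{k,1}|<\beta_k$ by Propositions~\ref{p:secondary} and~\ref{p:betamax}, gives $|r_{k,1}|^{m} = \beta_k^{am} \asymp (\A{k}{m})^{a}$ up to a bounded multiplicative factor. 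The main obstacle is thus purely bookkeeping: the constant $C$ from Proposition~\ref{p:supAdiff} must be shrunk slightly (or the sequences shifted by a large enough $N$) to absorb the bounded ratio $|r_{k,1}|^{v_n}/(\A{k}{v_n})^a$, which is bounded away from $0$ since $\A{k}{m}/\beta_k^m \to c_{k,0}$. With a small enough fresh constant $C'>0$ the inequalities $\delta_k(u_n)>C'\,(u_n)^a$ and $\delta_k(u'_n)<-C'\,(u'_n)^a$ hold for all $n$, completing the proof.
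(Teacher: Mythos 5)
Your proposal is correct and follows essentially the same route as the paper's proof: evaluate $\delta_k$ at the points $\A{k}{v_n}$ via Proposition~\ref{p:FA}, invoke Proposition~\ref{p:supAdiff} for the lower/upper bounds $\pm C\,|r_{k,1}|^{v_n}$, and convert $|r_{k,1}|^{v_n}=(\beta_k^{v_n})^a$ into $(\A{k}{v_n})^a$ using $\A{k}{m}/\beta_k^m\to c_{k,0}$, absorbing the bounded ratio into the constant (the paper does this by shifting the index sequences by $N$ and taking $C=C_0(2c_{k,0})^{-a}$, which is one of the two bookkeeping options you mention).
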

\begin{proof}
Let $k\ge 6$.
Recall that $r_{k,1}$ is one of the zeros of the polynomial $Q_k$,
of maximal modulus among the zeros distinct from the positive zero
$\beta_k$. And from Proposition~\ref{p:secondary}, we know that here
$|r_{k,1}|>1$.

Let us call $C_0$ and $v_n$ and $v'_n$ the constant and sequences
provided by Proposition~\ref{p:supAdiff} for this $k$, and combine
it with Proposition~\ref{p:FA}:
\[
\delta_k(\A{k}{v_n})=F_k(\A{k}{v_n})-\alpha_k\,\A{k}{v_n}
               =\A{k}{v_n\ominus 1}-\alpha_k\,\A{k}{v_n}
               > C_0\,|r_{k,1}|^{v_n}
               = C_0\,(\beta_k^{v_n})^a.
\]
Recall from Proposition~\ref{p:Alincomb} that
$\lim_{n\to\infty} \A{k}{n}/\beta_k^n = c_{k,0} > 0$.
So we can find $N$ such that for all $m\ge N$,
$\A{k}{m} < 2 c_{k,0} \beta_k^m $
and hence
$(\beta_k^m)^a > (2 c_{k,0})^{-a} (\A{k}{m})^a$.
We can now choose $C = C_0\,(2 c_{k,0})^{-a}$ and $u_n = \A{k}{v_{(n+N)}}$
and put everything together (note that $v_{(n+N)}\ge N$):
\[
\delta_k(u_n) > C_0\,(\beta_k^{v_{(n+N)}})^a
> C_0\,(2 c_{k,0})^{-a}\,(\A{k}{v_{(n+N)}})^a
= C\,(u_n)^a.
\]
On the other side, we obtain similarly that
$\delta_k(\A{k}{v'_n}) < -C_0\,(\beta_k^{v'_n})^a$ and
with the same constant $C$ and shift $N$ as before, we pose
$u'_n = \A{k}{v'_{(n+N)}}$ and get $\delta_k(u'_n) < -C\,(u'_n)^a$.
\end{proof}

Experimentally, the exponent is $a \approx 0.1287$ for $k=6$ and
$a \approx 0.2218$ for $k=7$ and appears to tend to $1$ when $k$ grows.

For the remaining results of this section, we need to express
$\delta_k(n)$ thanks to the $k$-decomposition of $n$.

\begin{proposition}\label{p:deltadecomp}
Let $k\ge 2$ and $n\ge 0$.
\[
\delta_k(n) =
\sum_{q\in D_k(n)} \sum_{i=1}^{k-1} d_{k,i}\,r_{k,i}^q =
\sum_{i=1}^{k-1} \left(d_{k,i}\sum_{\!q\in D_k(n)\!} r_{k,i}^q\right)
\]
where $d_{k,i}$ comes from Proposition~\ref{p:Adiff} and $D_k(n)$ from
Theorem~\ref{t:zeck}.
\end{proposition}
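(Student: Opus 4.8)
The plan is to combine the shift description of $F_k$ from Theorem~\ref{t:Fshift} with the algebraic expansion of the difference $\A{k}{n\ominus 1}-\alpha_k\,\A{k}{n}$ established in Proposition~\ref{p:Adiff}. The key observation is that $F_k$ acts as the upper right shift $\gg_+ 1$ on $k$-decompositions, so its value can be read off term by term from the canonical decomposition $D_k(n)$. First I would write $n = \Sigma_k(D_k(n)) = \sum_{q\in D_k(n)} \A{k}{q}$ using the definition of the canonical decomposition. Then, by Theorem~\ref{t:Fshift}, $F_k(n) = \Sigma_k(D_k(n)\gg_+ 1) = \sum_{q\in D_k(n)} \A{k}{q\ominus 1}$, since the upper right shift by $1$ sends each position $q$ to $q\ominus 1$ and summing over the shifted decomposition just sums the corresponding $\A{k}{q\ominus 1}$ terms.

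With these two expressions in hand, the computation becomes a clean termwise subtraction. I would write
\[
\delta_k(n) = F_k(n) - \alpha_k\,n
= \sum_{q\in D_k(n)} \A{k}{q\ominus 1} - \alpha_k \sum_{q\in D_k(n)} \A{k}{q}
= \sum_{q\in D_k(n)} \bigl(\A{k}{q\ominus 1} - \alpha_k\,\A{k}{q}\bigr).
\]
Each summand is now exactly the quantity controlled by Proposition~\ref{p:Adiff}, which gives $\A{k}{q\ominus 1} - \alpha_k\,\A{k}{q} = \sum_{i=1}^{k-1} d_{k,i}\,r_{k,i}^q$. Substituting this in yields the first equality of the statement, $\delta_k(n) = \sum_{q\in D_k(n)} \sum_{i=1}^{k-1} d_{k,i}\,r_{k,i}^q$, and the second equality is then just Fubini/interchange of the two finite sums, pulling the $d_{k,i}$ factor outside the inner sum over $q$.

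I do not expect any genuine obstacle here, since both ingredients are already available and the decompositions and index sets are finite, so swapping the order of summation is unconditionally valid. The only point requiring a little care is the boundary treatment of the lowest digit: the shift $\gg_+ 1$ uses the truncated subtraction $q\ominus 1$ rather than $q-1$, so positions equal to $0$ in $D_k(n)$ map to $0$ rather than to a negative index. This is precisely why Proposition~\ref{p:Adiff} is stated with $\A{k}{q\ominus 1}$ (and why it handles the $n=0$ case separately), so the $\ominus$ bookkeeping matches on both sides and no special case is needed. The hypothesis $k\ge 2$ ensures the secondary coefficients $d_{k,i}$ for $1\le i<k$ are the relevant ones and that $d_{k,0}=0$ has already been absorbed, so the inner sum legitimately starts at $i=1$.
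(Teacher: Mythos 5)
Your proposal is correct and follows essentially the same route as the paper's proof: apply Theorem~\ref{t:Fshift} to write $F_k(n)=\Sigma_k(D_k(n)\gg_+ 1)=\sum_{q\in D_k(n)}\A{k}{q\ominus 1}$, subtract $\alpha_k n$ termwise over the canonical decomposition, invoke Proposition~\ref{p:Adiff} on each summand, and swap the two finite sums. Your remark about the $\ominus$ bookkeeping matching the statement of Proposition~\ref{p:Adiff} is exactly the point that makes the substitution seamless, just as in the paper.
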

\begin{proof}
\begin{align*}
\delta_k(n) &= F_k(n) - \alpha_k\,n & \\
  &= \Sigma_k(D_k(n) \gg_+ 1) - \alpha_k\,n
     \qquad & \text{(by Thm.~\ref{t:Fshift})}\\
  &= \sum_{q\in D_k(n)} \A{k}{q\ominus 1} -
     \alpha_k \sum_{\!q\in D_k(n)\!} \A{k}{q} & \\
  &= \sum_{q\in D_k(n)} (\A{k}{q\ominus 1} - \alpha_k\,\A{k}{q}) & \\
  &= \sum_{q\in D_k(n)} \sum_{i=1}^{k-1} d_{k,i}\,r_{k,i}^q
     \qquad & \text{(by Prop.~\ref{p:Adiff})} \\
  &= \sum_{i=1}^{k-1} \left(d_{k,i} \sum_{\!q\in D_k(n)\!} r_{k,i}^q\right). &
\end{align*}
\end{proof}

For $k\ge 6$, we now establish that $\delta_k$ cannot diverge
at a faster pace than the behavior seen in
Theorem~\ref{t:supdelta}.

\begin{proposition}\label{p:deltaO}
For $k\ge 6$, $|\delta_k(n)| = O(n^a)$ where
$a=\ln |r_{k,1}| /\ln \beta_k$.
\end{proposition}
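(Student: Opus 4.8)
The plan is to start from the decomposition of $\delta_k(n)$ furnished by Proposition~\ref{p:deltadecomp} and to bound each of the $k{-}1$ inner sums $\sum_{q\in D_k(n)} r_{k,i}^q$ by a geometric estimate, crucially exploiting that the positions of the canonical decomposition $D_k(n)$ are spaced at least $k$ apart. First I dispose of $n=0$, where $\delta_k(0)=0$. Otherwise let $p$ be the largest position in $D_k(n)$; by canonicity the positions may be listed as $p_0<\cdots<p_m=p$ with $p_{m-j}\le p-jk$, so that $\bigl|\sum_{q\in D_k(n)} r_{k,i}^q\bigr|\le \sum_{q\in D_k(n)} |r_{k,i}|^q$ is controlled by summing over a lacunary set of exponents.

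The bound on this sum depends on the modulus of $r_{k,i}$, and I would treat three cases. When $|r_{k,i}|>1$, the spacing yields $\sum_{q\in D_k(n)} |r_{k,i}|^q \le |r_{k,i}|^p\sum_{j\ge 0}|r_{k,i}|^{-jk} = |r_{k,i}|^p/(1-|r_{k,i}|^{-k})$, a fixed multiple of $|r_{k,i}|^p\le |r_{k,1}|^p$ since our ordering gives $|r_{k,i}|\le|r_{k,1}|$. When $|r_{k,i}|=1$ (which by Proposition~\ref{p:secondary} and the ensuing discussion can occur only for $k\equiv \modd{5}{6}$, with $r_{k,i}=e^{\pm i\pi/3}$), the sum equals the cardinal of $D_k(n)$, hence is at most $p/k+1$. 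When $|r_{k,i}|<1$, the sum is dominated by the full geometric series $1/(1-|r_{k,i}|)$, a constant independent of $n$.

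Since $|r_{k,1}|>1$ for $k\ge 6$, in every case the contribution of index $i$ is $O(|r_{k,1}|^p)$: the exponential term absorbs the polynomial factor $p/k+1$ of the modulus-one case and any fixed constant of the subcritical case. Summing over $1\le i<k$ and weighting by the constants $|d_{k,i}|$ then gives $|\delta_k(n)| = O(|r_{k,1}|^p)$, with an implied constant depending only on $k$.

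Finally I convert the estimate from $p$ to $n$. By Proposition~\ref{p:betaA} we have $\beta_k^p \le \A{k}{p} \le n$ (the second inequality by the choice of $p$ as the top position of a canonical decomposition of $n$), and the definition $a=\ln|r_{k,1}|/\ln\beta_k$ means exactly $|r_{k,1}|=\beta_k^a$, so $|r_{k,1}|^p=(\beta_k^p)^a\le n^a$ since $a>0$. Therefore $|\delta_k(n)| = O(n^a)$, as claimed. The only mildly delicate point is the modulus-one case, but it contributes merely a polynomial factor that is swamped by the strictly larger base $|r_{k,1}|$; the remainder is a routine geometric summation.
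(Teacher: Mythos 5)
Your proof is correct, and its skeleton is the same as the paper's: start from Proposition~\ref{p:deltadecomp}, bound $|\delta_k(n)|$ by $O(|r_{k,1}|^p)$ where $p$ is the largest position in $D_k(n)$, then convert to $O(n^a)$ via Proposition~\ref{p:betaA} and $\A{k}{p}\le n$. Where you diverge is the middle estimate. You split into three cases according to whether $|r_{k,i}|$ exceeds, equals, or is below $1$, exploit the $k$-spacing of canonical positions for the supercritical roots, count the cardinal (at most $p/k+1$) for the modulus-one roots, and then need an absorption argument to see that this linear factor is swallowed by $|r_{k,1}|^p$. The paper avoids all of this with one cruder move: first majorize every $|r_{k,i}|$ by $|r_{k,1}|$ (legitimate by the ordering of the zeros), then over-estimate the position set $D_k(n)$ by the full interval $\{0,1,\ldots,p\}$, discarding the spacing entirely; since $|r_{k,1}|>1$ for $k\ge 6$, the resulting single geometric series sums to $M\,(|r_{k,1}|^{p+1}-1)/(|r_{k,1}|-1)=O(|r_{k,1}|^p)$ with no case distinction whatsoever. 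Your use of the spacing yields marginally sharper constants, and your modulus-one case correctly notes (via the discussion following Proposition~\ref{p:secondary}) that it occurs only when $k=\modd{5}{6}$ with $r_{k,i}=e^{\pm i\pi/3}$; but for an $O(\cdot)$ statement the paper's uniform majorization is shorter and asymptotically just as sharp. Both arguments are sound.
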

\begin{proof}
We exploit Proposition~\ref{p:deltadecomp} and the fact that
$|r_{k,i}| \le |r_{k,1}|$ for all $1\le i< k$:
\[
|\delta_k(n)| \le
\sum_{q\in D_k(n)} \sum_{i=1}^{k-1} |d_{k,i}|\,|r_{k,i}|^q \le
M\!\!\sum_{q\in D_k(n)} \!\!|r_{k,1}|^q
\]
where $M = \sum_{1\le i} |d_{k,i}|$.
For $n\neq 0$, let us call $m$ the largest element in $D_k(n)$.
Since $\Sigma_k(D_k(n))=n$, we have $\A{k}{m}\le n$, hence by
Proposition~\ref{p:betaA} $\beta_k^m \le n$, so $|r_{k,1}|^m \le n^a$.
We over-estimate $D_k(n)$ as $\{0,1,\ldots,m\}$ and recall that
$|r_{k,1}|>1$ for $k\ge 6$:
\[
|\delta_k(n)| \le M\,\sum_{q=0}^m |r_{k,1}|^q =
M\, \frac{|r_{k,1}|^{m+1}-1}{|r_{k,1}|-1} \le C\,n^a
\]
where $C = M\,|r_{k,1}|\,(|r_{k,1}|-1)^{-1}$.
\end{proof}

\subsection{The case \texorpdfstring{$k=5$}{\it k=5}}

When $k=5$, the discrepancy $\Delta_5$ is still infinite, but the proof
is different from the general case $k\ge 6$ since here $|r_{5,1}|=1$ (actually
$r_{5,1} = e^{i\pi/3}$) and the divergence is quite slower.

\begin{theorem}\label{t:supdelta5}
$\Delta_5 = +\infty$.
More precisely, if we consider the sequences
\[
  u_n = \sum\limits_{p=0}^{n-1} \A{5}{6p} \qquad \text{and}\qquad
  u'_n = \sum\limits_{p=0}^{n-1} \A{5}{6p+3}
\]
 there exist some constants
$C,C_2,C'_2,K,K_2,K'_2\in\mathbb{R}_+^*$ such that for
 all $n\ge 0$,
\[
\delta_5(u_n) > C\,n - C_2 \qquad \text{and}\qquad
\delta_5(u_n) > K\,\ln(u_n) - K_2
\]
and
\[
\delta_5(u'_n) < -C\,n + C'_2 \qquad \text{and}\qquad
\delta_5(u_n) < -K\,\ln(u_n) + K'_2.
\]
\end{theorem}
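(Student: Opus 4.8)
The plan is to exploit the decomposition of $\delta_5$ given in Proposition~\ref{p:deltadecomp}, specialized to the case $k=5$ where the situation is genuinely simpler than the general $k\ge 6$ case. The crucial structural fact is that $r_{5,1}=e^{i\pi/3}$ has modulus exactly $1$, so the term $d_{5,1}r_{5,1}^q$ does not grow with $q$ but instead oscillates with period $6$ in the exponent; meanwhile the remaining secondary zeros $r_{5,3},r_{5,4}$ (the complex zeros of $X^3-X-1$) have modulus strictly less than $1$, so their contributions form a convergent geometric-type tail. Thus I would write
\[
\delta_5(n) = 2\,\RE\!\left(d_{5,1}\sum_{q\in D_5(n)} r_{5,1}^q\right)
  + \sum_{i=3}^{4} d_{5,i}\sum_{q\in D_5(n)} r_{5,i}^q,
\]
where the second sum is uniformly bounded by some absolute constant because $\sum_{q\ge 0}|r_{5,i}|^q$ converges.

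The key idea for the chosen sequences is that $u_n$ packs into its $5$-decomposition exactly the positions $0,6,12,\ldots,6(n-1)$, which are all congruent to $0$ modulo $6$, and these are separated by $6\ge k=5$ so the decomposition is canonical. Since $r_{5,1}^6=e^{2i\pi}=1$, every one of the $n$ leading terms contributes the \emph{same} value $d_{5,1}r_{5,1}^0=d_{5,1}$, so the dominant part becomes $2\,\RE(n\,d_{5,1})=2n\,\RE(d_{5,1})$, which grows linearly in $n$. The sequence $u'_n$ uses positions $\equiv 3\pmod 6$, giving $r_{5,1}^{6p+3}=r_{5,1}^3=e^{i\pi}=-1$, so its dominant part is $-2n\,\RE(d_{5,1})$, the negative of the first. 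I would therefore first verify that $\RE(d_{5,1})\neq 0$ (a finite explicit check, since $d_{5,1}=c_{5,1}(r_{5,1}^{-1}-\alpha_5)$ with $c_{5,1}\neq 0$ by Proposition~\ref{p:Alincomb}), and set $C=2|\RE(d_{5,1})|>0$; the bounded tail then only shifts things by an additive constant, yielding $\delta_5(u_n)>Cn-C_2$ and $\delta_5(u'_n)<-Cn+C'_2$.

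To convert the linear-in-$n$ bound into the logarithmic-in-$u_n$ bound, I would relate $u_n$ to $\beta_5^{6(n-1)}$: since $\A{5}{q}\sim c_{5,0}\beta_5^q$ by the corollary to Proposition~\ref{p:Alincomb}, the sum $u_n=\sum_{p=0}^{n-1}\A{5}{6p}$ is dominated by its top term and satisfies $\ln u_n = 6(n-1)\ln\beta_5 + O(1)$, so $n$ and $\ln u_n$ are linearly related up to additive and multiplicative constants. Plugging $n=\frac{\ln u_n}{6\ln\beta_5}+O(1)$ into the linear bound produces $\delta_5(u_n)>K\ln(u_n)-K_2$ with $K=C/(6\ln\beta_5)$, and symmetrically for $u'_n$. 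The main obstacle I anticipate is not any single step but controlling the tail $\sum_{i=3}^4 d_{5,i}\sum_{q\in D_5(n)}r_{5,i}^q$ uniformly in $n$: one must argue that summing $|r_{5,i}|^q$ over the \emph{sparse} index set $\{0,6,\ldots,6(n-1)\}$ (rather than all integers) stays bounded, which is immediate since it is dominated by the full geometric series $\sum_{q\ge 0}|r_{5,i}|^q=(1-|r_{5,i}|)^{-1}<\infty$; care is mainly needed to make the constants $C_2,C'_2,K_2,K'_2$ explicit and to handle the small-$n$ edge cases so that the stated inequalities hold for \emph{all} $n\ge 0$.
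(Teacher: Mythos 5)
Your proposal follows essentially the same route as the paper's proof: the same specialization of Proposition~\ref{p:deltadecomp}, the same identification $D_5(u_n)=\{0,6,\ldots,6(n-1)\}$ via Theorem~\ref{t:zeck}, the same use of $r_{5,1}^6=1$ (resp.\ $r_{5,1}^{6p+3}=-1$) to extract the linear term $\pm 2n\,\RE(d_{5,1})$, the same geometric bound on the $r_{5,3},r_{5,4}$ tail, and the same conversion $n \leftrightarrow \ln u_n$ via $u_n = \Theta(\beta_5^{6n})$. One small caveat: checking only $\RE(d_{5,1})\neq 0$ and setting $C=2|\RE(d_{5,1})|$ is not quite enough to conclude that it is $u_n$ (rather than $u'_n$) that diverges to $+\infty$ as the statement asserts — you need the sign $\RE(d_{5,1})>0$, which the explicit computation you propose ($d_{5,1}\approx 0.0189+0.196\,i$, as in the paper) delivers anyway.
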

\begin{proof}
The 5-decomposition
$D=\{0,6,\ldots,6(n{-}1)\}$ is canonical, since the positions $6p$ are
indeed apart by at least 5. Since the sum of this 5-decomposition is $u_n$,
then $D = D_5(u_n)$ by Theorem~\ref{t:zeck}.
Also note that here $r_{5,1} = e^{i\pi/3} = \overline{r_{5,2}}$ hence
$r_{5,1}^{6p} = 1 = r_{5,2}^{6p}$.
Moreover $d_{5,2} = \overline{d_{5,1}}$ and
$d_{5,4} = \overline{d_{5,3}}$ and $r_{5,4} = \overline{r_{5,3}}$
and $|r_{5,3}|<1$. So:

\begin{align*}
\delta_5(u_n) &= \sum_{i=1}^{4} \left(d_{5,i}\sum_{\!q\in D_5(u_n)\!}
r_{5,i}^q\right)
\qquad & \text{(by Prop.~\ref{p:deltadecomp})}\\
&=\sum_{i=1}^4 \left(d_{5,i}\sum_{p=0}^{n-1} r_{5,i}^{6p}\right) \\
&=2n\,\RE(d_{5,1}) + 2\,\RE\left(d_{5,3}\sum_{p=0}^{n-1}
r_{5,3}^{6p}\right).
\end{align*}

From the definition of $d_{5,1}$, we get that
$d_{5,1} \approx 0.0189 + 0.196 i$ and in particular its real part can
be proved to be strictly positive. We can hence choose
$C=2\,\RE(d_{5,1})$. We now give an upper bound for the last part:
\[
\left|2\,\RE\left(d_{5,3}\sum_{p=0}^{n-1} r_{5,3}^{6p}\right)\right| \le
2\,|d_{5,3}|\,\sum_{p=0}^{n-1} |r_{5,3}|^{6p} <
2\,|d_{5,3}|\,\frac{1}{1-|r_{5,3}|^6}
\]
so $2|d_{5,3}|(1-|r_{5,3}|^6)^{-1}$ is a suitable constant $C_2$
such that $\delta_5(u_n) > C\,n - C_2$.

Now, thanks to Proposition~\ref{Alincomb}:
\begin{align*}
u_n &= \sum_{p=0}^{n-1}\sum_{i=0}^4 c_{5,i}\,r_{5,i}^{6p} \\
    &= \sum_{i=0}^4 c_{5,i} \left(\sum_{p=0}^{n-1} r_{5,i}^{6p}\right) \\
    &= c_{5,0}\cdot \frac{\beta_5^{6n}-1}{\beta_5-1}
       + 2n\,\RE(c_{5,1}) +
       2\,\RE\left(c_{5,3}\cdot\frac{1-r_{5,3}^{6n}}{1-r_{5,3}^6}\right) \\
    &= K_0\,\beta_5^{6n} + O(n)
\end{align*}
where $K_0 = c_{5,0}(\beta_5-1)^{-1}$. Note that $c_{5,0} > 1$ and
$1 < \beta_5 < 2$ hence $K_0 > 1$. So there exists $N$ such that
for all $n>N$ we have
\begin{align*}
(K_0/2)\,\beta_5^{6n} &< u_n < (2K_0)\,\beta_5^{6n} \\
6n\,\ln \beta_5 + \ln(K_0/2) &< \ln(u_n) < 6n\,\ln \beta_5 +
\ln(2K_0) \\
C\,n + K\ln(K_0/2) &< K\ln(u_n) < C\,n + K\ln(2K_0)
\end{align*}
where we noted $K=C\,(6\ln \beta_5)^{-1}>0$.
Since $-K\ln(2K_0) \le K\ln(K_0/2)$, we obtain, still for $n>N$:
\[
|C\,n - K\ln(u_n)| < K\ln(2K_0).
\]
Combining this with the first part of this proof leads to:
\[
\delta_5(u_n) > K\ln(u_n) - C_2 - K\ln(2K_0)
\]
when $n>N$.
A suitable constant $K_2$ is hence the maximum of $C_2+ K\ln(2K_0)$ and of
all the initial distances $(\delta_5(u_n)-K\ln(u_n))$ for $0\le n \le N$.

The proof concerning $u'_n$ is similar, except that this time
$r_{5,1}^{6p+3}=-1$ and hence
\[
\delta(u'_n) = -2n\,\RE(d_{5,1}) + 2\,\RE\left(d_{5,3} r_{5,3}^3\sum_{p=0}^{n-1}
r_{5,3}^{6p}\right).
\]
The same constants $C$ and $K$ can be reused, while $C'_2$ and $K'_2$
need to be adapted, for instance
$C'_2=2|d_{5,3}\,r_{5,3}^3|(1-|r_{5,3}|^6)^{-1}$.
\end{proof}

This logarithmic behavior in $k=5$ cannot be outpaced:

\begin{proposition}\label{p:deltaO5}
$|\delta_5(n)| = O(\ln n)$.
\end{proposition}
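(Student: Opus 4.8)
The plan is to mirror the structure of the upper-bound argument already used for $k\ge 6$ in Proposition~\ref{p:deltaO}, but to exploit the crucial difference that here $|r_{5,1}|=1$ rather than exceeding $1$. I would start from the decomposition formula of Proposition~\ref{p:deltadecomp}, namely
\[
\delta_5(n) = \sum_{i=1}^{4}\left(d_{5,i}\sum_{q\in D_5(n)} r_{5,i}^{q}\right),
\]
and bound it termwise by $|\delta_5(n)| \le \sum_{i=1}^{4}|d_{5,i}|\sum_{q\in D_5(n)}|r_{5,i}|^{q}$. The key observation is that the zeros split into two regimes: $|r_{5,1}|=|r_{5,2}|=1$ (these are $e^{\pm i\pi/3}$), while $|r_{5,3}|=|r_{5,4}|<1$. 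For the small zeros $i=3,4$, the geometric sum $\sum_{q\in D_5(n)}|r_{5,i}|^{q}$ is bounded by the convergent series $\sum_{q=0}^{\infty}|r_{5,3}|^{q} = (1-|r_{5,3}|)^{-1}$, a constant. So those contributions are $O(1)$ and can be discarded.

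The interesting contribution comes from $i=1,2$, where each $|r_{5,i}|^{q}=1$, so $\sum_{q\in D_5(n)}|r_{5,i}|^{q}$ equals the cardinality $|D_5(n)|$ of the canonical $5$-decomposition of $n$. Hence the whole expression reduces to $|\delta_5(n)| \le 2|d_{5,1}|\,|D_5(n)| + O(1)$, and everything now hinges on bounding the number of terms in the decomposition. The plan is to show $|D_5(n)| = O(\ln n)$. For $n\neq 0$, let $m$ be the largest element of $D_5(n)$; since the positions in a canonical $5$-decomposition are strictly increasing, all positions lie in $\{0,\dots,m\}$, giving the crude bound $|D_5(n)| \le m+1$. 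On the other hand, $\A{5}{m}\le \Sigma_5(D_5(n)) = n$, and by Proposition~\ref{p:betaA} we have $\beta_5^{m}\le \A{5}{m}\le n$, so $m \le \ln n / \ln\beta_5$. Combining, $|D_5(n)| \le 1 + \ln n/\ln\beta_5 = O(\ln n)$, which delivers the claim.

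The step I expect to require the most care is confirming that the contributions of the unit-modulus zeros genuinely grow only like $|D_5(n)|$ and not faster: because $|r_{5,1}|=1$ exactly, the partial sums $\sum_{q\in D_5(n)} r_{5,1}^{q}$ do \emph{not} decay, so one cannot hope for better than the term-count bound, and the triangle inequality is essentially tight in order of magnitude (the matching lower bound of Theorem~\ref{t:supdelta5} confirms that $\ln n$ is the true rate). No cancellation among the $e^{\pm i\pi/3}$ powers may be invoked to improve the estimate, so the only genuine work is the elementary counting bound $|D_5(n)| \le 1+\ln n/\ln\beta_5$, which follows immediately from $\beta_5^{m}\le n$. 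Everything else is routine bookkeeping with the fixed constants $|d_{5,i}|$ and $|r_{5,3}|$.
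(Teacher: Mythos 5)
Your proof is correct and takes essentially the same route as the paper: apply Proposition~\ref{p:deltadecomp} with the triangle inequality, then bound the cardinality of $D_5(n)$ by $1+\ln n/\ln\beta_5$ via the largest position $m$, the inequality $\A{5}{m}\le n$, and Proposition~\ref{p:betaA}. The only (immaterial) difference is that you handle the $i=3,4$ terms separately by a convergent geometric series, whereas the paper simply bounds every $|r_{5,i}|^q$ by $1$ and absorbs all four zeros into the single constant $M=\sum_i |d_{5,i}|$.
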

\begin{proof}
Here, $|r_{5,1}|=|r_{5,2}|=1$ and $|r_{5,3}|=|r_{5,4}|<1$.
Thanks to Proposition~\ref{p:deltadecomp}:
\[
|\delta_5(n)| \le
\sum_{q\in D_5(n)} \sum_{i=1}^4 |d_{5,i}|\,|r_{5,i}|^q \le
M\!\!\sum_{q\in D_5(n)} \!\!1
\]
where $M = \sum_{1\le i} |d_{5,i}|$.
For $n\neq 0$, let us call $p$ the largest element in $D_5(n)$.
The number of elements in $D_5(n)$ is at most $p+1$.
Moreover $\A{5}{p}\le n$, hence by
Proposition~\ref{p:betaA} $\beta_5^p \le n$, so
$p{+}1 \le 1 + \ln n / \ln \beta_5$, and finally
$|\delta_5(n)| \le C\,\ln n + M$ with $C= M / \ln \beta_5$, hence
the desired asymptotic bound.
\end{proof}

\subsection{ Finite discrepancy up to \texorpdfstring{$k=4$}{\it k=4}}

First, the discrepancies $\Delta_1$ and $\Delta_2$ could easily be determined:
\begin{itemize}
\item From $F_1(n)=\lfloor\tfrac{n+1}{2}\rfloor$ and
$\alpha_1 = \tfrac{1}{2}$, we obtain
  $\delta_1(n) \in \{0,\tfrac{1}{2}\}$ for all $n\ge 0$ and
  then $\Delta_1 = \tfrac{1}{2}$.
\item From $F_2(n)=G(n)=\lfloor (n{+}1)/\varphi \rfloor$ and
  $\alpha_2 = \tfrac{1}{\varphi}=\varphi-1$ where $\varphi$ is the
  Golden Ratio, we obtain $\Delta_2 = \varphi{-}1$.
\end{itemize}

For precise estimates of $\Delta_3$ and $\Delta_4$, we first show
how to compute efficiently extrema of $\delta_k(n)$ when $0\le n< N$
and $N$ is an $\A{k}{p}$ number.
In this case, instead of computing $\delta_k(n)$ everywhere, we can
focus only on $\A{k}{p}$ numbers.
This part works for any $k\ge 1$, but will be used only for $k=3,4$.

\begin{definition}
For any integer $k\ge 1$, we define the following sequences of real numbers:
\[
\dmax{k}{p} =
\begin{cases}
 0, & \text{if $p=0$;} \\
 \max(\dmax{k}{p-1},\ \dmax{k}{p\ominus k} + \delta_k(\A{k}{p-1})), &
 \text{if $p>0$}
\end{cases}
\]
and
\[
\dmin{k}{p} =
\begin{cases}
 0, & \text{if $p=0$;} \\
 \min(\dmin{k}{p-1},\ \dmin{k}{p\ominus k} + \delta_k(\A{k}{p-1})), &
 \text{if $p>0$.}
\end{cases}
\]
\end{definition}

\begin{proposition}\label{p:maxmax}
Let $k \ge 1$ and $p\ge 0$. We abbreviate here $N = \A{k}{p}$. Then:
\[
\max_{n<N} \delta_k(n) = \dmax{k}{p}  \qquad \text{and} \qquad
\min_{n<N} \delta_k(n) = \dmin{k}{p}.
\]
Hence $\lim_{p\to+\infty} \dmax{k}{p} = \sup_n \delta_k(n)$
and $\lim_{p\to+\infty} \dmin{k}{p} = \inf_n \delta_k(n)$.
\end{proposition}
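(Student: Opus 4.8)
The plan is to establish the two equalities by a single strong induction on $p$, simultaneously for $\delta^+_{k,p}$ and $\delta^-_{k,p}$; I will focus on the maximum, the minimum being symmetric. The key structural fact I would exploit is that the canonical $k$-decomposition of any $n$ with $0 \le n < \A{k}{p}$ either has all positions strictly below $p-1$, or begins at position $p-1$ with a remainder lying below $\A{k}{p-k}$. Concretely, by Theorem~\ref{t:zeck} and the choice of highest position in its proof, the integers $n < \A{k}{p}$ split into those with $n < \A{k}{p-1}$ and those with $\A{k}{p-1} \le n < \A{k}{p}$; in the latter range $D_k(n) = \{p{-}1\} \cup D_k(n - \A{k}{p-1})$ with $n - \A{k}{p-1} < \A{k}{p-k}$ (canonicity forces the next position to be at most $p-1-k < p-k$, so the remainder is decomposed entirely below index $p-k$).

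Next I would use the additivity of $\delta_k$ over a canonical decomposition. From Proposition~\ref{p:deltadecomp} we have $\delta_k(n) = \sum_{q \in D_k(n)} \sum_{i=1}^{k-1} d_{k,i}\,r_{k,i}^q$, which is a sum over the positions $q \in D_k(n)$ of the quantity $\delta_k(\A{k}{q})$ (each single-position contribution being exactly $\A{k}{q \ominus 1} - \alpha_k \A{k}{q} = \delta_k(\A{k}{q})$ by Proposition~\ref{p:FA} and Proposition~\ref{p:Adiff}). Hence for $n$ in the second range, $\delta_k(n) = \delta_k(\A{k}{p-1}) + \delta_k(n - \A{k}{p-1})$, where the second summand ranges over all of $\{0,\dots,\A{k}{p-k}-1\}$ as $n$ runs through that range. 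Taking the maximum over the two ranges gives
\[
\max_{n<\A{k}{p}} \delta_k(n) = \max\Big(\max_{n<\A{k}{p-1}}\delta_k(n),\ \delta_k(\A{k}{p-1}) + \max_{n<\A{k}{p-k}}\delta_k(n)\Big),
\]
and by the induction hypothesis the inner maxima equal $\dmax{k}{p-1}$ and $\dmax{k}{p\ominus k}$, matching the recursive definition of $\dmax{k}{p}$. The base case $p=0$ is the empty maximum over $n<\A{k}{0}=1$, i.e.\ over $n=0$ only, giving $\delta_k(0)=0=\dmax{k}{0}$; one must also check that the rule $p\ominus k$ degenerates correctly for $1\le p\le k$, where $\A{k}{p-k}$ should be read via $p\ominus k = 0$ so the second range is a single point contributing $\delta_k(\A{k}{p-1})$ alone.

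I expect the main obstacle to be the careful bookkeeping that the two ranges $[0,\A{k}{p-1})$ and $[\A{k}{p-1},\A{k}{p})$ genuinely partition $[0,\A{k}{p})$ and that, in the second range, the shifted argument $n-\A{k}{p-1}$ sweeps exactly $[0,\A{k}{p-k})$ with every value attained and decomposed strictly below index $p-k$ so that the additivity splitting is valid. The use of $\ominus$ rather than ordinary subtraction in the recurrence is precisely to handle the small-$p$ cases uniformly, and verifying that $\A{k}{p\ominus k}$ is the correct bound there (rather than $\A{k}{p-k}$, which is undefined or wrong for $p<k$) requires invoking the extended recurrence~\eqref{Agenrec}, $\A{k}{p}=\A{k}{p-1}+\A{k}{p\ominus k}$, to confirm that the second range has exactly $\A{k}{p\ominus k}$ elements. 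The final limit statement is then immediate: since $(\A{k}{p})$ is strictly increasing and unbounded, the finite maxima $\dmax{k}{p}$ form a monotone sequence whose supremum is $\sup_n \delta_k(n)$, and likewise for the infimum.
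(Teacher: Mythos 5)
Your proof is correct and follows essentially the same route as the paper's: induction on $p$, splitting $[0,\A{k}{p})$ into $[0,\A{k}{p-1})$ and $[\A{k}{p-1},\A{k}{p})$ via Equation~\eqref{Agenrec}, additivity of $\delta_k$ across that split, and the monotone-limit argument at the end. The one difference is the justification of the additivity step: you route it through Proposition~\ref{p:deltadecomp} (together with Propositions~\ref{p:Adiff} and~\ref{p:FA}), which is only stated for $k\ge 2$, whereas the paper invokes Theorem~\ref{t:Fshift} directly, which holds for all $k\ge 1$; since the proposition claims $k\ge 1$, your argument leaves $k=1$ formally uncovered, though this is trivially patched either by citing Theorem~\ref{t:Fshift} as the paper does or by checking $\delta_1(n)\in\{0,\tfrac12\}$ by hand.
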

\begin{proof}
We proceed by induction on $p$.
The case $p=0$ is obvious since $N=1$ and hence $n=0$ and
$\delta_k(0) = F_k(0) - \alpha_k\,0 = 0$.
Now, for $p>0$, let us abbreviate $N' = \A{k}{p-1}$.
For a number $n<N$, either $n<N'$ or $N'\le n < N$. The latter
corresponds exactly to the situations where $n-N' < \A{k}{p\ominus k}$,
thanks to Equation~\eqref{Agenrec}. In these cases,
$\delta_k(n) = \delta_k(n-N') + \delta_k(N')$,
thanks to Theorem~\ref{t:Fshift}.
Finally,
\begin{align*}
\max_{n<N} \delta_k(n) &= \max(\max_{n<N'} \delta_k(n),
                                \max_{N'\le n<N} \delta_k(n)) \\
  &= \max(\max_{n<N'} \delta_k(n),
          \max_{n'<\A{k}{p\ominus k}} \delta_k(n') + \delta_k(N')) \\
  &= \max(\dmax{k}{p-1}, \dmax{k}{p\ominus k} + \delta_k(N'))
\qquad \mbox{(by I.H.)}\\
  &= \dmax{k}{p}
\end{align*}
From that, it is easy to see that the non-decreasing sequence
$(\dmax{k}{p})$
will eventually be greater or equal to any $\delta_k(n)$, while
being no more than a $\max \delta_k(n)$ at each step. Hence its limit
exists and is $\sup_n \delta_k(n)$.
The proofs for $\dmin{k}{p}$ are similar.
\end{proof}

We now estimate the remaining distance between $(\dmax{k}{p})$
and its limit and the same for $(\dmin{k}{p})$. This will
now require $|r_{k,i}|<1$ for all secondary roots ($i\neq 0$), hence
$2 \le k \le 4$.

\begin{proposition}\label{p:residue}
Let $2\le k \le 4$ and $p\ge 0$.
Then $\sup_n \delta_k(n) \le \dmax{k}{p} + R_k(p)$ and
$\inf_n \delta_k(n) \ge \dmin{k}{p} - R_k(p)$, where
\[
R_k(p) := \sum_{i=1}^{k-1} |d_{k,i}| \frac {|r_{k,i}|^p}{1-|r_{k,i}|^k}.
\]
\end{proposition}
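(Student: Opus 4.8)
The plan is to control the gap between $\delta_k(n)$ for an \emph{arbitrary} $n$ and the finite maximum $\dmax{k}{p}=\max_{n<N}\delta_k(n)$ (with $N=\A{k}{p}$) furnished by Proposition~\ref{p:maxmax}, by isolating the contribution of the ``high'' part of the $k$-decomposition of $n$. First I would take any $n\ge 0$ and split its canonical decomposition $D_k(n)$ into the positions strictly below $p$ and those at least $p$. Let $n'$ be the number whose decomposition is $\{q\in D_k(n): q<p\}$. This set is itself canonical (a sub-sequence of a canonical decomposition keeps its positions at distance $\ge k$), so by uniqueness (Theorem~\ref{t:zeck}) it equals $D_k(n')$; and since its highest position is at most $p{-}1$, the key bound in the proof of Theorem~\ref{t:zeck} gives $n'<\A{k}{p}=N$.

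Next I would use the \emph{exact} additivity of $\delta_k$ over decomposition positions from Proposition~\ref{p:deltadecomp}: partitioning the sum $\delta_k(n)=\sum_{q\in D_k(n)}\sum_{i=1}^{k-1} d_{k,i}\,r_{k,i}^q$ along this split yields $\delta_k(n)=\delta_k(n')+T$, where $T=\sum_{q\in D_k(n),\,q\ge p}\sum_{i=1}^{k-1} d_{k,i}\,r_{k,i}^q$ is the tail contribution. The core estimate is then $|T|\le R_k(p)$: because $D_k(n)$ is canonical, the positions $q\ge p$ start at $\ge p$ and are spaced at least $k$ apart, so for each fixed $i$ we have $\sum_{q\ge p,\,q\in D_k(n)}|r_{k,i}|^q\le \sum_{j\ge 0}|r_{k,i}|^{p+jk}=|r_{k,i}|^p/(1-|r_{k,i}|^k)$. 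Summing the triangle inequality over $i$ gives $|T|\le R_k(p)$. This geometric summation is valid precisely because $|r_{k,i}|<1$ for all $1\le i<k$, which by Proposition~\ref{p:secondary} holds \emph{exactly} when $2\le k\le 4$; this is the crucial use of the hypothesis and the main conceptual obstacle, since for $k\ge 5$ the corresponding series diverges, in line with Theorems~\ref{t:supdelta} and~\ref{t:supdelta5}.

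Finally I would assemble the pieces. From $\delta_k(n)=\delta_k(n')+T$, the bound $T\le R_k(p)$, and $\delta_k(n')\le \max_{m<N}\delta_k(m)=\dmax{k}{p}$ (Proposition~\ref{p:maxmax}), one gets $\delta_k(n)\le \dmax{k}{p}+R_k(p)$ for every $n$, hence $\sup_n\delta_k(n)\le \dmax{k}{p}+R_k(p)$. The lower bound is entirely symmetric, using $T\ge -R_k(p)$ and $\delta_k(n')\ge \dmin{k}{p}$. Apart from the geometric tail bound, the only point needing care is the verification that the low positions genuinely form $D_k(n')$ with $n'<N$; the rest is a routine application of the already-established decomposition additivity.
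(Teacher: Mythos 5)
Your proof is correct and follows essentially the same route as the paper's: split $D_k(n)$ at position $p$, bound the low part $\delta_k(n')$ by $\dmax{k}{p}$ and $\dmin{k}{p}$ via Proposition~\ref{p:maxmax}, and bound the tail by the geometric series $\sum_{j\ge 0}|r_{k,i}|^{p+jk}$ using the spacing $\ge k$ of canonical positions and $|r_{k,i}|<1$ from Proposition~\ref{p:secondary}. The only cosmetic difference is that you derive the splitting identity $\delta_k(n)=\delta_k(n')+T$ directly from Proposition~\ref{p:deltadecomp}, whereas the paper invokes Theorem~\ref{t:Fshift} for that step; both justifications are equally valid.
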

\begin{proof}
Let $n\ge 0$. We consider its $k$-decomposition $D_k(n)$ (see
Theorem~\ref{t:zeck}) and partition it via comparisons with $p$:
\[
n = \sum_{q\in D_k(n)} \A{k}{q} =
\sum_{\substack{q\in D_k(n) \\ q<p}} \A{k}{q} +
\sum_{\substack{q\in D_k(n) \\ q\ge p}} \A{k}{q}
\]
Let us call $n_1$ and $n_2$ these two last sums.
Via Theorem~\ref{t:Fshift}, we can show that
$\delta_k(n) = \delta_k(n_1) + \delta_k(n_2)$.
Since $n_1 < \A{k}{p}$ (see for instance the proof of Theorem~\ref{t:zeck}),
we have $\dmin{k}{p} \le \delta_k(n_1) \le \dmax{k}{p}$ thanks to the
previous Proposition~\ref{p:maxmax}.
Moreover:
\[
|\delta_k(n_2)| \le
\sum_{i=1}^{k-1} \left(|d_{k,i}|\sum_{\!q\in D_k(n_2)\!}
|r_{k,i}|^q\right)
\le
\sum_{i=1}^{k-1} \left(|d_{k,i}|\sum_{m=0}^{+\infty} |r_{k,i}|^{p+km}\right)
\le R_k(p).
\]
Indeed, the first inequality comes from
Proposition~\ref{p:deltadecomp}.
Then the indices in $D_k(n_2)$ are all at least $p$, and
apart by at least $k$, so the $m$-th such index is at least $p{+}km$.
And the Proposition~\ref{p:secondary} implies here that
$|r_{k,i}| \le |r_{k,1}|<1$ for all $1 \le i < k$.

Combined together, this provides
$\dmin{k}{p} - R_k(p) \le \delta_k(n) \le \dmax{k}{p} + R_k(p)$
for all $n\ge 0$, hence the desired inequalities with $\sup$ and $\inf$.
\end{proof}

We now provide approximations of $\Delta_3$ with more than
32 correct decimal digits and $\Delta_4$ with more than 15 correct
digits.

\begin{theorem}\label{t:delta34}
\begin{align*}
d_3^+ - 3.10^{-33} \le \mbox{$\sup_n \delta_3(n)$} \le d_3^+ \quad
\text{with} \quad d_3^+ &= 0.854187179928304211983581540152668 \\
d_3^- \le \mbox{$\inf_n \delta_3(n)$} \le d_3^- + 3.10^{-33} \quad
\text{with} \quad d_3^- &= -0.708415898743967960305146324178773 \\
d_4^+ - 6.10^{-16} \le \mbox{$\sup_n \delta_4(n)$} \le d_4^+ \quad
\text{with} \quad d_4^+ &= 1.5834687793247475 \\
d_4^- \le \mbox{$\inf_n \delta_4(n)$} \le d_4^- + 6.10^{-16} \quad
\text{with} \quad d_4^- &= -1.5060895457389591.
\end{align*}
Hence $\Delta_3 = \sup_n \delta_3(n)$ and hence $d_3^+$ is also an
upper approximation of $\Delta_3$, with the same precision as above.
Similarly, $d_4^+$ is also an upper approximation of
$\Delta_4 = \sup_n \delta_4(n)$.
\end{theorem}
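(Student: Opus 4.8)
The plan is to turn the two-sided enclosure already assembled in Propositions~\ref{p:maxmax} and~\ref{p:residue} into certified numerics for a single, well-chosen truncation index~$p$. Proposition~\ref{p:maxmax} tells us that the sequence $(\dmax{k}{p})$ is non-decreasing with limit $\sup_n \delta_k(n)$, so each $\dmax{k}{p}$ is a rigorous lower bound; and for $2\le k\le 4$ Proposition~\ref{p:residue} supplies the matching upper bound. Hence for $k\in\{3,4\}$ and every $p$,
\[
\dmax{k}{p} \le \sup_n \delta_k(n) \le \dmax{k}{p} + R_k(p),
\qquad
\dmin{k}{p} - R_k(p) \le \inf_n \delta_k(n) \le \dmin{k}{p}.
\]
It therefore suffices to (i)~evaluate $\dmax{k}{p}$ and $\dmin{k}{p}$ rigorously for one index~$p$ and (ii)~bound $R_k(p)$ below the target tolerances ($3{\cdot}10^{-33}$ for $k=3$, $6{\cdot}10^{-16}$ for $k=4$). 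I would then take $d_k^+$ to be the rounded-up value of $\dmax{k}{p}+R_k(p)$ and $d_k^-$ the rounded-down value of $\dmin{k}{p}-R_k(p)$, which makes both displayed inequalities hold by construction.

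For step~(i), note that the recurrences defining $\dmax{k}{p}$ and $\dmin{k}{p}$ only consume the reals $\delta_k(\A{k}{p-1})$. By Proposition~\ref{p:FA} we have $F_k(\A{k}{p})=\A{k}{p\ominus 1}$, so $\delta_k(\A{k}{p}) = \A{k}{p\ominus 1} - \alpha_k\,\A{k}{p}$ is an explicit $\mathbb{Z}$-linear combination of $1$ and $\alpha_k$ whose coefficients are the integers $\A{k}{\cdot}$. Thus the entire min/max table up to the chosen index is produced from a \emph{single} certified enclosure of $\alpha_k$ (the positive zero of $P_k=X^k+X-1$) by propagating interval arithmetic through the two recurrences, with no further approximation entering.

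For step~(ii), the residue $R_k(p)=\sum_{i=1}^{k-1}|d_{k,i}|\,|r_{k,i}|^p/(1-|r_{k,i}|^k)$ decreases geometrically, since Proposition~\ref{p:secondary} guarantees $|r_{k,i}|<1$ for all secondary zeros when $k\le 4$. To make this quantitative and rigorous I would compute certified complex enclosures of the secondary zeros $r_{k,i}$ of $Q_k=X^k-X^{k-1}-1$ and of the coefficients $d_{k,i}=c_{k,i}(r_{k,i}^{-1}-\alpha_k)$ coming from Propositions~\ref{p:Alincomb} and~\ref{p:Adiff}, then over-estimate the finite sum $R_k(p)$ by interval arithmetic. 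Choosing $p$ large enough to undercut the tolerance (roughly $p\approx 400$ for $k=3$, since $|r_{3,1}|=\beta_3^{-1/2}\approx 0.83$, and far fewer steps for $k=4$) closes the gap. Finally, comparing the numerical values shows $\sup_n\delta_k(n)>|\!\inf_n\delta_k(n)|$ for $k=3,4$, whence $\Delta_k=\sup_n\delta_k(n)$ and $d_k^+$ is an upper approximation of $\Delta_k$ with the stated precision.

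The hard part is not the mathematics, which the sandwich above already settles, but making the whole evaluation a genuine guarantee rather than a floating-point estimate: one must carry certified enclosures for the algebraic numbers $\alpha_k$, $r_{k,i}$ and $d_{k,i}$ and use a disciplined interval arithmetic that never mis-rounds the min/max table nor underestimates $R_k(p)$, so that the $32$- and $15$-digit claims are certified. This is exactly where the Coq/Rocq interval-arithmetic development does the heavy lifting.
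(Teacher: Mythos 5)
Your proposal follows the same skeleton as the paper's proof: the sandwich $\dmax{k}{p} \le \sup_n \delta_k(n) \le \dmax{k}{p} + R_k(p)$ (and its mirror for the infimum) from Propositions~\ref{p:maxmax} and~\ref{p:residue}, a choice of truncation index $p$ making $R_k(p)$ small, and a certified numerical evaluation; your final step ($\sup > |\inf|$ hence $\Delta_k = \sup_n\delta_k(n)$) also matches. The one genuine difference is how the table $\dmax{k}{p}$, $\dmin{k}{p}$ is evaluated rigorously. You propagate interval arithmetic through the recurrences starting from one enclosure of $\alpha_k$; this is workable, since the max/min of two intervals is a valid enclosure and the accumulated width is of order $w(\alpha_k)\cdot \A{k}{p}$, but it forces the error analysis to ride along every step. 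The paper instead keeps the computation \emph{exact} until the very end: every entry of the table has the form $a - \alpha_k\,b$ with $a,b \in \mathbb{N}$, and two such numbers are compared exactly by testing the sign of $P_k\bigl(\tfrac{a-a'}{b-b'}\bigr)$, so $\dmax{3}{400}$, $\dmin{3}{400}$, $\dmax{4}{600}$, $\dmin{4}{600}$ are obtained as exact integer pairs; only then is a single high-precision rational enclosure of $\alpha_k$ (width $10^{-100}$) substituted. What the paper's method buys is that rounding enters exactly once, which makes the Coq certification and the error budget cleaner; what yours buys is uniformity (no exact-comparison machinery needed). Do note one concrete quantitative slip: you claim $k=4$ needs ``far fewer steps'' than $k=3$, but it is the opposite --- the paper uses $p=600$ for $k=4$ versus $p=400$ for $k=3$, because $|r_{4,1}|\approx 0.9404$ is much closer to $1$ than $|r_{3,1}| = \sqrt{\alpha_3} \approx 0.826$; this slow decay of $R_4(p)$ is precisely why the $k=4$ result is only claimed to $6\cdot 10^{-16}$ while $k=3$ reaches $3\cdot 10^{-33}$. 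Since your method is ``increase $p$ until the residue undercuts the tolerance,'' this mistake self-corrects in practice and does not invalidate the argument.
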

\begin{proof}
Thanks to Proposition~\ref{p:residue}, it suffices to find values of
$p$ for which the residues $R_k(p)$ are small enough, and then compute
$\dmax{k}{p}$ and $\dmin{k}{p}$ with the desired precision.
The full details of these computations can be found in the files
{\tt F3.v} and {\tt F4.v} of the Coq/Rocq development~\cite{LetouzeyCoqDev}.
In particular, all the inequalities mentioned here have been verified in
Coq, hence they cannot suffer from errors such as incorrect rounding.

For $k=3$ we use $p=400$ since it involves reasonable computations in
Coq, while ensuring $R_3(400) < 10^{-33}$.
Indeed here $r_{3,2} = \overline{r_{3,1}}$ and
$d_{3,2} = \overline{d_{3,1}}$, moreover $|r_{3,1}|^2 = \alpha_3 < 0.6824$
since $r_{3,0}\,r_{3,1}\,r_{3,2} = 1$ and $r_{3,0} = \beta_3 =
\alpha_3^{-1}$.
After some more computations we obtain
\[
R_3(2p) = \frac{2|d_{3,1}|}{1-|r_{3,1}|^3} \, |r_{3,1}|^{2p}
< 1.112 \cdot 0.6824^p
\]
which leads to $R_3(400) < 10^{-33}$.

For $k=4$ we use $p=600$, which is still tractable in Coq, but only
ensures $R_4(600) < 3.10^{-16}$, since the convergence is quite
slower, here $|r_{4,1}| \approx 0.9404$ and $|r_{4,1}|^2 \approx 0.8844$.

Now, for computing $\dmax{k}{p}$ and $\dmin{k}{p}$,
we exploit the fact that the numbers in the $(\dmax{k}{p})$
and $(\dmin{k}{p})$ sequences are always of the form $a-\alpha_k\,b$
with $a,b\in\mathbb{N}$. Under this form, it is even possible to
compare two such numbers in an exact manner. Indeed,
$a-\alpha_k\,b < a'-\alpha_k\,b' $ whenever $(a-a') < (b-b')\,\alpha_k$.
Since $\alpha_k$ is the unique positive root of $P_k(X)=X^k+X-1$,
the sign of $P_k(\tfrac{a-a'}{b-b'})$ will hence govern the result of the
comparison, except in trivial cases where the comparison result is
already clear, for instance when $b=b'$ or when
$(a-a')$ and $(b-b')$ are of opposite signs.
This way, the desired $\dmax{k}{p}$ and $\dmin{k}{p}$
are computed as pairs of exact integers, namely:
\begin{align*}
\dmax{3}{400} =\ &
 2031786811214411359348883471336991724172972024370943840592871475504 \,-\\
 & 2977728299822475173916958459765758872136894523385938812610760693222
 \,\alpha_3
 \\
\dmin{3}{400} =\ &
 1020161268160344624669178328493016309710214886667706164633381972074 \,-\\
 &1495119006490722158917214418259808655182461295204695261413364767438
 \,\alpha_3 \\
\dmax{4}{600} =\ &
 474542795998615222029347282639059927656268169929641003315959991602098 \\
 & 421282890067492 \,-\,
 655000776893753621409603547449877928720169482765627 \\
 & 175295567680977505721573702352765 \, \alpha_4 \\
\dmin{4}{600} =\ &
 915037483574937370155779315529924955263716715472701264249494125085598 \\
 & 423291500577325 \,-\,
 126300571346201255737296305828053313770378412375106 \\
 & 8121378399064986894058134103876852 \, \alpha_4
\end{align*}
As a final step, these expressions are converted to rational intervals
bounding
$\dmax{k}{p}$ and $\dmin{k}{p}$, using some high-precision rational
intervals
bounding $\alpha_k$. Indeed, $\alpha_k$ is
straightforward to approximate, for instance via the Newton method, and
these approximations can easily be verified in Coq with just a
computation of the sign of $P_k(X)$ there. We use a precision of
$10^{-100}$ when approximating $\alpha_3$, leading to
precision of $2.10^{-33}$ on $\dmax{3}{400}$ and $\dmin{3}{400}$.
Together with the precision on $R_3(400)$, this provides the announced
interval of diameter $3.10^{-33}$.
Similarly for $k=4$, the $10^{-100}$ precision used on $\alpha_4$
provides a $3.10^{-16}$ precision on $\dmax{4}{600}$ and
$\dmin{4}{600}$ that adds up with the precision on $R_3(400)$ to
give the announced intervals.
\end{proof}

\subsection{Applications of these discrepancy results}

In the OEIS pages~\seqnum{A5206} and \seqnum{A5375} about $F_3$ and
$F_4$ \cite{OEIS},
Cloitre conjectured twenty year ago that
$F_3-\lfloor \alpha_3\,n\rfloor$ is either 0
or 1, and the same concerning $F_4-\lfloor \alpha_4\,n\rfloor$.
We can now prove this conjecture concerning $F_3$ and a corrected
version of the conjecture about $F_4$.

\begin{corollary}
For all $n\ge 0$, $F_3(n) -\lfloor \alpha_3\,n\rfloor \in\{0,1\}$
and $F_4(n) - \lfloor \alpha_4\,n\rfloor \in\{-1,0,1,2\}$.
All these situations are reached.
\end{corollary}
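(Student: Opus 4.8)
The plan is to reduce the whole statement to the sharp discrepancy bounds of Theorem~\ref{t:delta34} by separating $F_k(n)-\lfloor\alpha_k n\rfloor$ into a discrepancy part and a fractional part. Writing $\{x\}=x-\lfloor x\rfloor\in[0,1)$, I would use the identity
\[
F_k(n)-\lfloor\alpha_k n\rfloor=\bigl(F_k(n)-\alpha_k n\bigr)+\bigl(\alpha_k n-\lfloor\alpha_k n\rfloor\bigr)=\delta_k(n)+\{\alpha_k n\}.
\]
Since $F_k(n)\in\mathbb{N}$ and $\lfloor\alpha_k n\rfloor\in\mathbb{Z}$, the left-hand side is an integer, so it suffices to confine $\delta_k(n)+\{\alpha_k n\}$ to an open interval containing exactly the claimed integers.

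For the membership claims I would bound $\delta_k(n)$ using Theorem~\ref{t:delta34}, which gives $d_k^-\le\delta_k(n)\le d_k^+$ for every $n$. Combined with $0\le\{\alpha_k n\}<1$ this yields $d_k^-\le F_k(n)-\lfloor\alpha_k n\rfloor<d_k^++1$. For $k=3$ the numerical facts $d_3^->-1$ and $d_3^++1<2$ force the integer difference into $(-1,2)\cap\mathbb{Z}=\{0,1\}$; for $k=4$ the facts $d_4^->-2$ and $d_4^++1<3$ force it into $(-2,3)\cap\mathbb{Z}=\{-1,0,1,2\}$. These are exactly the kind of comparisons with integers already certified in Theorem~\ref{t:delta34}, so nothing new has to be computed.

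For the claim that all values are attained, I would exhibit small witnesses for the interior values and use strict $\sup/\inf$ bounds for the extreme ones. The value $0$ occurs at $n=0$ in both cases (as $F_k(0)=0$), and the value $1$ occurs at $n=1$ (as $F_k(1)=1$ by Proposition~\ref{p:basicF} while $\lfloor\alpha_k\rfloor=0$). For the extremes I would note the implications: if $\delta_k(n)>1$ then $F_k(n)>\lfloor\alpha_k n\rfloor+1$, so the difference must be $2$; and if $\delta_k(n)<-1$ then $F_k(n)<\lfloor\alpha_k n\rfloor$, so the difference must be $-1$. Theorem~\ref{t:delta34} gives $\sup_n\delta_4(n)\ge d_4^+-6\cdot10^{-16}>1$ and $\inf_n\delta_4(n)\le d_4^-+6\cdot10^{-16}<-1$, so by the definition of supremum and infimum there are $n$ with $\delta_4(n)>1$ and with $\delta_4(n)<-1$; hence both $2$ and $-1$ occur for $k=4$.

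There is no genuinely hard step: the corollary is a bookkeeping consequence of the already-proved sharp bounds. The only points requiring a little care are checking that the difference is truly an integer (so that an open interval of length just under $2$, resp.\ $4$, pins it to the stated finite set), and verifying that the $\sup$ and $\inf$ from Theorem~\ref{t:delta34} \emph{strictly} exceed the thresholds $\pm1$ rather than merely reaching them, which is what guarantees the values $2$ and $-1$ are actually hit.
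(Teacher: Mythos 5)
Your proof is correct. The membership part coincides with the paper's own argument: both pin the integer $F_k(n)-\lfloor\alpha_k\,n\rfloor$ inside an open interval using the bounds $d_k^- \le \delta_k(n) \le d_k^+$ from Theorem~\ref{t:delta34} (the paper phrases this as $F_3(n)-1 < \alpha_3\,n < F_3(n)+1$, you phrase it via the fractional part $\{\alpha_k\,n\}$; it is the same estimate), and the witnesses $n=0,1$ for the values $0,1$ are identical. Where you genuinely diverge is in showing that the extreme values $2$ and $-1$ are attained for $F_4$. The paper exhibits explicit witnesses, namely the first occurrences $n=120$ (difference $2$) and $n=243$ (difference $-1$), which are finite computations. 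You instead argue non-constructively: Theorem~\ref{t:delta34} bounds $\sup_n \delta_4(n)$ from \emph{below} by $d_4^+ - 6\cdot 10^{-16} > 1$ and $\inf_n \delta_4(n)$ from \emph{above} by $d_4^- + 6\cdot 10^{-16} < -1$, so some $n$ satisfies $\delta_4(n)>1$, forcing $F_4(n) > \alpha_4\,n + 1 \ge \lfloor\alpha_4\,n\rfloor + 1$ and hence difference exactly $2$ by the membership claim, and symmetrically some $n$ has difference exactly $-1$. This is a clean exploitation of the two-sided nature of Theorem~\ref{t:delta34} (the lower estimate on the sup and upper estimate on the inf, which the paper's proof of this corollary never uses), and it requires no additional computation; the trade-off is that it produces no explicit witness, whereas the paper's version documents where the rare values first occur.
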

\begin{proof}
From the previous theorem, $F_3(n)-1 < \alpha_3\,n < F_3(n)+1$
hence the integer $\lfloor \alpha_3\,n \rfloor$ can only be $F_3(n)-1$
or $F_3(n)$.
We handle $F_4$ similarly.
Now, $F_3(0)=0=\lfloor 0\,\alpha_3 \rfloor$ and
$F_3(1)=1=1+\lfloor 1\,\alpha_3 \rfloor$.
For $F_4(n)$, taking $n=0$ and $n=1$ also exhibits a difference of $0$
and $1$ with $\lfloor \alpha_4\,n\rfloor$. Differences of $-1$ and $2$
are less frequent, their first occurrences are $n=120$ for
$2$ and $n=243$ for $-1$.
\end{proof}

We can also express that for $k\le 4$, the functions $F_k$ are close to
be additive maps.

\begin{definition} A function $f : \mathbb{N}\to\mathbb{N}$ is said
to be \emph{almost additive} when
\[ \sup_{(n,m)\in\mathbb{N}^2} |f(n{+}m)-f(n)-f(m)| < +\infty. \]
\end{definition}

Interestingly, A'Campo~\cite{acampo2003} called \emph{slopes} similar
functions on integers and showed that they could be used as an
alternative construction for $\mathbb{R}$.

\begin{corollary}
Both $F_3$ and $F_4$ are almost additive: for all $n,m\ge 0$ we have
\begin{align*}
|F_3(n{+}m)-F_3(n)-F_3(m)| &\le 2 \\
|F_4(n{+}m)-F_4(n)-F_4(m)| &\le 4.
\end{align*}
\end{corollary}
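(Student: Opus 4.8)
The plan is to derive almost additivity directly from the discrepancy bounds established in Theorem~\ref{t:delta34}. The key observation is that $\delta_k(n) = F_k(n) - \alpha_k\,n$, so the quantity $\alpha_k\,n$ is exactly additive: writing $F_k(n) = \alpha_k\,n + \delta_k(n)$ for each argument, the linear parts telescope and cancel. Concretely,
\begin{align*}
F_k(n{+}m) - F_k(n) - F_k(m)
&= \big(\alpha_k(n{+}m) + \delta_k(n{+}m)\big) - \big(\alpha_k\,n + \delta_k(n)\big) - \big(\alpha_k\,m + \delta_k(m)\big) \\
&= \delta_k(n{+}m) - \delta_k(n) - \delta_k(m).
\end{align*}
Thus the additivity defect is governed entirely by the three discrepancy values, and applying the triangle inequality gives
\[
|F_k(n{+}m) - F_k(n) - F_k(m)| \le |\delta_k(n{+}m)| + |\delta_k(n)| + |\delta_k(m)| \le 3\,\Delta_k.
\]

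For $k=3$, Theorem~\ref{t:delta34} gives $\sup_n\delta_3 \le d_3^+ < 0.8542$ and $\inf_n\delta_3 \ge d_3^- > -0.7085$, so $\Delta_3 = d_3^+ < 0.8542$ and the crude bound $3\Delta_3$ already yields roughly $2.56$, which is not tight enough for the claimed bound of $2$. To obtain the sharp constants I would instead bound the signed sum $\delta_k(n{+}m) - \delta_k(n) - \delta_k(m)$ from above and below separately, using the genuine sup and inf rather than a symmetric $\Delta_k$. The upper bound is $\sup\delta_k - 2\inf\delta_k$ and the lower bound is $2\sup\delta_k^{\,(\text{via }-}) - \ldots$; more precisely the defect lies in the interval $[\inf_n\delta_k - 2\sup_n\delta_k,\ \sup_n\delta_k - 2\inf_n\delta_k]$, so its absolute value is at most $\max(\sup_n\delta_k, -\inf_n\delta_k) + 2\max(\sup_n\delta_k, -\inf_n\delta_k) \le 3\max(|d_k^+|,|d_k^-|)$. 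For $k=3$ this is $3\cdot 0.8542 \approx 2.56$, still slightly too large.

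The main obstacle is therefore closing this gap between the naive $3\Delta_k$ estimate and the asserted integer bounds $2$ and $4$. The resolution is that $F_k(n{+}m)-F_k(n)-F_k(m)$ is an \emph{integer}, so any real bound strictly below an integer can be rounded down. Since the defect equals $\delta_k(n{+}m)-\delta_k(n)-\delta_k(m)$ and lies in the open interval $(\inf\delta_k - 2\sup\delta_k,\ \sup\delta_k - 2\inf\delta_k)$, for $k=3$ this interval is contained in $(-2.42,\ 2.27)$, forcing the integer defect into $\{-2,-1,0,1,2\}$ and hence $|F_3(n{+}m)-F_3(n)-F_3(m)| \le 2$. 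For $k=4$, with $d_4^+ < 1.5835$ and $d_4^- > -1.5061$, the interval is contained in $(-4.67,\ 4.60)$, so the integer defect lies in $\{-4,\dots,4\}$, giving the bound $4$. I would present exactly this integrality-plus-rounding argument, checking that the interval endpoints genuinely fall strictly inside the integer window in each case, which the precise decimal bounds of Theorem~\ref{t:delta34} guarantee.
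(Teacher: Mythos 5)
Your proof is correct and follows essentially the same route as the paper: cancel the linear parts so the additivity defect equals $\delta_k(n{+}m)-\delta_k(n)-\delta_k(m)$, bound this real quantity using Theorem~\ref{t:delta34}, and invoke the integrality of the defect to round the real bound down to $2$ (resp.\ $4$). The paper just uses the cruder symmetric bound $3\Delta_3<3$ (resp.\ $3\Delta_4<5$) where you use the signed interval $[\inf_n\delta_k-2\sup_n\delta_k,\ \sup_n\delta_k-2\inf_n\delta_k]$; both arguments close the gap identically, via integrality.
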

\begin{proof}
$|F_3(n{+}m)-F_3(n)-F_3(m)|$ can be reformulated as
$|\delta_3(n{+}m)-\delta_3(n)-\delta_3(m)|$ which is strictly less
  that 3 by Theorem~\ref{t:delta34}.
Hence this integer quantity is at most $2$.
Similarly, for $F_4$ a strict bound is $3 \delta_4 < 5$, hence the large
inequality with $4$.
\end{proof}
For $F_3$, this almost-additivity bound of $2$ is reached:
for instance $F_3(18+78)=65$ while $F_3(18)+F_3(78)=13+54=67$ or
$F_3(39+168)=142$ while $F_3(39)+F_3(168)=26+114=140$.
But the bound $4$ for $F_4$ does not seem to be reached
and could probably be improved to $3$.

For similar reasons, $F_1$ and $F_2$ are also almost additive, with
bounds $1$. On the contrary, $F_k$ cannot be
almost additive when $k\ge 5$:

\begin{proposition}
For $k\ge 5$,
$\sup_{(n,m)\in\mathbb{N}^2} |F_k(n{+}m)-F_k(n)-F_k(m)| = +\infty$.
\end{proposition}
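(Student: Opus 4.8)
The plan is to reduce the statement to the already-established divergence of the discrepancy, via the elementary observation that the linear equivalent is additive. Since $\alpha_k(n{+}m) = \alpha_k n + \alpha_k m$, the defining relation $\delta_k(n) = F_k(n) - \alpha_k n$ gives, for all $n,m\ge 0$,
\[
F_k(n{+}m) - F_k(n) - F_k(m) = \delta_k(n{+}m) - \delta_k(n) - \delta_k(m).
\]
So the claim is equivalent to showing that $\sup_{(n,m)} |\delta_k(n{+}m) - \delta_k(n) - \delta_k(m)| = +\infty$ whenever $k\ge 5$. This is where the previously proved unboundedness of $\delta_k$ enters.

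First I would argue by contradiction: suppose this supremum is some finite bound $B$, so that $\delta_k$ is a \emph{quasimorphism}, $|\delta_k(n{+}m) - \delta_k(n) - \delta_k(m)| \le B$ for all $n,m$. A routine induction on $j$, telescoping $\delta_k(jn)$ through the additions $\delta_k((j{+}1)n) = \delta_k(jn + n)$ and applying the triangle inequality at each step, then yields
\[
|\delta_k(jn) - j\,\delta_k(n)| \le (j{-}1)\,B \qquad \text{for all } n\ge 1,\ j\ge 1.
\]
The point of this iteration is to access the asymptotic linear rate of $\delta_k$, which by Theorem~\ref{t:limits} is zero: since $F_k(m) \sim \alpha_k m$, we have $\delta_k(m)/m = F_k(m)/m - \alpha_k \to 0$ as $m\to\infty$.

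Dividing the iterated estimate by $jn$ gives $\bigl|\delta_k(jn)/(jn) - \delta_k(n)/n\bigr| \le B/n$, and letting $j\to\infty$ (so that $jn\to\infty$ and the first term vanishes) forces $|\delta_k(n)| \le B$ for every $n$. Thus $\delta_k$ would be bounded, contradicting Theorems~\ref{t:supdelta} and~\ref{t:supdelta5}, which establish $\sup_n \delta_k(n) = +\infty$ precisely for $k\ge 6$ and $k=5$ respectively. I do not expect a serious obstacle here; the only delicate point is the homogenization step, namely recognizing that a quasimorphism with vanishing linear growth rate must itself be bounded. Note that this argument is uniform in $k\ge 5$ and uses nothing about the specific divergence rate (logarithmic for $k=5$, a power for $k\ge 6$): the mere infinitude of $\sup_n \delta_k(n)$ suffices.
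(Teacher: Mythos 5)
Your proof is correct, and it takes a genuinely different route through the middle of the argument. The paper also argues by contradiction and also ends by contradicting Theorems~\ref{t:supdelta} and~\ref{t:supdelta5}, but its key device is Fekete's lemma: assuming $|F_k(n{+}m)-F_k(n)-F_k(m)|\le C$, the sequence $u_n=F_k(n)+C$ is subadditive, so $\lim \tfrac{1}{n}u_n=\inf \tfrac{1}{n}u_n$, and since Theorem~\ref{t:limits} identifies this limit as $\alpha_k$, one gets $u_n\ge \alpha_k\,n$, i.e.\ $\delta_k(n)\ge -C$; the symmetric argument with the superadditive $v_n=F_k(n)-C$ gives $\delta_k(n)\le C$. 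Your telescoping estimate $|\delta_k(jn)-j\,\delta_k(n)|\le (j{-}1)B$, combined with $\delta_k(m)/m\to 0$ (the same input from Theorem~\ref{t:limits}), replaces Fekete's lemma by the standard homogenization argument for quasimorphisms and produces the two-sided bound $|\delta_k(n)|\le B$ in a single pass, rather than splitting into subadditive and superadditive halves. What your version buys is self-containedness — no external lemma, just a triangle-inequality induction and a limit; what the paper's version buys is brevity, treating Fekete as a known tool. Two cosmetic points to make explicit if you write this up: the division by $jn$ requires $n\ge 1$ (the case $n=0$ is trivial since $\delta_k(0)=0$), and the contradiction only needs $\sup_n|\delta_k(n)|=+\infty$, which is exactly what Theorems~\ref{t:supdelta} and~\ref{t:supdelta5} provide for $k\ge 6$ and $k=5$ respectively — as you correctly note, the divergence rates are irrelevant.
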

\begin{proof}
Let $k\ge 5$. Suppose there exists a constant $C$ such that
for all $n,m\ge 0$ we have $|F_k(n{+}m){-}F_k(n){-}F_k(m)| \le C$.
Then consider $u_n=F_k(n)+C$. This sequence is subadditive, since
$u_{n{+}m} = F_k(n{+}m)+C \le F_k(n)+F_k(m)+2C = u_n+u_m$.
Hence by Fekete's subadditive lemma, $\lim \tfrac{1}{n}u_n$ exists
and is equal to $\inf\tfrac{1}{n}u_n$. Now, from Theorem~\ref{t:limits}
we know that $\lim \tfrac{1}{n}u_n = \alpha_k$. Hence
$\inf \tfrac{1}{n}u_n = \alpha_k$ and so $u_n \ge \alpha_k\,n$ for all
$n$, i.e., $F_k(n)-\alpha_k\,n \ge -C$. Similarly, $v_n = F_k(n)-C$ is
superadditive and mutatis mutandis we also get $F_k(n)-\alpha_k\,n \le C$.
All in all, $|\delta_k(n)| < C$ for all $n$, which
contradicts either Theorem~\ref{t:supdelta} or
Theorem~\ref{t:supdelta5}.
\end{proof}

From Theorem~\ref{t:delta34} and the bounds on $\Delta_3$ and $\Delta_4$,
one may also derive
bounds on the difference between $F_3^j$ and its linear equivalent,
and similarly for $F_4^j$. For example:
\[
|F_3^2(n)-\alpha_3^2\,n| = |\delta_3(F_3(n))-\alpha_3\,\delta_3(n)|
\le (1+\alpha_3) \Delta_3 \le 1.4371.
\]
The bounds obtained this way are quite non-optimal, since the different
$\delta_k(F_k^j)$ at stake here are not visiting their ranges
independently.
In the particular case of $F_3^2$, it is possible to prove that
$-0.7864 \le |F_3^2(n)-\alpha_3^2\,n| \le 1.0393$
by considering first the numbers whose \mbox{3-decomposition} includes
the index 0 and then the other numbers.
Moreover $|F_3^2(n)-\alpha_3^2\,n|$ does slightly
exceed $1$ occasionally\footnote{In our experiments with $n$ up to $10^6$,
this happens about $0.1\%$ of the time.}. As a consequence,
$F_3^2(n) - \lfloor \alpha_3^2\,n \rfloor \in \{0,1,2\}$, and this
difference is rarely $2$, see for instance $n=1235$.

\begin{figure}[ht]
\includegraphics[width=\linewidth]{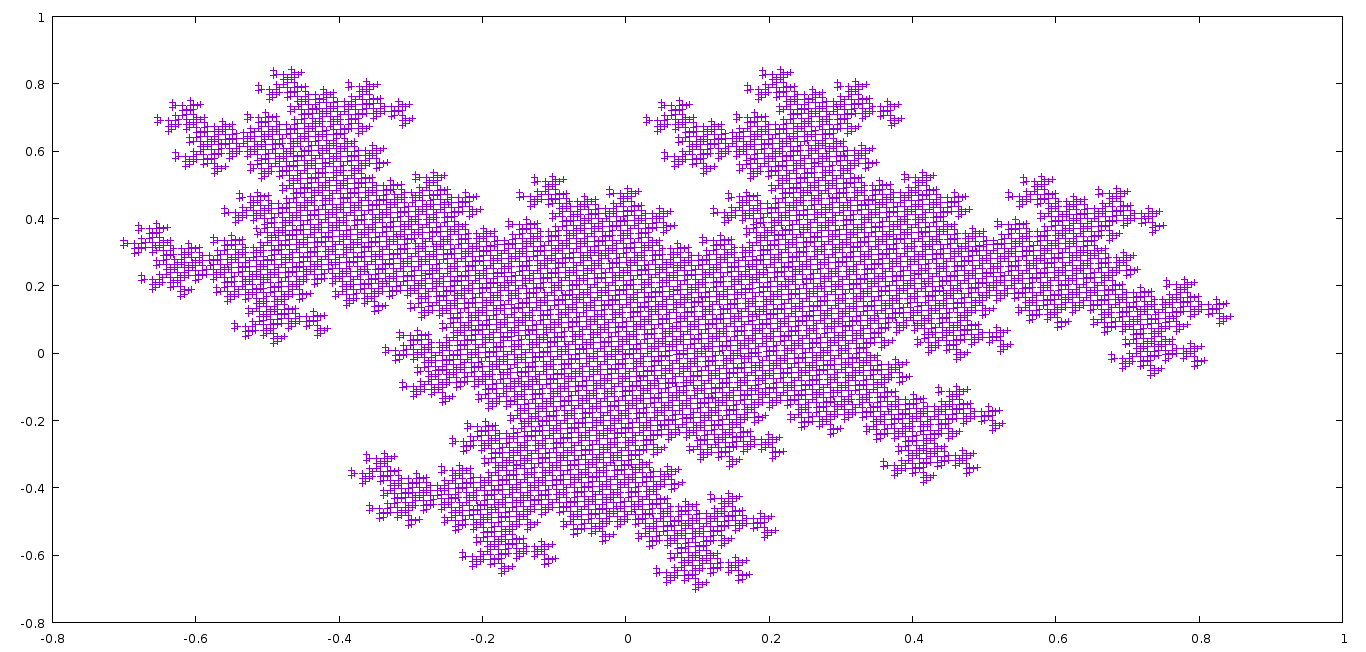}
\caption{A Jacobi-Perron fractal: displaying
  $(\delta_3(n),\delta_3(F_3(n)))$ for $n<10^4$.}
\label{f:fractal}
\end{figure}

During this study of $\delta_3$, the author made a
serendipitous encounter with the fractal shape displayed in
Figure~\ref{f:fractal}, when displaying the points
$(\delta_3(n),\delta_3(F_3(n)))$ (for example here for $n<10^4$).
Actually, this fractal happens to be the cover illustration of the
reference book named after Pytheas Fogg~\cite{Pytheas}; see in
particular its Chap. 8, where
this fractal is
associated with the modified Jacobi-Perron substitution $\sigma(1,0)$
(our $\tau_3$ up to a permutation of letters). More generally, it is
close to the Rauzy fractal~\cite{Rauzy82} obtained from the tribonacci
substitution. When giving a closer look, one may notice that our
figure is slightly distorted compared with the illustrations
in Pytheas Fogg~\cite{Pytheas}:
the constructions done there amount to consider
$(F_3(n)-\alpha_3\,n, F_3^2(n)-\alpha_3^2\,n)$ or some obvious symmetry
of it, and as we have just seen, this may be written as
$(\delta_3(n),\delta_3(F_3(n))+\alpha_3\,\delta_3(n))$.
So the distortion is actually the linear transformation
$(x,y)\mapsto(x,y\pm\alpha_3\,x)$. Either way, it is worth
noting that
such a spectacular fractal can be obtained in such a lightweight way,
via solely computing some images of the recursive function $F_3$
(i.e., Hofstadter's $H$)
as well as a constant $\alpha_3 \approx 0.6823278038280193$, zero of
$X^3{+}X{-}1$. In particular, no need for matrix reduction nor axis
projection.

As a future work, the case $k=4$ deserves some more investigation,
since it shares many aspects with the case $k=3$, in particular
$\beta_4 = 1/\alpha_4$ is also a Pisot number, and $\delta_4$ is also
bounded, with similar expression in terms of number decompositions.
Despite all that, our preliminary attempts lead to images that were
not so visually appealing, with rather ``smooth'' clouds of points.

As a final remark, recall that
$F_1(n)=\lfloor \tfrac{n+1}{2} \rfloor = \lceil\tfrac{n}{2} \rceil$
and $F_2(n) = \lfloor \tfrac{n+1}{\varphi} \rfloor$. We now prove
that no such expressions exist when $k\ge 3$: in this case $F_k$ cannot
be anymore the exact integer part of an affine function.
The case $k=3$ was already noted by Fine~\cite{Fine1986} and the rest
could be derived from Dilcher~\cite{Dilcher1993} but was not
explicitly stated there.

\begin{proposition}
\label{p:noexact}
For all $k\ge 3$ and $a,b\in\mathbb{R}$, there exists $n,n'\ge 0$
such that $F_k(n) \neq \lfloor an{+}b \rfloor$ and
$F_k(n') \neq \lceil an'{+}b \rceil$.
\end{proposition}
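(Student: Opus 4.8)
The plan is to argue by contradiction, treating the floor and ceiling statements separately. Suppose first that $F_k(n) = \lfloor an+b\rfloor$ for every $n\ge 0$. Dividing by $n$ and letting $n\to\infty$, the bounded fractional error gives $F_k(n)/n\to a$, while Theorem~\ref{t:limits} (with $j=1$) gives $F_k(n)/n\to\alpha_k$; hence $a=\alpha_k$. Consequently, for all $n$,
\[
\delta_k(n) = F_k(n)-\alpha_k\,n = \lfloor \alpha_k\,n+b\rfloor - \alpha_k\,n = b - \{\alpha_k\,n+b\},
\]
where $\{\cdot\}$ denotes the fractional part. Since $\{\alpha_k\,n+b\}\in[0,1)$, we get $\delta_k(n)\in(b-1,b]$, so that $\sup_n\delta_k(n)\le b$ and $\inf_n\delta_k(n)\ge b-1$, whence
\[
\sup_n \delta_k(n) - \inf_n \delta_k(n) \le 1.
\]

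The heart of the argument is then to contradict this by showing that for every $k\ge 3$ the total oscillation of $\delta_k$ \emph{strictly} exceeds $1$, for which I would split on $k$. For $k=3$, Theorem~\ref{t:delta34} yields $\sup_n\delta_3(n)-\inf_n\delta_3(n)\ge d_3^+ - d_3^- - 6\cdot10^{-33}>1.5$, and likewise $\sup_n\delta_4(n)-\inf_n\delta_4(n)>3$ for $k=4$; both are comfortably above $1$. For $k\ge 5$, Theorems~\ref{t:supdelta} and~\ref{t:supdelta5} give $\sup_n\delta_k(n)=+\infty$ while $\inf_n\delta_k(n)\le\delta_k(0)=0$, so the oscillation is infinite. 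In every case this contradicts the displayed bound, ruling out $F_k(n)=\lfloor an+b\rfloor$ for all $n$ and thus producing the required index $n$.

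For the ceiling statement I would proceed identically. Assuming $F_k(n)=\lceil an+b\rceil$ for all $n$, the same limit argument forces $a=\alpha_k$, and now
\[
\delta_k(n)=\lceil \alpha_k\,n+b\rceil - \alpha_k\,n \in [b,b+1),
\]
since $\lceil y\rceil - y\in[0,1)$ for $y=\alpha_k\,n+b$. Again $\sup_n\delta_k(n)-\inf_n\delta_k(n)\le 1$, contradicting the very same oscillation estimates, which now supply the witness $n'$.

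I do not expect a genuine obstacle, since the proof is essentially an assembly of the discrepancy results established above. The only delicate point is that the confinement interval has length exactly $1$, so a strict inequality $\sup_n\delta_k-\inf_n\delta_k>1$ is genuinely needed rather than a mere $\ge$; this is precisely what the certified bounds of Theorem~\ref{t:delta34} (for $k=3,4$) and the divergence of Theorems~\ref{t:supdelta}--\ref{t:supdelta5} (for $k\ge 5$) provide. I would also remark that the irrationality of $\alpha_k$ is not actually required for the contradiction, although it is what forces the values $\delta_k(n)$ to densely fill the interval in the first place.
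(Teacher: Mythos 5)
Your proof is correct, but for $k=3,4$ it takes a genuinely different route from the paper's. Both arguments begin identically: the limit $F_k(n)/n \to \alpha_k$ (Theorem~\ref{t:limits}) forces $a=\alpha_k$, a floor or ceiling expression then confines $\delta_k$ to a half-open interval of length $1$, and for $k\ge 5$ both conclude by citing the unboundedness of $\delta_k$ (Theorems~\ref{t:supdelta} and~\ref{t:supdelta5}). The divergence is in the cases $k=3,4$: you invoke the certified two-sided bounds of Theorem~\ref{t:delta34} to get $\sup_n\delta_3(n) - \inf_n\delta_3(n) > 1.5$ and $\sup_n\delta_4(n)-\inf_n\delta_4(n) > 3$, whereas the paper uses four explicit values, $F_3(5)=4$, $F_3(8)=5$, $F_4(2)=1$, $F_4(6)=5$, to derive incompatible constraints on $b$ --- which in your language amounts to exhibiting two values of $\delta_k$ more than $1$ apart, e.g.\ $\delta_3(5)=4-5\alpha_3\approx 0.588$ versus $\delta_3(8)=5-8\alpha_3\approx -0.459$. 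The paper's variant is elementary: it needs only crude rational bounds on $\alpha_3,\alpha_4$ and four function evaluations, so the proposition stays independent of the heavy certified computations. Your variant is a cleaner, more uniform consolidation, but note that it genuinely requires the lower bound on the supremum and the upper bound on the infimum from Theorem~\ref{t:delta34}; the mere boundedness statements $|\delta_3(n)|<1$ and $|\delta_4(n)|<2$ (the bounds in the first version of the paper) would not suffice, since they control the oscillation only from above, and your confinement interval has length exactly $1$ --- a point you correctly flag yourself.
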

\begin{proof}
Suppose the existence of $a,b\in\mathbb{R}$ such that
$F_k(n) = \lfloor an{+}b \rfloor$ for all $n\ge 0$ or
$F_k(n) = \lceil an{+}b \rceil$ for all $n\ge 0$.
Since $\lim_{n\to\infty} \tfrac{1}{n} F_k = \alpha_k$,
the coefficient $a$ must be $\alpha_k$.
Then any of the above expressions would lead to an finite bound on
$|F_k(n)-\alpha_k n|$, which we proved impossible for $k\ge 5$.
Finally, for $k\in\{3,4\}$, we use some particular cases
to find a contradiction:
\begin{itemize}
\item $F_3(5)=4$ so $\lfloor 5\alpha_3 +b\rfloor=4$ would mean
$4\le 5\alpha_3 +b<5$ hence $b \ge 4-5\alpha_3 > 0.58$
\item $F_3(8)=5$ so $\lfloor 8\alpha_3 +b\rfloor=5$ would mean
$5\le 8\alpha_3 +b<6$ hence $b < 6-8\alpha_3 < 0.55$ which is
  incompatible with the previous constraint.
\item The same values of $F_3(5)$ and $F_3(8)$ also prevent the
existence of a real $b$ such that $F_3(n)=\lceil \alpha_3\,n+b \rceil$.
\item Similarly, $F_4(2)=1$ so $\lfloor 2\alpha_4 +b\rfloor=1$ would
  give $b < 0.56$ while $F_4(6)=5$ would give $b \ge 0.65$,
  contradiction. The same values also make it impossible to have
$F_4(n)=\lceil \alpha_4\,n+b\rceil$.
\end{itemize}
\end{proof}

\section{Conclusion}

The Figure~\ref{f:summary} provides a synthetic view of the results
concerning the functions~$F_k$.

\begin{figure}[ht]
\footnotesize
\renewcommand{\arraystretch}{1.2}
\begin{tabular}{|l|c|c|c|c|c|c|}

\hline
 & $F_1$ & $F_2$ & $F_3$ & $F_4$ & $F_5$ & $F_k$ for $k\ge 6$ \\
\hline
Hofstadter's name &   & $G$ & $H$ &   &   &  \\
\hline

Mean slope $\alpha_k$ & 0.5 & $\varphi{-}1$ & $\approx 0.682$ &
  $\approx 0.724$ & $\approx 0.754$ & root$(X^k{+}X{-}1)$ \\
\hline
Sup $|F_k(n)-\alpha_k n|$
  & $0.5$ & $\varphi{-}1$ & $<1$ & $<2$
  & $O(ln(n))$ & $O(n^a)$, $0{<}a{<}1$ \\
\hline

Exact expression  & $\lceil \frac{n}{2} \rceil
                     {=} \lfloor\frac{n+1}{2} \rfloor$
                  & $\lfloor \frac{n+1}{\varphi} \rfloor$
                  & \xmark & \xmark & \xmark & \xmark \\
\hline
Almost expression & & &
\parbox[c]{1.5cm}{\center $\lfloor \alpha_3 n \rfloor + \{0,1\}$}
& \parbox[c]{1.8cm}{\center $\lfloor \alpha_4 n \rfloor + \{-1,0,1,2\}$}
& \xmark & \xmark \\
\hline
Almost additive & \checkmark & \checkmark & \checkmark & \checkmark &
\xmark & \xmark \\

\hline
$\beta_k = \tfrac{1}{\alpha_k}$ is Pisot &
\checkmark & \checkmark & \checkmark & \checkmark &
$\checkmark^*$ & \xmark \\
\hline
\end{tabular}
\caption{Summary of results.}
\label{f:summary}
\end{figure}

Let us elaborate some more:
\begin{itemize}
\item
Here, by ``exact expression'', we mean exact integer part of an affine
function, and Proposition~\ref{p:noexact} disproved its existence for
$F_k$ when $k\ge 3$.
\item Similarly, an ``almost expression'' is here a finite
amount of possible differences from such an exact expression.
This cannot exists for $F_k$ when $k\ge 5$, since $|F_k(n)-\alpha_k n|$
is unbounded in this case.
\item
The above entry $\checkmark^*$ reminds that
$F_5$ is quite specific: even though $\beta_5$
is indeed a Pisot number (it is actually the smallest possible one,
i.e., the Plastic Ratio), the polynomial $Q_5 = X^5{-}X^4{-}1$
associated with $F_5$ is composed and admits two secondary zeros
of modulus 1, leading to an unbounded $|F_5-\alpha_5 n|$.
\end{itemize}

\section*{Acknowledgments}

I thank Wolfgang Steiner and Shuo Li and Jeffrey Shallit for their
helpful comments.

\appendix
\section{Coefficients of Proposition~\ref{Alincomb} seen as
  polynomial zeros}
\label{s:polycoeff}

\begin{proposition}
For any $k\ge 1$,
the coefficients $c_{k,i}$ seen in Proposition~\ref{Alincomb}
are the $k$
zeros of a polynomial in $\mathbb{Z}[X]$ of degree $k$, with leading
coefficient $k^k+(k{-}1)^{k-1}$ when $k>1$ (or $1$ when $k=1$) and
constant coefficient $(-1)$. This is also true for
the coefficients $\tilde{c}_{k,i}$ seen during the proof of
Proposition~\ref{Alincomb}.
\end{proposition}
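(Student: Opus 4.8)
The plan is to realize the $c_{k,i}$ and the $\tilde{c}_{k,i}$ as the complete root sets of suitable resultants, which are automatically polynomials with integer coefficients. Write $\phi(X)=X^k/(kX-(k{-}1))$, so that $c_{k,i}=\phi(r_{k,i})$, and introduce the bivariate polynomial $F(X,Y)=(kX-(k{-}1))Y-X^k\in\mathbb{Z}[X,Y]$, which has degree $k$ in $X$ with leading coefficient $-1$. I would set $S(Y)=\operatorname{Res}_X(Q_k(X),F(X,Y))$. Since $Q_k$ and $F$ have integer coefficients, $S\in\mathbb{Z}[Y]$; and since $Q_k$ is monic with simple roots $r_{k,0},\dots,r_{k,k-1}$, the product formula for the resultant gives $S(Y)=\prod_i F(r_{k,i},Y)$. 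For each $i$ one has $F(r_{k,i},Y)=(kr_{k,i}-(k{-}1))\,(Y-c_{k,i})$, the factorization being legitimate because $kr_{k,i}-(k{-}1)\neq 0$ (otherwise $r_{k,i}=(k{-}1)/k$ would be a rational zero of $Q_k$, which is impossible). Hence $S(Y)=\lambda_k\prod_i(Y-c_{k,i})$ with $\lambda_k=\prod_i(kr_{k,i}-(k{-}1))$, a degree-$k$ polynomial whose roots are exactly the $c_{k,i}$.

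The next step is to pin down the scalar $\lambda_k$ and the constant term of $S$. Writing $\prod_i(kr_{k,i}-(k{-}1))=k^k\prod_i(r_{k,i}-\tfrac{k-1}{k})=(-1)^k k^k Q_k(\tfrac{k-1}{k})$ and evaluating $Q_k(\tfrac{k-1}{k})=-(k{-}1)^{k-1}/k^k-1$, I obtain $\lambda_k=(-1)^{k+1}(k^k+(k{-}1)^{k-1})$, which is nonzero, so $\deg S=k$. For the constant term I would use Vieta on $Q_k$, namely $\prod_i r_{k,i}=(-1)^{k+1}$, so that $\prod_i c_{k,i}=(\prod_i r_{k,i})^k/\lambda_k=1/\lambda_k$ and therefore $S(0)=\lambda_k(-1)^k\prod_i c_{k,i}=(-1)^k$. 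Multiplying $S$ by $(-1)^{k+1}$ then yields a polynomial in $\mathbb{Z}[Y]$ of degree $k$ with leading coefficient $k^k+(k{-}1)^{k-1}$ and constant coefficient $-1$, exactly as claimed; the case $k=1$ is handled trivially, with $c_{1,0}=1$ and the polynomial $X-1$.

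For the $\tilde{c}_{k,i}$ I would run the same argument using $\tilde{c}_{k,i}=1/Q_k'(r_{k,i})$, the identity already established in the proof of Proposition~\ref{p:Alincomb}. Taking $T(Y)=\operatorname{Res}_X(Q_k(X),\,Y\,Q_k'(X)-1)=\prod_i Q_k'(r_{k,i})\cdot\prod_i(Y-\tilde{c}_{k,i})\in\mathbb{Z}[Y]$, the leading coefficient is $\prod_i Q_k'(r_{k,i})$. Since $Q_k'(X)=X^{k-2}(kX-(k{-}1))$ for $k\ge 2$, this factors as $(\prod_i r_{k,i})^{k-2}\,\lambda_k=\pm(k^k+(k{-}1)^{k-1})$ by the two products already computed, and a short sign computation as for $S$ gives $T(0)=(-1)^k$ again, so the same normalization by an appropriate sign produces leading coefficient $k^k+(k{-}1)^{k-1}$ and constant coefficient $-1$ (with $k=1$ again treated separately).

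The conceptual content — that a resultant of two integer polynomials is an integer polynomial whose roots are the images $\phi(r_{k,i})$ — is entirely standard, so the only genuine work is the sign and scalar bookkeeping needed to pin down the leading and constant coefficients exactly. I expect that bookkeeping, tracking the parities of $k$ together with the evaluation $Q_k(\tfrac{k-1}{k})$ and the products $\prod_i r_{k,i}$ and $\prod_i(kr_{k,i}-(k{-}1))$, to be the only delicate point; everything else follows from the product formula for resultants and the fact, recalled earlier, that the $r_{k,i}$ are simple and nonzero.
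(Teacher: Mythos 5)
Your proof is correct, but it takes a genuinely different route from the paper. The paper works with the inverses $(\tilde{c}_{k,i})^{-1}=Q_k'(r_{k,i})$: it forms the monic polynomial having these as roots, invokes the fundamental theorem of symmetric polynomials to show its coefficients are integers, then passes to the (negated) reciprocal polynomial; crucially, its leading coefficient is identified with $\pm\mathrm{Disc}(Q_k)$, whose value $\pm(k^k+(k-1)^{k-1})$ is quoted from Selmer (or obtained via a Sylvester-matrix computation). You instead encode $c_{k,i}=\phi(r_{k,i})$ directly through the resultant $S(Y)=\operatorname{Res}_X(Q_k(X),F(X,Y))$, whose integrality is automatic, and — this is the nicest point — you pin down the leading coefficient by the one-line evaluation $\lambda_k=\prod_i(kr_{k,i}-(k-1))=(-1)^k k^k\,Q_k\bigl(\tfrac{k-1}{k}\bigr)=(-1)^{k+1}(k^k+(k-1)^{k-1})$, which makes the argument fully self-contained where the paper leans on an external discriminant formula; your treatment of the $\tilde{c}_{k,i}$ via the factorization $Q_k'(X)=X^{k-2}(kX-(k-1))$ then reuses $\lambda_k$ at no extra cost. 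I checked your sign bookkeeping: since $\prod_i r_{k,i}=(-1)^{k+1}$, both $(\prod_i r_{k,i})^{k}$ and $(\prod_i r_{k,i})^{k-2}$ equal $1$, so $S(0)=T(0)=(-1)^k$ and multiplying by $(-1)^{k+1}$ simultaneously normalizes the leading coefficient to $k^k+(k-1)^{k-1}$ and the constant coefficient to $-1$, as you claim. What each approach buys: yours is more elementary and avoids the appeal to Selmer; the paper's, by routing through $\prod_{i\neq j}(r_{k,i}-r_{k,j})$, makes the connection with the discriminant and the Vandermonde determinant explicit, which it exploits as a side remark. (Note that both proofs ultimately rest on the same algebraic fact — integer symmetric functions of the roots of a monic integer polynomial are integers — the resultant merely packages it as a determinant.)
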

\begin{proof}
We first prove the case of coefficients $\tilde{c}_{k,i}$.
For that, we consider the monic polynomial
$T = \prod_i (X-(\tilde{c}_{k,i})^{-1})$. Up to their signs, the
coefficients of $T$ are elementary symmetric
polynomials on
$((\tilde{c}_{k,i})^{-1})_{0\le i < k} = (Q'_k(r_{k,i}))_{0 \le i < k}$
and hence symmetric polynomials (in $\mathbb{Z}[X]$)
on $(r_{k,i})_{0 \le i < k}$. By the
fundamental theorem of symmetric polynomials, they can hence be
expressed as polynomials in $\mathbb{Z}[X]$ of the elementary symmetric
polynomials on $(r_{k,i})_{0\le i< k}$.
The latter are known to be integers (actually $0$ or $\pm 1$),
since $r_{k,i}$ are zeros of $Q_k$, a monic polynomial in $\mathbb{Z}[X]$.
This allows to
conclude that the polynomial $T$
belongs to $\mathbb{Z}[X]$, and so does its reciprocal
polynomial, whose zeros are all the $\tilde{c}_{k,i}$. Actually,
we rather consider here the opposite of the reciprocal of $T$,
hence its constant coefficient $(-1)$. Its leading coefficient is
the opposite of the constant coefficient of $T$, i.e.,
\begin{align*}
(-1)^{k-1}\prod_i (\tilde{c}_{k,i})^{-1}
&= (-1)^{k-1}\prod_i Q'_k(r_{k,i}) \\
&= (-1)^{k-1}\prod\limits_{i \neq j} (r_{k,i} - r_{k,j}) \\
&= (-1)^{\tfrac{1}{2}(k-1)(k-2)} \prod\limits_{i<j}
  (r_{k,i}-r_{k,j})^2 \\
&= (-1)^{\tfrac{1}{2}(k-1)(k-2)} {\text\it Disc}(Q_k)
\end{align*}
where ${\text\it Disc}(Q_k)$ is the discriminant of the polynomial $Q_k$.
This discriminant of $Q_k$ is also equal to
the discriminant of $P_k$, since $P_k$ is the opposite of the
reciprocal of $Q_k$. This discriminant can be proved to be
\[
(-1)^{\tfrac{1}{2}(k-1)(k-2)}.(k^k+(k-1)^{k-1})
\]
for $k>1$ (or $1$ when $k=1$).
This is a consequence of the work of Selmer~\cite{Selmer56}, it can
also be re-obtained by computing the adequate Sylvester matrix.
As a side note, this discriminant is also the square of the
determinant of the Vandermonde matrix for the $r_{k,i}$.
Finally, the desired leading coefficient is indeed
$k^k+(k-1)^{k-1}$ when $k>1$ and $1$ otherwise.

Now, for the coefficients $c_{k,i}$ we proceed similarly, except that
this time $(c_{k,i})^{-1}$ can be formulated as the value of an
integer polynomial evaluated at $(r_{k,i})^{-1}$:
\[
(c_{k,i})^{-1} = \frac{k\,r_{k,i}-(k-1)}{r_{k,i}^k} =
k ((r_{k,i})^{-1})^{k-1}-(k-1)((r_{k,i})^{-1})^k.
\]
Since $(r_{k,i})^{-1}$ is a zero of the polynomial $P_k(X)=X^k+X-1$,
we can conclude in the same way as before: the coefficients $c_{k,i}$
are the zeros of a polynomial in $\mathbb{Z}[X]$ with constant
coefficient $(-1)$. Finally the leading coefficient is the same
as before, since
\[
  \prod_i c_{k,i} = \prod_i (\tilde{c}_{k,i}\,r_{k,i}^{2k-2})
  = \left(\prod_i \tilde{c}_{k,i}\right)
    \left(\prod_i r_{k,i}\right)^{2(k-1)}
  = \prod_i \tilde{c}_{k,i}.
\]
Indeed, the product of the zeros $r_{k,i}$ are $\pm Q_k(0) = \pm 1$.
\end{proof}

As said earlier, we lack a general expression directly giving the
polynomials whose zeros are
the $c_{k,i}$ coefficients, or the $\tilde{c}_{k,i}$ ones. For the
moment, we can just provide such a polynomial for the terms
$(k\,r_{k,i}{-}(k{-}1))^{-1}$ appearing in both these coefficients.
But the method below does not
seem to extend further. We start by expressing the polynomial
whose $k$ zeros are the numbers $(k\,r_{k,i}{-}(k{-}1))$:
\begin{align*}
\prod_{i=0}^{k-1}(X - (k\,r_{k,i} {-} (k{-}1)))
 &= k^k \prod_{i=0}^{k-1} (\tfrac{X+k-1}{k} - r_{k,i}) \\
 &= k^k\, Q_k(\tfrac{X+k-1}{k}) \\
 &= (X+k{-}1)^k - k (X+k{-}1)^{k-1} - k^k \\
 &= X^k + \left(\sum_{j=1}^{k-1} \binom{k}{j} (j-1) (k{-}1)^{k-1-j} X^j \right)
    - k^k - (k{-}1)^{k-1}.
\end{align*}
In particular, this polynomial has a null coefficient of degree 1.
Now, by considering the reciprocal of this polynomial, we obtain that
the following polynomial admits
the numbers $(k\,r_{k,i}{-}(k{-}1))^{-1}$ as its zeros:
\[
(k^k - (k{-}1)^{k-1})\,X^k -
\left(\sum_{j=1}^{k-1} \binom{k}{j}(k-j-1)(k{-}1)^{j-1} X^j \right) - 1.
\]

\bibliographystyle{plain}
\bibliography{hofstadter}

\begin{thebibliography}{10}

\bibitem{OEIS}
OEIS Foundation~Inc. (2024).
\newblock The on-line encyclopedia of integer sequences.
\newblock Published electronically at \url{https://oeis.org}.

\bibitem{acampo2003}
Norbert A'Campo.
\newblock A natural construction for the real numbers, 2003.

\bibitem{Conrad}
Keith Conrad.
\newblock Irreducibility of $x^n - x - 1$.
\newblock Published electronically at
  \url{https://kconrad.math.uconn.edu/blurbs/ringtheory/irredselmerpoly.pdf}.

\bibitem{Dilcher1993}
Karl Dilcher.
\newblock On a class of iterative recurrence relations.
\newblock In G.~E. Bergum, A.~N. Philippou, and A.~F. Horadam, editors, {\em
  Applications of Fibonacci Numbers}, volume~5, pages 143--158. Springer, 1993.

\bibitem{DowneyGriswold84}
Peter~J. Downey and Ralph~E. Griswold.
\newblock On a family of nested recursions.
\newblock {\em Fibonacci Quarterly}, 22(4):310--317, 1984.

\bibitem{DufresnoyPisot55}
J.~Dufresnoy and Ch. Pisot.
\newblock \'etude de certaines fonctions m\'eromorphes born\'ees sur le cercle
  unit\'e. {Application} \`a un ensemble ferm\'e d'entiers alg\'ebriques.
\newblock {\em Annales scientifiques de l'\'Ecole Normale Sup\'erieure}, 3e
  s{\'e}rie, 72(1):69--92, 1955.

\bibitem{EriksenAnderson12}
Larry Ericksen and Peter~G. Anderson.
\newblock Patterns in differences between rows in k-zeckendorf arrays.
\newblock {\em Fibonacci Quarterly}, 50(1):11--18, 2012.

\bibitem{Fine1986}
N.~J. Fine.
\newblock An iterative recurrence formula.
\newblock {\em J. Mathematical Analysis and Applications}, 113(1):185--187,
  1986.

\bibitem{Pytheas}
N.~Pytheas Fogg.
\newblock {\em Substitutions in dynamics, arithmetics and combinatorics},
  volume 1794 of {\em Lecture Notes in Mathematics}.
\newblock Springer-Verlag, Berlin, 2002.
\newblock Edited by V. Berth\'{e}, S. Ferenczi, C. Mauduit and A. Siegel.

\bibitem{GaultClint}
D.~Gault and M.~Clint.
\newblock "{C}uriouser and curiouser" said {A}lice. {F}urther reflections on an
  interesting recursive function.
\newblock {\em Int. J. Comput. Math.}, 26(1):35--43, 1988.

\bibitem{GEB}
Douglas~R. Hofstadter.
\newblock {\em G{\"{o}}del, {E}scher, {B}ach: {A}n {E}ternal {G}olden {B}raid}.
\newblock Basic Books, Inc, NY, 1979.

\bibitem{Kimberling95}
Clark Kimberling.
\newblock The {Zeckendorf} array equals the {Wythoff} array.
\newblock {\em Fibonacci Quarterly}, 33(1):3--8, 1995.

\bibitem{LetouzeyCoqDev}
Pierre Letouzey.
\newblock Coq proofs about {H}ofstadter's function {G}.
\newblock 2015--2025. \url{https://github.com/letouzey/hofstadter_g}.

\bibitem{LetouzeyShuoSteiner2024}
Pierre Letouzey, Shuo Li, and Wolfgang Steiner.
\newblock Pointwise order of generalized {H}ofstadter functions {G}, {H} and
  beyond.
\newblock 2024.
\newblock To appear. Preprint available at
  \url{https://arxiv.org/abs/2410.00529}.

\bibitem{MeekVanRees84}
D.S. Meek and G.H.J. Van~Rees.
\newblock The solution of an iterated recurrence.
\newblock {\em Fibonacci Quarterly}, 22(2):101--104, 1984.

\bibitem{Rauzy82}
G\'erard Rauzy.
\newblock Nombres alg\'ebriques et substitutions.
\newblock {\em Bulletin de la Soci\'et\'e Math\'ematique de France},
  110:147--178, 1982.

\bibitem{salem1963algebraic}
Rapha{\"e}l Salem and Lennart Carleson.
\newblock {\em Algebraic numbers and Fourier analysis}.
\newblock Heath Boston, 1963.

\bibitem{Selmer56}
Ernst Selmer.
\newblock On the irreducibility of certain trinomials.
\newblock {\em Mathematica Scandinavica}, 4(2):287--302, 1956.

\bibitem{Shallit2025}
Jeffrey Shallit.
\newblock The {N}arayana morphism and related words, 2025.
\newblock \href{https://arxiv.org/abs/2503.01026}{arXiv:2503.01026}.

\bibitem{Siegel1944}
Carl~Ludwig Siegel.
\newblock Algebraic integers whose conjugates lie in the unit circle.
\newblock {\em Duke Mathematical Journal}, 11:597--602, 1944.

\bibitem{Coq}
The Coq~Development Team.
\newblock The {C}oq proof assistant.
\newblock 1985--2024. \url{https://coq.inria.fr}.

\end{thebibliography}
\end{document}